\keywords{First-order theories, Presburger arithmetic, tractability, arithmetic theories, integer lattices, difference normal form.}
\crefname{thm}{Theorem}{Theorems}
\crefname{cor}{Corollary}{Corollaries}
\crefname{lem}{Lemma}{Lemmas}
\crefname{prop}{Proposition}{Propositions}
\crefname{rem}{Remark}{Remarks}
\crefname{exa}{Example}{Examples}
\crefname{defi}{Definition}{Definitions}
\crefname{conj}{Conjecture}{Conjectures}
\crefname{step}{Requirement}{Requirements}
\crefname{propC}{Proposition}{Propositions}
\algnewcommand\algorithmicswitch{\textbf{switch}}
\algnewcommand\algorithmiccase{\textbf{case}}
\algnewcommand\algorithmicassert{\texttt{assert}}
\algnewcommand\Assert[1]{\State \algorithmicassert(#1)}%
\def\BibTeX{{\rm B\kern-.05em{\sc i\kern-.025em b}\kern-.08em
    T\kern-.1667em\lower.7ex\hbox{E}\kern-.125emX}}
\newcommand{\Coloneqq}{\mathrel{\vcenter{\hbox{$:$}}{\coloneqq}}}
\newcommand{\smalldots}{{...}}
\newcommand{\NN}{\mathbb{N}}
\newcommand{\ZZ}{\mathbb{Z}}
\newcommand{\QQ}{\mathbb{Q}}
\newcommand{\RR}{\mathbb{R}}
\newcommand{\BB}{\mathbb{B}}
\newcommand{\VV}{\mathbb{V}}
\newcommand{\colonsub}{\mathbin{:\subseteq}}
\newcommand{\ar}{\textit{ar}}
\newcommand{\ccor}{\mathbin{\curlyvee}}
\newcommand{\cand}{\mathbin{\curlywedge}}
\newcommand{\cnot}{\mathbin{\smallfrown}}
\DeclareMathOperator{\Span}{span}
\DeclareMathOperator{\id}{id}
\DeclareMathOperator{\lcm}{lcm}
\DeclareMathOperator{\indic}{\mathds{1}}
\DeclareMathOperator{\powerset}{\mathcal{P}}
\DeclareMathOperator{\mcA}{\mathcal{A}}
\DeclareMathOperator{\mcB}{\mathcal{B}}
\DeclareMathOperator{\mcD}{\mathcal{D}}
\DeclareMathOperator{\mcF}{\mathcal{F}}
\DeclareMathOperator{\mcG}{\mathcal{G}}
\DeclareMathOperator{\mcM}{\mathcal{M}}
\DeclareMathOperator{\mcR}{\mathcal{R}}
\DeclareMathOperator{\mcT}{\mathcal{T}}
\DeclareMathOperator{\dom}{dom}
\DeclareMathOperator{\proj}{\pi}
\DeclareMathOperator{\unproj}{\proj^{\forall}}
\DeclareMathOperator{\dotproj}{{\dot\proj}}
\DeclareMathOperator{\dotunproj}{{\dotproj}^{\forall}}
\newcommand{\dotunprojrel}[1]{{\dotproj}^{\forall}_{#1}}
\newcommand{\unprojrel}[1]{{\proj}^{\forall}_{#1}}
\DeclareMathOperator{\len}{len}
\DeclareMathOperator{\seq}{seq}
\DeclareMathOperator{\un}{un}
\DeclareMathOperator{\dnf}{dfnf}
\DeclareMathOperator{\depth}{dep}
\DeclareMathOperator{\AS}{AS}
\DeclareMathOperator{\SL}{SL}
\DeclareMathOperator{\FO}{FO}
\DeclareMathOperator{\AC}{CQ}
\DeclarePairedDelimiter{\set}{\{}{\}}
\DeclarePairedDelimiter{\bracket}{[}{]}
\DeclareMathOperator{\Lin}{Lin} % linear part of an affine subspace
\DeclarePairedDelimiter{\Angle}{\langle}{\rangle}
\newcommand{\parone}{\mathbf{1}}
\newcommand{\stdrepr}[1]{\nu_{#1}}
\renewcommand{\vec}{\mathbf}
\renewcommand{\phi}{\varphi}
\newcommand\incircbin
\newcommand\@incircbin[2]
\newcommand{\oland}{\incircbin{\land}}
\DeclareMathOperator{\domain}{D}
\newcommand{\compl}[1]{#1^c}
\newcommand{\poly}{{\rm poly}}
\newcommand{\slice}[2]{#1{\upharpoonright}_{#2}}
\newcommand{\univ}[1]{\pi^{\forall}_{#1}}
\newcommand{\indicator}[2]{\indic_{#1}\bracket*{#2}}
\newcommand{\defeq}{\coloneqq}
\newcommand{\abs}[1]{{|#1|}}
\newcommand{\SDF}{{\rm SDF}}
\newcommand{\NFDNF}{{\rm DNF}_{+}}
\newcommand{\emp}{()}
\newcommand{\upscale}[1]{#1\raisebox{1.8pt}{$\uparrow$}}
\newcommand{\sem}[1]{{[\![#1]\!]}}
\newcommand\numberthis{\addtocounter{equation}{1}\tag{\theequation}}
\begin{document}

\title[On PTime Decidability of $k$-Negations Fragments of FO Theories]{On Polynomial-Time Decidability of $k$-Negations Fragments of First-Order Theories}

\author[C.~Haase]{Christoph Haase\lmcsorcid{0000-0002-5452-936X}}[a]
\author[A.~Mansutti]{Alessio Mansutti\lmcsorcid{0000-0002-1104-7299}}[b]
\author[A.~Pouly]{Amaury Pouly\lmcsorcid{0000-0002-2549-951X}}[c]

\address{Department of Computer Science, University of Oxford, UK}
\address{IMDEA Software Institute, Spain}
\address{CNRS, Université Paris Cité, IRIF, France}

\begin{abstract}
  This paper introduces a generic framework that provides sufficient conditions
  for guaranteeing polynomial-time decidability of fixed-negation fragments of
  first-order theories that adhere to certain fixed-parameter tractability
  requirements. It enables deciding sentences of such theories with arbitrary
  existential quantification, conjunction and a fixed number of negation symbols
  in polynomial time. 
  
  It was recently shown by Nguyen and Pak \emph{[SIAM J.\ Comput.  51(2): 1--31
  (2022)]} that an even more restricted such fragment of Presburger arithmetic,
  the first-order theory of the structure $(\ZZ,0,1,+,\leq)$, is NP-hard. In
  contrast, by application of our framework, we show that the fixed negation
  fragment of weak Presburger arithmetic, which drops the order relation~$\leq$
  from Presburger arithmetic in favor of the equality relation~$=$, is
  decidable in polynomial time.
  We give two further examples of instantiations of our framework, showing
  polynomial-time decidability of the fixed negation fragments of weak
  linear real arithmetic (the first-order theory of the structure
  $(\RR,0,1,+,=)$) and of the restriction of Presburger arithmetic in which each
  inequality contains at most one variable.
\end{abstract}

\maketitle

\section{Introduction}
\label{section:introduction}

It is well-known that even the simplest first-order theories are computationally
difficult to decide~\cite{Gradel91}. In particular, it follows from a result of
Stockmeyer that every theory with a non-trivial predicate such as equality is
PSPACE-hard to decide~\cite{Stockm76}. Even when restricting to existential
fragments or fragments with a fixed number of quantifier alternations, deciding
such fragments is NP-hard at best. There are two further kinds of restrictions
that may lead to tractability. First, restricting the Boolean structure of the
matrix of formulae in prenex form yields tractable fragments of, e.g., the
Boolean satisfiability problem. For instance, the Horn and XOR-fragments of
propositional logic are decidable in polynomial time, and this even applies to
quantified Boolean Horn formulae, see e.g.~\cite{Chen09}. Second, restricting
the number of variables can also lead to tractable fragments of a first-order
theory, especially for structures over infinite domains such as Presburger
arithmetic, the first-order theory of the structure $(\ZZ,0,1,+,\leq)$. While
the existential fragment of Presburger arithmetic is NP-complete in
general~\cite{BT76,VonZur78}, it becomes polynomial-time decidable when
additionally fixing the number of variables~\cite{Sca84}; this is a consequence
of polynomial-time decidability of integer programming in fixed
dimension~\cite{Lenstra83}. Already when moving to an $\exists \forall$
quantifier prefix, Presburger arithmetic becomes NP-hard~\cite{Schoning97}. On
the first sight, this result seems to preclude any possibility of further
restrictions that may lead to tractable fragments of Presburger arithmetic.
However, another tractable fragment was identified in the context of
investigating the complexity of the classical \emph{Frobenius problem}. Given
$a_1, \ldots, a_n\in \NN$, this problem asks to determine the largest integer
that cannot be obtained as a non-negative linear combination of the $a_i$, which
is called the \emph{Frobenius number}. For $n>0$ fixed, deciding whether the
Frobenius number exceeds a given threshold can be reduced to the so-called
\emph{short fragment} of Presburger arithmetic, a highly restricted fragment in
which everything, the number of atomic formulae and the number of variables (and
\emph{a fortiori} the number of quantifier alternations), is fixed --- except
for the coefficients of variables appearing in linear terms of atomic
inequalities. Kannan~\cite{Kannan90} showed that the
$\forall^k\exists^\ell$-fragment of short Presburger arithmetic is decidable in
polynomial time for all fixed $k,\ell$, which implies that the decision version
of the Frobenius problem is in polynomial time for fixed $n$. However, in a
recent breakthrough, Nguyen and Pak showed that there are fixed $k,\ell,m$ such
that the $\exists^k \forall^\ell \exists^m$-fragment of short Presburger
arithmetic is NP-hard, and by adding further (fixed) quantifier alternations the
logic climbs the polynomial hierarchy~\cite{NguyenP22}.

The main contribution of this paper is to develop an algorithmic framework that
enables us to show that \emph{fixed negation fragments} of certain first-order
theories are decidable in polynomial time. Formulae in this fragment are
generated by the following grammar, where $\Psi$ are atomic formulae of the
underlying first-order theory, and an arbitrary but a priori fixed number of
negation symbols is allowed to occur:
\[
\Phi \Coloneqq \exists x\, \Phi \mid \neg \Phi \mid \Phi \wedge \Phi \mid
\Psi \,.
\]
We give sufficient conditions for the fixed negation fragment of a first-order
theory to be decidable in polynomial time. We highlight that this fragment is more
permissive than the ``short fragment'' of Kannan, as it allows for an unbounded
number of quantified variables and an unbounded number of conjunctions. However,
it also implicitly fixes the number of quantifier alternations as well as the number
of disjunctions.

Our algorithmic framework is parametric on a concrete representation of the sets
definable within the first-order theory~$\mcT$ under consideration
%%  (e.g., for Presburger arithmetic, standard representations are % given by
%automata or semilinear sets~\cite{Buchi60,GinsburgS66}),
and only requires a sensible representation of solution sets for conjunctions of
atomic formulae.  From this representation, the framework guides us to the
definition of a companion structure~$\mcR$ for the theory~$\mcT$
%% (formally, there is a surjective homomorphism from $\mcR$ % to~$\mcT$),
in which function symbols and relations in~$\mcR$ are interpreted as reductions
from parametrized complexity theory, such as~UXP reductions, see
e.g.~\cite[Chapter~15]{DowneyF99}. By requiring mild conditions on the types of
reductions and parameters that the functions and relations in~$\mcR$ must obey,
we are able to give a general theorem for the tractability of the fixed negation
satisfiability and entailment problems for $\mcT$. One technical issue we show
how to overcome in a general way is how to treat negation, which is especially
challenging when the initial representation provided to the framework is not
closed under complementation. Our main source of inspiration here is the notion 
of the so-called
\emph{difference normal form} of propositional logic, a rather unorthodox normal
form introduced by Hausdorff~\cite[Ch.~1\S5]{Hausdorff1914}.

As one of the main application of our framework, we show that the fixed negation 
fragment of weak Presburger arithmetic (\emph{weak PA})
is polynomial-time decidable. Weak PA is the 
first-order theory of the structure
$(\ZZ,0,1,+,=)$, which is strictly less expressive than standard Presburger
arithmetic.
It was recently shown that unrestricted weak PA has the
same complexity as standard Presburger arithmetic~\cite{ChistikovHHM22}. 
In contrast, Bodirsky et al.\
 showed that the weak PA fragment of existential linear
Horn equations $\bigwedge_{i \in I} (A_i \cdot \vec x = b_i) \rightarrow (C_i
\cdot \vec x = d_i)$ over $\ZZ$ with $|I|$ unbounded can be decided in PTIME~\cite{BMMM18}. It
follows from the generic results in this paper that the quantified versions of
those formulae with the number of quantifier alternations and~$\abs I$ fixed is
also polynomial-time decidable. In fact, our framework not only allows for deciding
satisfiability and validity of fixed negation formulae of Weak PA in PTIME, but
also to compute a representation of the set of solutions of a given formula.
This is the best possible such result, since we can show that, for~$I$
unbounded, the $\exists\forall$ fragment of linear Horn equations in two
variables is NP-hard. 
\begin{prop}
  Deciding two-variables $\exists\forall$ weak PA Horn sentences is NP-hard.
\end{prop}

\begin{proof}
The $\exists\forall$ Horn sentences of weak PA are of the form $\exists x
\forall y \bigwedge_{i=1}^k \psi_i(x,y)$, where each $\psi_i$ is a Horn clause,
i.e., a disjunction of literals in which at most one literal occurs positively.
NP-hardness for deciding these sentences follows 
by a straightforward reduction from the problem of deciding a
univariate system of non-congruences $\bigwedge_{i=1}^k x \not\equiv r_i \pmod
{m_i}$, where $m_i \geq 2$ and $r_i \in [0,m_i-1]$ for every $i \in [1,k]$. This
problem is NP-hard~\cite[Theorem~5.5.7]{BachShallit96}. 
For the reduction, simply apply the following equivalence: for every $x \in \ZZ$,
\[ 
  \bigwedge\nolimits_{i=1}^k x \not\equiv r_i
  \pmod {m_i} \ \ \iff \ \ \forall y : 
  \bigwedge\nolimits_{i=1}^k \lnot \big(x - r_i = m_i \cdot y \big)\,.
  \qedhere
\]
% To see correctness of this equivalence, 
% consider first $x \in \ZZ$ satisfying the left-hand side and pick any $y \in \ZZ$. We
% have $x - r_i \neq m_i \cdot y$ for every $i \in [1,k]$. Thus, every antecedent of
% the implications in $\bigwedge_{i=1}^k \big(x - r_i = m_i \cdot y \rightarrow y
% = 3 \cdot x + 1\big)$ is false, showing the right-hand side. For the other
% direction, consider an $x \in \ZZ$ satisfying the right-hand side. It suffices
% to show that, for every $i \in [1,k]$ and $y \in \ZZ$, if $x - r_i = m_i \cdot
% y$ then $y \neq 3 \cdot x + 1$. This implies that, for the right-hand side to
% hold, it must be the case that $x - r_i \neq m_i \cdot y$ for every $i \in
% [1,k]$ and $y \in \ZZ$; proving the left-hand side. \emph{Ad absurdum}, suppose
% that $x - r_i = m_i \cdot y$ and $y = 3 \cdot x + 1$ hold. Then, $x = -
% \frac{r_i+m_i}{3 \cdot m_i - 1}$. However, $0 < \frac{r_i+m_i}{3 \cdot m_i - 1}
% \leq \frac{2 \cdot m_i - 1}{3 \cdot m_i - 1} < 1$, as $m_i \geq 2$ and $r_i \in
% [0,m_i-1]$, contradicting that $x$ is an integer.
\end{proof}
An extended abstract of this paper appeared in the proceedings of the 48th 
International Symposium on Mathematical Foundations of Computer Science (MFCS 2023)~\cite{HMP23}.

\subsection{Structure of this paper}
The goals of this paper are twofold, and consequently the paper consists 
of two parts. In the first part, after recalling and introducing some basic 
definitions concerning first-order logic in \Cref{section:basic-def} 
as well as concepts underlying representations of objects and concepts from 
parametrized complexity in \Cref{sec:represent-and-complex}, we present our general algorithmic
framework in \Cref{section:fo-framework}. This framework is parametrized by a 
first-order theory $\mcT$, and we develop sufficient conditions on $\mcT$ 
for its $k$-negations fragment to be decidable in polynomial time, for every fixed $k \geq 0$.
\Cref{section:fo-framework} is intended to enable a reader to easily apply the
framework to determine whether
a given first-order theory has a polynomial-time decidable $k$-negations fragment. For that reason,
all proofs are relegated to \Cref{section:diffnf,appendix:sec4-new}, and the section's concepts are illustrated
using a simple fragment of Presburger arithmetic with one variable per inequality 
as a running example. 

In the second part of the paper, 
\Cref{sec:affine_subspaces,sec:shifted_lattices}, 
we instantiate
the framework to weak linear real arithmetic and weak Presburger arithmetic,
proving that their $k$-negations fragments are decidable in polynomial-time.

\subsection{Related work}
There is a long history of research on identifying syntactic fragments of
first-order logic with the goal of obtaining decidable fragments, possibly of
low complexity. Specifically, restrictions on the use of negation have been
widely studied. For instance, Kozen~\cite{Kozen81} showed that deciding positive
sentences in which no negation symbol occurs is NP-complete.
Voronkov~\cite{Voronkov99} generalized this fragment and showed that the
ground-negative fragment of first-order logic, where any negated atomic formula
is required to be a ground term, is $\Pi_2^\text{P}$-complete. B\'ar\'any, Ten
Cate and Segoufin~\cite{BCS15} identified \emph{guarded-negation first-order
logic} which requires all occurrences of negation to be of the form $\alpha
\wedge \neg \phi$ such that $\alpha$ is an atomic formula containing all free
variables of $\phi$. Guarded-negation first-order logic is 2EXPTIME-complete and
unified various previously identified decidable fragments of first-order logic
such as the \emph{unary negation}~\cite{TS13} and \emph{guarded
fragment}~\cite{And98} of first-order logic. More recent work explores
restrictions on features other than negation. Jonsson, Lagerkvist and
Osipov~\cite{JonssonLO24}, investigated constraint satisfaction problems (CSPs)
over two signatures, $\mathcal{A}$ and $\mathcal{B}$, where at most $k$
constraints from $\mathcal{B}$ are allowed. Under further assumptions on
the CSPs over $\mathcal{A}$, and on the definability of $\mathcal{B}$ within $\mathcal{A}$, 
they showed that these CSPs can be solved in polynomial time for fixed $k$.

% \section{Preliminaries}

% We denote by $\NN$ the natural numbers, $\ZZ$ the integers, $\QQ$ the rational numbers and $\BB=\set{\bot,\top}$
% the booleans.

% \todo[inline]{AM: is $\powerset_k$ used?}

% Given a set $A$, we write $\powerset(A)$ for the power set of $A$, $\powersetfin(A)$ for the set of all finite subsets of $A$
% and $\powerset_k(A)$ for the set of subsets of $A$ of cardinality at most $k$.
% Let $\domain\subseteq\powerset(A)$ be a collection of sets that is stable under intersection
% and contains $\varnothing$. For any collection of sets $\mathcal{X}\subseteq\domain$, we denote by $\closure{\mathcal{X}}\subseteq\domain$ the closure
% of $\mathcal{X}$ under intersection: this is the smallest set such that if $X,Y\in \closure{\mathcal{X}}$
% then $X\cap Y\in \closure{\mathcal{X}}$.
% Similarly, we write $\clunion(\mathcal{X})\subseteq\powerset(A)$ for the closure of $\mathcal{X}$ under finite unions, and write $\Diff(\mathcal{X})$ 
% for the closure of $\mathcal{X}$ under set difference.
% Note that if $\mathcal{X}$ is finite then $\closure{\mathcal{X}}$, $\clunion{\mathcal{X}}$ and $\Diff(\mathcal{X})$ are also finite.

% By $f:\subseteq X\to Y$ we denote a \emph{partial
% function} from $X$ to $Y$, \emph{i.e.} a function from a subset of $X$, called the \emph{domain}
% of $f$ and denoted $\dom(f)$, to $Y$. When $f$ is total, \emph{i.e.} $\dom(f)=X$, we simply write $f \colon X\to Y$
% as usual.

\section{Preliminaries}
\label{section:basic-def}

This section focuses on notation and simple definitions that might be
non-standard to some readers. We assume familiarity with basic concepts from
logic and abstract algebra. 
%\Cref{appendix:standard-notation} provide these basic concepts, as well as
%standard notation. 

\subsection{Sets and functions}
We write $\seq(A)$ for the set of all finite tuples over a set $A$, and denote
by $\emp$ the empty tuple. This definition corresponds to the standard notion of
Kleene star $A^*$ of a set $A$. The discrepancy in notation is introduced to
avoid writing $(\Sigma^*)^*$ for the domain of all tuples of finite words over
an alphabet~$\Sigma$, as in formal language theory the Kleene star comes
equipped with the axiom $(\Sigma^*)^* = \Sigma^*$. We denote this domain by
$\seq(\Sigma^*)$. 

We write $f \colonsub X\to Y$ (resp.~$f \colon X \to Y$) to denote a
\emph{partial (resp.~total) function} from $X$ to~$Y$. The domain of $f$ is
denoted by $\dom(f)$. We write $\id_A \colon A \to A$ for the identity function
on~$A$. Given $f \colonsub A \to B$, $g \colonsub B \to C$ and $h \colonsub D
\to E$, we denote by $(g \circ f) \colonsub A \to C$ and $(f \times h) \colonsub
A \times D \to B \times E$ the \emph{composition} and the \emph{Cartesian
product} of functions.

\label{section:structures-and-comp}
\label{sec:structures}
\subsection{Structures with indexed families of functions}
We consider a generalization of the traditional definition of structure from
universal algebra that accommodates for a potentially infinite number of
functions. As usual, a \emph{structure} $\mcA=(A,\sigma,I)$ consists of a
\emph{domain} $A$ (a~set), a \emph{signature} $\sigma$, and an
\emph{interpretation function} $I$. In this paper, the signature is a quadruple
$\sigma=(\mcF,\mcG,\mcR,\ar)$ containing not only a set of \emph{function
symbols} $\mcF$, a set of \emph{relation symbols} $\mcR$, and the \emph{arity
function} $\ar\colon \mcF\uplus \mcR\uplus\mcG\to\NN$, but also a set of
\emph{(indexed) families of function symbols} $\mcG$. Each element in the finite
set $\mcG$ is a pair $(g,X)$ where $g$ is a function symbol and $X$ is a
countable set of indices. The interpretation function~$I$ associates to every
$f\in \mcF$ a map $f^{\mcA}\colon A^{\ar(f)}\to A$, to every $(g,X)\in\mcG$ a
map $g^{\mcA}\colon X\times A^{\ar((g,X))}\to A$, and to every $R\in\mcR$ a
relation $R^{\mcA} \subseteq A^{\ar(R)}$ which we often view as a function
$R^{\mcA} \colon A^{\ar(R)} \to \{\top,\bot\}$.

\begin{exa}\label{example:family-of-functions}
  Consider the structure $\mcA=(\ZZ,\sigma,I)$ in which the signature~$\sigma$
  contains a single family of functions $(\mathrm{mul},\NN)$ of arity one, and
  the interpretation function~$I$ associates to $\mathrm{mul}$ the map
  $\mathrm{mul}^{\mcA}(n,x) = n \cdot x$ for all $n \in \NN$ and $x \in \ZZ$.
  Here, the family of functions $(\mathrm{mul},\NN)$ uniformly defines
  multiplication by a non-negative integer constant $n$. \hfill$\diamond$
\end{exa}

The standard notions  (see, e.g.,~\cite{BurrisSankappanavar1981}) of
\emph{homomorphism, embedding and isomorphism of structures}, as well as the
notions of \emph{congruence for a structure} and \emph{quotient structure}
extend naturally to structures having families of functions. For instance, a
\emph{homomorphism} from~$\mcA = (A,\sigma,I)$ into ${\mcB = (B,\sigma,J)}$ is a
map ${h \colon A \to B}$ that \emph{preserves} all functions, families of
functions and relations; so in particular given $(g,X) \in \mcG$, the map $h$
satisfies $g^{\mcB}(x,h(a_1),\dots,h(a_{\ar(g)})) =
h(g^{\mcA}(x,a_1,\dots,a_{\ar(g)}))$ for every $x \in X$ and
$a_1,\dots,a_{\ar(g)} \in A$.
%\Cref{appendix:standard-notation} provides full definitions of these classical
%notions.

We denote structures in calligraphic letters~$\mcA,\mcB,\ldots$ and their
domains in capital letters $A,B,\ldots$\,. When the arity function $\ar$ and the
interpretation~$I$ are clear from the context, we
write~$(A,f_1^{\mcA},\smalldots,f_j^{\mcA},(g_1^{\mcA},X_1),\smalldots,(g_\ell^{\mcA},X_\ell),R_1^{\mcA},\smalldots,R_k^{\mcA})$
for $\mcA = (A,\sigma,I)$ with $\sigma =
(\{f_1,\smalldots,f_j\},\{(g_1,X_1),\smalldots,(g_\ell,X_\ell)\},\{R_1,\smalldots,R_k\},\ar)$,
and often drop the superscript $\mcA$. For instance, the structure
from~\Cref{example:family-of-functions} can be denoted as
$(\ZZ,(\mathrm{mul},\NN))$.

% \am{I don't think this is used anymore.}
% The structure $\mcA \!=\! (A,\sigma,I)$ is said to be \emph{an algebra of sets}
% whenever $A$ is a family of sets and the symbols in $\sigma$ are taken from
% $\{\emptyset,\cup,\cap,\setminus,\subseteq,=\}$ and interpreted as the canonical
% operations on sets.
% % In particular, $\emptyset$ is (interpreted as) a constant returning the empty
% % set, $\cup$, $\cap$ and $\setminus$ are binary function for union,
% % intersection and set difference, respectively, and $\subseteq$ and $=$ are the
% % binary relation for set inclusion and equality, respectively. 
% Since $\mcA$ is a structure, the set $A$ is closed under all set operations
% in~$\sigma$. 

% \begin{example}\label{ex:Z_with_nul_n} To illustrate the notion of families of
%   functions, consider the structure $\mcA=(\ZZ,(\mathrm{mul},\NN))$. Here, the
%   family of functions $(\mathrm{mul},\NN)$, interpreted as
%   $\mathrm{mul}^{\mcA}(n,x)= n \cdot x$ for all $n\in\NN$ and $x\in\ZZ$,
%   uniformly define multiplication by a non-negative constant~$n$.
%   \end{example}

\subsection{First-order theories (finite tuples semantics)}
\label{sec:fo_structures}
The first-order (FO) language of the signature $\sigma = (\mcF,\mcG,\mcR,\ar)$
is the set of all formulae $\Phi,\Psi,\dots$ built from the grammar 
\begin{equation}
  \label{eq:first-order-language}
  \begin{aligned}
    \Phi,\Psi &\Coloneqq\ r(t_1,\dots,t_{\ar(r)}) \mid \lnot
    \Phi \mid \Phi \land \Psi \mid \exists x . \Phi\\[3pt]
    t &\Coloneqq\ x \mid f(t_1,\dots,t_{\ar(f)}) \mid
    g(i,t_1,\dots,t_{\ar(g)}),
  \end{aligned}
\end{equation}
where $x \in \VV$ is a first-order variable, $r \in \mcR$, $f \in \mcF$, $(g,X)
\in \mcG$ and $i \in X$ (more precisely, $i$ belongs to a representation of $X$,
more details are given in~\Cref{sec:represent-and-complex} below). Lexemes of the form
$r(t_1,\dots,t_{\ar(r)})$ are the \emph{atomic formulae} of the language.
Throughout this paper, we implicitly assume an order on the variables in~$\VV$,
and write $x_j$ for the $j$-th variable (indexed from~$1$). We write
$\AC(\sigma)$ for the set of all~\emph{conjunctive queries} 
of the first-order language of~$\sigma$, 
that is the set of all (quantifier-free) conjunctions of atomic formulae in the language.

Consider a structure~$\mcA = (A,\sigma,I)$. Given an atomic formula
$r(t_1,\dots,t_{\ar(r)})$ from the grammar in~\Cref{eq:first-order-language}
having $x_n$ as the largest appearing variable, we write
$\sem{r(t_1,\smalldots,t_{\ar(r)})}_{\mcA} \subseteq A^n$ for the set of
$n$-tuples, corresponding to values of the first $n$ variables, that makes the
formula~$r(t_1,\smalldots,t_{\ar(r)})$ true under the given interpretation $I$.
Furthermore, let us define $\vec{I} \coloneqq \{ (i_1,\smalldots,i_k) \in
\seq(\NN) : i_1,\smalldots,i_k \text{ all distinct}\}$. We denote by~$\FO(\mcA)$
the structure all \emph{first-order sets definable in}~$\mcA$, which is the
structure  
\[
  \FO(\mcA)
  \coloneqq
  (\sem{\mcA}_{\FO},\bot,\top,\lor,\land,
  -,(\proj,\vec{I}),(\unproj,\vec{I}),\leq),
  \ \text{where}
\] 
\begin{enumerate}
  \item $\sem{\mcA}_{\FO}$ is the least set containing
  $\sem{r(t_1,\smalldots,t_{\ar(r)})}_{\mcA}$, for each atomic formula
  $r(t_1,\smalldots,t_{\ar(r)})$, and that is closed under the functions $\bot,
  \top, \lor, \land,$ $-, (\proj,\vec{I})$ and $(\unproj,\vec{I})$, defined
  below.
  \item The functions $\bot$ and $\top$ are interpreted as $\emptyset$ and
  $\{\emp\}$, respectively.
  \item Given $X \subseteq A^n$, $Y \subseteq A^m$ and $\vec i = (i_1,\dots,i_k)
  \in \vec I$, and defining $M \coloneqq \max(n,m)$,
    \begin{align*}
      X \lor Y &\coloneqq \{(a_1,\smalldots,a_M) : (a_1,\smalldots,a_n) \in X \text{ or } (a_1,\smalldots,a_m) \in Y \},\\[3pt]
      X \land Y &\coloneqq \{(a_1,\smalldots,a_M) : (a_1,\smalldots,a_n) \in X \text{ and } (a_1,\dots,a_m) \in Y \},\\[3pt]
      X - Y &\coloneqq \{(a_1,\smalldots,a_M) : (a_1,\smalldots,a_n) \in X \text{ and } (a_1,\smalldots,a_m) \not\in Y \},\\[3pt] 
      \proj (\vec i, X) &\coloneqq \{ \gamma \in A^{n} : \text{there is } \vec a \in A^k \text{ s.t. } \gamma[\vec i \gets \vec a] \in X \},\\[3pt]
      \unproj (\vec i, X) &\coloneqq \{ \gamma \in A^{n} : \text{for every } \vec a \in A^k,  \gamma[\vec i \gets \vec a] \in X \},\\[3pt]
      X \leq Y &\text{ if and only if } X \times A^m \subseteq Y \times A^n,
    \end{align*}
    where $\gamma[\vec i \gets \vec a]$ is the tuple obtained from $\gamma \in
    A^n$ by replacing its $i_j$-th component with the $j$-th component of $\vec a$, for every $j \in [1,\min(k,n)]$.
\end{enumerate}
The semantics $\sem{.}_{\mcA}$ of the FO language of $\sigma$ is extended to
non-atomic formulae via~$\FO(\mcA)$. As usual, $\sem{\lnot \Phi}_{\mcA}
\coloneqq \top - \sem{\Phi}_{\mcA}$, $\sem{\Phi \land \Psi}_{\mcA} \coloneqq
\sem{\Phi}_{\mcA} \land \sem{\Phi}_{\mcA}$, and $\sem{\exists x_i . \Phi}_{\mcA}
\coloneqq \proj((i),\sem{\Phi}_{\mcA})$. 
We omit the subscript $\mcA$ from $\sem{\cdot}_{\mcA}$ when it is clear from context, writing simply $\sem{\cdot}$.
We remark that $\FO(\mcA)$ contains
operators whose syntactic counterpart is absent from the FO language of
$\sigma$, such as the universal projection $\unproj$. This is done for
algorithmic purposes, as the framework we introduce in
\Cref{section:fo-framework} treats these operators as first-class citizens.

\subsection{Fixed negations fragments}
Let $k \in \NN$ be fixed. The \emph{$k$-negations fragment of the FO language}
of a signature $\sigma$ is the set of all formulae having at most $k$
negations~$\lnot$. Note that, following the grammar provided
in~\Cref{eq:first-order-language}, this restriction also bounds the number of
disjunctions and alternations between existential and universal quantifiers that
formulae can have. Given a structure $\mcA = (A,\sigma,I)$, we study the
following problem:
\vspace{5pt}
\begin{center}
\begin{tabular}{rl}
$k$ \textit{negations satisfiability problem}: \hspace{-5pt} & Given a formula
  $\Phi$ with at most $k$ negations,\\ 
  & decide whether $\sem{\Phi} \neq \emptyset$.
\end{tabular}
\end{center}
% \item $k$ \textsc{negations entailment}: given formulae $\Phi$ and $\Psi$ with
% at most $k$ negations, decide $\sem{\Phi}_{\mcA} \subseteq \sem{\Psi}_{\mcA}$.
% \end{itemize}

\section{Representations and parametrised complexity of signatures}
\label{sec:represent-and-complex}

Per se, a structure $\mcA$ cannot be analysed algorithmically, in particular
because the elements of $A$ do not have a notion of size. A standard way to
resolve this issue is defining computability via the notion of representations
(as it is done for instance in computable analysis~\cite{Wei00}). 

\subsection{Representations}
\label{sec:representations}
A \emph{representation} for a set $A$ is a surjective partial map $\rho
\colonsub \Sigma^* \to A$, where $\Sigma$ is a finite alphabet. Words $w \in
\Sigma^*$ are naturally equipped with a notion of \emph{size}, that is their
length, denoted by $\abs{w}$. Observe that not all words are valid
representations for elements of~$A$ ($\rho$ is partial) and each element from
$A$ may be represented in several ways ($\rho$ is not assumed to be injective).
We write $(\approx_{\rho}) \subseteq \Sigma^* \times \Sigma^*$ for the
equivalence relation $\{(w_1,w_2) : w_1,w_2 \in \dom(\rho) \text{ and }
\rho(w_1) = \rho(w_2)\}$ and define ${h_{\rho} \colon \dom(\rho)/_{\approx_\rho}
\to A}$ to be the bijection satisfying $h_{\rho}([w]_{\approx_\rho}) = \rho(w)$,
for every $w \in \dom(\rho)$. Here, $\dom(\rho)/_{\approx_\rho}$ is the set of
all equivalence classes $[w]_{\approx_\rho}$ of words $w \in \dom(\rho)$.

\begin{exa}\label{ex:twoscomplement}
  The two's complement least significant digit first representation
  of~$\ZZ$ is given by the map $\rho \colonsub \{0,1\}^* \to \ZZ$
  mapping every non-empty word of binary digits~${{d_0}\dots{d_m} \in
  \{0,1\}^{m+1}}$ to the integer $-d_m \cdot 2^m + \sum_{i=0}^{m-1} d_i \cdot
  2^i$. This representation is not defined on the empty word. A property of this
  representation is that padding each word to the right by repeating its most
  significant digit does not change the encoded number, that is,
  $({d_0}\dots{d_m}) \approx_\rho ({d_0}\dots{d_m}{d_m} \dots {d_m})$, where
  $d_m$ is repeated an arbitrary number of times.\hfill$\diamond$
\end{exa}

% Let $\Sigma$ and $\Pi$ be two finite alphabets, and $D \subseteq \Sigma^*$. A
% function $G \colon D^n \to \Pi^*$ is said to be \emph{computable}
% (w.r.t.~$\Sigma$) whenever there is an effective procedure that takes as
% inputs $n$ words $d_1,\dots,d_n \in \Sigma^*$ and returns $G(d_1,\dots,d_n)$
% whenever $d_1,\dots,d_n \in D$; otherwise the procedure returns $\bot$ if some
% input belongs to $\Sigma^* \setminus D$, where $\bot \not \in \Pi^*$.

It is often more practical to represent elements of $A$ by objects that are more
sophisticated than words in~$\Sigma^*$, such as tuples, automata, graphs, etc.
Taking these representations does not change the notion of computability or
complexity, because they can be easily encoded as words (over a bigger alphabet,
if necessary). In our setting, of particular interest are representations as
tuples of words. The notion of size for words trivially extends to tuples:
$\abs{(w_1,\dots,w_n)} \coloneqq n + \sum_{i=1}^n |w_i|$. Given representations
$\rho \colonsub \Sigma^* \to A$ and $\rho' \colonsub \Pi^* \to A'$, we rely on
the following \emph{operations on representations}:
\begin{itemize}
    \item The Cartesian product $\rho \times \rho'$ of representations, defined
    as in~\Cref{section:basic-def}.
    \item The representation $\seq(\rho) \colonsub \seq(\Sigma^*) \to \seq(A) $
    that, for every $n \in \NN$, given a tuple $(w_1,\dots,w_n) \in
    \dom(\rho)^n$ returns $(\rho(w_1),\dots,\rho(w_n))$. 
    % \item Given a map $\oplus \colon S \times S \to S$ with $S \supseteq A$, the
    % representation $\fold{\oplus}(\rho)$ is recursively defined as
    % $\fold{\oplus}(\rho)(\emp) \coloneqq \emptyset$, and
    % $\fold{\oplus}(\rho)((w_1,\dots,w_n)) \coloneqq \rho(w_1) \oplus
    % \fold{\oplus}(\rho)((w_2,\dots,w_n))$.
\end{itemize}
We also require representations for basic objects such as~$\NN$, $\ZZ$ and so
on. Specifically, we assume to have \emph{canonical
representations}~$\stdrepr{X}$ for the following countable domains $X$:
\begin{itemize}
  \item $X = \NN$ or $X = \ZZ$, so that~$\stdrepr{X}$ is a representation of
  $\NN$ or $\ZZ$, respectively. We assume this representation to be any standard
  binary encoding of natural numbers or integers that allows arithmetic operations
  such as addition, multiplication and integer division to be implemented in
  polynomial time, as for instance the representation in~\Cref{ex:twoscomplement}.

  \item $X$ is any finite set, e.g., we assume to have a representation
  $\stdrepr{\BB}$ for the Booleans $\BB = \{\top,\bot\}$. Note that, since $X$
  is finite, operations on this set are constant time.

  \item $X = \Sigma^*$ where $\Sigma$ is any finite alphabet. In this case,
    $\stdrepr{\Sigma^*}\coloneqq\id_{\Sigma^*}$.
\end{itemize}
For a canonical representation $\stdrepr{X}$ and $n \in \NN$, we
write $\stdrepr{X^n}$ for the Cartesian product $(\stdrepr{X})^n$.%

\subsection{Implementations and computability}
\label{sec:represented_structures}
Let $\rho \colonsub \Pi^* \to A$ and $\rho_1,\dots,\rho_n$ be representations,
with $\rho_i \colonsub \Sigma_i^* \to A_i$. A~function $f \colon A_1 \times
\dots \times A_n \to A$ is said to be $(\rho_1 \times \dots \times \rho_n,
\rho)$-\emph{computable} if there is a function ${F \colon \dom(\rho_1) \times
\dots \times \dom(\rho_n) \to \dom(\rho)}$ that is computable (by a Turing
machine) and satisfies $\rho(F(w_1,\dots,w_n)) =
f(\rho_1(w_1),\dots,\rho_n(w_n))$ for all $w_i \in \dom(\rho_i)$, $i \in [1,n]$.
The function $F$ is said to be a $(\rho_1 \times {\dots} \times \rho_n,
\rho)$-\emph{implementation} of $f$. For simplicity, we do not mention the
representations of a computable function when it operates on canonical types:
for sets $A,A_1,\dots,A_n$ admitting canonical representations, a function ${f
\colon A_1 \times \dots \times A_n \to A}$ is said to be \emph{computable}
whenever it is $(\stdrepr{A_1} \times {\dots} \times \stdrepr{A_n},
\stdrepr{A})$-\emph{computable} (the $\stdrepr{A_i}$ and $\stdrepr{A}$ are the
canonical representations of $A_i$ and $A$).

\begin{exa}\label{example:ZZ-canonical}
  The addition function $+\colon \ZZ \times \ZZ \to\ZZ$ is $(\stdrepr{\ZZ}
  \times \stdrepr{\ZZ},\stdrepr{\ZZ})$-computable. Since $\stdrepr{\ZZ}$ is a
  canonical representation, we simply say that $+$ is \emph{computable}. This is
  the standard notion of computability over~$\ZZ$, with respect to a binary encoding of
  integers.\hfill$\diamond$
\end{exa}

Let $\mcA = (A,\sigma,I)$ be a structure and $\rho \colonsub \Sigma^* {\to} A$
be a representation. Let $\mcM \coloneqq (\dom(\rho),\sigma,J)$ be a structure
where the interpretation $J$ associates computable functions to each function,
family of functions and relations in~$\sigma$, and makes $\approx_{\rho}$ a
congruence for $\mcM$. Then, $\mcM$ is said to be a $\rho$\emph{-implementation}
of $\mcA$ whenever $\rho$ is a homomorphism between $\mcM$ and $\mcA$. We
highlight the fact that, compared to a standard homomorphism between structures,
an implementation is always surjective (since $\rho$ is surjective) and
forces~$J$ to give an interpretation to functions and relations in~$\sigma$ in
terms of computable functions.

\begin{exa}\label{example:ZZ-implementation}
  The structure $(\dom(\stdrepr{\ZZ}),+)$ is a $\stdrepr{\ZZ}$-implementation of
  $(\ZZ,+)$. For a further example, consider the structure
  $\mcA = (L,\cup,\cap,(\cdot)^c)$ where $L$ is the set of all regular languages over a
  fixed finite alphabet $\Sigma$, and $\cup$, $\cap$, and $(\cdot)^c$ are the
  canonical operations of union, intersection and complementation of languages,
  respectively. As a representation, one can consider the map $\rho$ 
  taking as input a deterministic finite automaton (DFA) over~$\Sigma$,
  and returning the language the automaton accepts.
  We obtain the structure $\mcM = (\dom(\rho),\cup,\cap,(\cdot)^c)$ 
  in which the functions~$\cup$, $\cap$, and $(\cdot)^c$ 
  can be implemented by Turing machines manipulating DFAs; 
  and $\mcM$ is a $\rho$-implementation of $\mcA$.\hfill$\diamond$
\end{exa}

\subsection{Parametrised complexity of signatures}\label{sec:para_complexity}
The framework we define in the next section requires the introduction of a
notion of parametrised complexity for the signature of a structure (which we
call a \emph{UXP signature}) which we now formulate. First, let us recall the
standard notion of UXP reduction from parametrised complexity
theory~\cite[Chapter~15]{DowneyF99}. Let $\Gamma$ and $\Pi$ be two finite
alphabets, and $D \subseteq \Gamma^*$. A \emph{parameter function} is a map
$\eta \colon \Gamma^* \to \NN$ such that $\eta(w) \geq 1$ for every $w \in
\Gamma^*$. 
%
% A computable function $F : D \to \Pi^*$ is said to be a \emph{fixed-parameter
% tractable reduction}~\cite[Chapter~2]{DowneyF99} for parameter functions $\eta
% \colon \Gamma^* \to \NN$ and ${\theta \colon \Pi^* \to \NN}$, or
% $(\eta,\theta)$-FPT reduction in short, whenever there is an increasing map $G
% \colon \NN \to \NN$ and a polynomial function $p \colon \NN \to \NN$ such that
% for every $w \in \Gamma^*$, \begin{itemize} \item $F(w)$ runs in time $p(|w|)
% \cdot G(\eta(w))$, and \item $\theta(F(w)) \leq G(\eta(w))$. \end{itemize}
%
%Above, if $p$ is a linear polynomial, then $F$ is said to be a
%\emph{fixed-parameter linear reduction for $\eta$ and $\theta$}, or
%\mbox{$(\eta,\theta)$-FPL-reduction} in short.
A computable function $F \colon D \to \Pi^*$ is said to be a \emph{uniform
slicewise polynomial reduction} for two parameter functions $\eta$ and $\theta$,
or $(\eta,\theta)$-UXP reduction for short, whenever there is an increasing map
$G \colon \NN \to \NN$ such that for every $w \in D$,
$F(w)$~runs~in~time~$|w|^{G(\eta(w))}$ (w.l.o.g. assume~${\abs{w} \geq 2}$), and
$\theta(F(w)) \leq G(\eta(w))$.

As usual in computability theory, functions~$F$ with multiple arguments are
handled by introducing a special symbol to the alphabet $\Gamma$, say $\#$, to
separate the arguments, thus viewing $F$ as a function in one input. For
instance, an operator \mbox{$\oplus \colon \Sigma_1^* \times \Sigma_2^* \to
\Sigma^*$} can be interpreted by a computable function taking as inputs words
$w_1 \# w_2$ with ${(w_1,w_2) \in \Sigma_1^* \times \Sigma_2^*}$. The product
${(\eta_1 \cdot \eta_2)(w_1 \# w_2) \coloneqq \eta_1(w_1) \cdot \eta_2(w_2)}$ of
parameter functions $\eta_1 \colon \Sigma_1^* \to \NN$ and $\eta_2 \colon
\Sigma_2^* \to \NN$ can be used to refine the complexity analysis of $\oplus$ to
each of its two arguments. We write $\parone$ for the trivial parameter function
defined as $\parone(w) \coloneqq 1$ for all $w \in \Sigma^*$.

Let $\mcA = (A,\sigma,I)$ be a structure, $\sigma = (\mcF,\mcG,\mcR,\ar)$, $\rho
\colonsub \Sigma^* \to A$ be a representation, and $\eta \colon \Sigma^* \to
\NN$ be a parameter function. We say that $\mcA$ has a $(\rho,\eta)$\emph{-UXP
signature} whenever there is an interpretation function $J$ such that \emph{(i)}
$(\dom(\rho),\sigma,J)$ is a $\rho$-implementation of~$\mcA$ and \emph{(ii)} $J$
associates a $(\eta^{\ar(f)},\eta)$-UXP reduction to every $f \in \mcF$, a
$(\parone\cdot\eta^{\ar(g)},\eta)$-UXP reduction to every $(g,X) \in \mcG$, and
a $(\eta^{\ar(R)},\parone)$-UXP reduction to every $R \in \mcR$. Note that for
$\eta = \parone$, all those reductions become polynomial time  functions. In
this case we say that $\mcA$ has a ($\rho$-)tractable signature.

% Clearly, one can define notions of \mbox{$(\rho,\eta)$-X} signatures for other
% parametrised reductions X. For instance, a structure might have a
% \mbox{$(\rho,\eta)$-FPT} signature, i.e., it might admit a
% $\rho$-implementation having functions interpreted as fixed parameter
% tractable reductions for $\eta$~\cite[Chapter~2]{DowneyF99}. However, only P
% and UXP signatures naturally arise from the problems discsussed in this paper. 

% \begin{example} Recall the structure
%   $\mcA=(\ZZ,0,1,+,(\mathrm{mul},\NN),\leqslant)$ from \Cref{ex:Z_with_nul_n}
%   where $\textrm{mul}^{\mcA}(n,x)=nx$ for all $n\in\NN$ and $x\in\ZZ$. We
%   claim that $\mcA$ has a tractable signature by its
%   $\stdrepr{\ZZ}$-implementation $\mcR$ in \Cref{ex:Z_with_nul_n_impl}. For
%   this, we check that \begin{itemize} \item the map $()\mapsto x$ where
%   $\stdrepr{\ZZ}(x)=0$ (resp. $=1$) is in PTIME (trivial), \item the
%   $(\stdrepr{\ZZ}^2,\stdrepr{\ZZ})$-implementation of addition is in PTIME by
%   the naive addition algorithm, \item the
%   $(\stdrepr{\ZZ}^2,\stdrepr{\BB})$-implementation of less-or-equal-to is in
%   PTIME by a simple comparison between the bits, \item the function
%   $M:\Sigma^*\times\Sigma^*\to\Sigma^*$ such that
%   $\stdrepr{\ZZ}(M(n,x))=\textrm{mul}(\stdrepr{\NN}(n),\stdrepr{\ZZ}(x))=\stdrepr{\NN}(n)\cdot\stdrepr{\ZZ}(x)$
%   runs in PTIME by the naive multiplication algorithm. \end{itemize}
%   \end{example}

\begin{exa} 
  Consider the structures of regular languages~$\mcA$ and of deterministic finite 
  automata~$\mcM$ in~\Cref{example:ZZ-implementation}. 
  The functions $\cup$, $\cap$, and $(\cdot)^c$ 
  can be implemented in polynomial time on DFAs, 
  therefore $\mcA$ has a $\rho$-tractable signature. 
  However, $\mcA$ does not have a tractable signature for the representation of regular
  languages as non-deterministic finite automata (NFAs), because computing
  $(\cdot)^c$ on NFAs requires first to determinise the automaton. \hfill$\diamond$
\end{exa}

As in the case of representations, it is often more practical to have parameter
functions from objects other than words. Given a parameter function $\theta
\colon \Sigma^* \to \NN$, we consider the operations
$\len(\theta) \colon \seq(\Sigma^*) \to \NN$, ${\max(\theta) \colon
\seq(\Sigma^*) \to \NN}$ and $\depth(\theta) \colon \seq(\seq(\Sigma^*)) \to
\NN$ on parameter functions. For~$\vec w = (w_1,\dots,w_n)$, they are defined as
  \begin{center}
    $\len(\theta)(\vec w) \coloneqq \textstyle\sum_{i=1}^n \theta(w_i);\quad
    \max(\theta)(\vec w) \coloneqq \textstyle\max_{i=1}^n \theta(w_i);\quad
    \depth(\theta) \coloneqq \len(\len(\theta)).$
  \end{center}
% \begin{theorem}\label{th:compo_uxp_reduction_param} Let $(\Sigma_i)_{i\in I}$
%   be a finite collection of domains. For each $i\in I$, let $\eta_i$ be a
%   parameter function on $\Sigma_i$. Let $(F_j)_{j\in I}$ be a (possibly
%   infinite) collection of maps. For every $K=(k_1,\ldots,k_p)\in I^*$, let
%   $\Sigma_{K}:=\Sigma_{k_1}\times\cdots\times\Sigma_{k_p}$ and
%   $\eta_K:=\eta_{k_1}\cdots\eta_{k_p}$. Assume that for each $j$, there exists
%   a tuple $I_j\in I^*$ and $\operatorname{out}_j\in I$ such that
%   $F_j:\Sigma_{I_j}\to\Sigma_{\operatorname{out}_j}$ is a
%   $(\eta_{I_j},\eta_{\operatorname{out}_j})$-UXP reduction. Define the set
%   $\Gamma$ of well-typed expressions inductively by: \begin{itemize} \item For
%   every $j\in J$, \end{itemize} \end{theorem}

%\input{old-model-of-comp}

\section{A framework for the fixed negation fragment of first-order theories}
\label{section:fo-framework}

Fix a structure $\mcA = (A,\sigma,I)$ and consider the structure~$\FO(\mcA)$
from~\Cref{sec:fo_structures}:
\[
    \FO(\mcA) \coloneqq
        (\sem{\mcA}_{\FO},\bot,\top,\lor,\land,
        -,(\proj,\vec{I}),(\unproj,\vec{I}),\leq).
\]
In this section, we describe a framework that can be employed to show that the
$k$ negation satisfiability problem for $\FO(\mcA)$ is in PTIME. Part of our
framework is generic, i.e., it~applies to any first-order theory, while other
parts are specific to the theory under consideration. To keep the presentation of the
framework concise, we postpone detailed 
proofs of the formal
statements to the subsequent \Cref{section:diffnf,appendix:sec4-new}. 
The current section is
best regarded as a blueprint intended to guide the instantiation of the
framework.

\subsection{Ideas underlying the framework}
\label{subsection:framework-ideas}

To understand the framework, we first discuss how we can
exploit the fact that our formulae only have a fixed number of negations. For
simplicity, let us focus for the time being on \emph{quantified Boolean
formulae} (QBF) in prenex form. These are formulae of the form $\exists q_1
\forall q_2 \dots \exists q_n \Phi$, where $\Phi$ is a formula from
propositional logic, and $q_1,\dots,q_n$ are some Boolean variables occurring in
it. A first key question is whether bringing the quantifier-free part~$\Phi$ of
a QBF formula into a particular normal form can be computationally beneficial. Of
course, due to $\Phi$ having a fixed number of negations, $\Phi$ could be translated
into DNF in PTIME.
However, because of quantifier alternation together with the unbounded number of
conjunctions, choosing this normal form comes with several intricacies. Another
option we might try is to put $\Phi$ into a form where all but a fixed amount of
constraints are in Horn form, and then try to rely on the algorithm to solve
quantified Horn Boolean satisfiability in PTIME~\cite{KarpinskiBS87}. This works
for the Boolean case, but not for an arbitrary theory. For instance, as shown in~\Cref{section:introduction}, the quantified Horn satisfiability
problem for the FO theory of~$\mathcal{Z} = (\ZZ,0,1,+,=)$, i.e.~weak PA, is
already NP-hard for the alternation prefix $\exists\forall$ and $2$ variables
(and NEXPTIME-hard in general~\cite{ChistikovHHM22}). It turns out that a
suitable normal form for $\Phi$ is given by formulae of the form $\Phi_1 -
(\Phi_2 - ( \dots - (\Phi_{k-1} - \Phi_k)))$, where each $\Phi_i$ is a
negation-free formula in disjunctive normal form (DNF), and $\Psi_1 - \Psi_2$ is
the relative complementation $\Psi_1 \land \lnot \Psi_2$.  
As we will see in~\Cref{subsection:steps-framework}, this atypical normal form
(introduced by Hausdorff in \cite{Hausdorff1914} and called \emph{difference
normal form} in~\cite{Junker00}) not only fully makes use of our restriction on
the number of negations, but also exhibits nice properties in relation to
quantification.

\begin{exa}\label{example:qbf-0}
    Consider the propositional formula $\Phi(a,b,c) \coloneqq (a \lor b) \land
    (\lnot a \lor c) \land (\lnot b \lor \lnot c)$. This formula is satisfied by the
    assignments $(a = \top, b = \bot, c = \top)$ and $(a = \bot, b = \top, c =
    \bot)$, so in particular the QBF formula $\forall a \exists b \exists c \, \Phi$
    is valid. The difference normal form of $\Phi$ is:
    \[
        \Psi \coloneqq (a \lor b) - \big((a \lor (b \land c)) - ((a \land c) - (a \land b \land c))\big).
    \]
    All propositional formulae can be converted into difference normal form, as we will
    see in~\Cref{section:diffnf}. In~\Cref{example:qbf}, we will discuss how to eliminate the quantifiers from $\forall a \exists b \exists c \, \Psi$.~\hfill$\diamond$
\end{exa}

A second key question is what representation of the domain~$\sem{\mcA}_{\FO}$
works best for our purposes, as formulae might not be the right ``data
structures''. Though the difference normal form already sets how to treat
disjunctions and negations, we have the flexibility to vary the representation
of conjunctions of atomic formulae. Let us be a bit more precise. Consider a
domain $\domain \subseteq \sem{\mcA}_{\FO}$ containing \emph{at least} all the
sets $\sem{\Psi}$, for every $\Psi \in \AC(\sigma)$. We define $\un(\domain)$ to
be the smallest set contaning $\domain$ and being closed under the
disjunction~$\lor$, and $\dnf(\domain)$ to be the smallest set
containing~$\un(\domain)$ and being closed under relative complements~$X-Y$,
with $X \in \un(\domain)$ and $Y \in \dnf(\domain)$. From the fact that all
propositional formulae can be converted into difference normal form, we conclude
that ${\{ \sem{\Phi} : \Phi \text{ quantifier free} \} \subseteq
\dnf(\domain)}$. Then, for a representation~$\rho \colonsub \Sigma^* \to
\domain$, the difference normal form gives a straightforward way of
representing~$\dnf(\domain)$. First, we define the representation ${\un(\rho)
\colonsub \seq(\Sigma^*) \to \un(\domain)}$, given by ${\un(\rho)(c_1,\dots,c_n)
\coloneqq (\rho(c_1) \lor \dots \lor \rho(c_n))}$, where each $c_i$ belongs to
$\Sigma^*$. A representation for~$\dnf(\domain)$ is then given by the map
${\dnf(\rho) \colonsub \seq(\seq(\Sigma^*)) \to \dnf(\domain)}$ defined as
\[\dnf(\rho)(u_1,\dots,u_m) \coloneqq \un(\rho)(u_1) - \big(\un(\rho)(u_2) -
\big(\dots - \big(\un(\rho)(u_{m-1}) - \un(\rho)(u_m)\big)\big)\big),
\] 
where each $u_i$ belongs to~$\seq(\Sigma^*)$. The key point is that the representation
$\rho$ can be selected so that the elements in~$\domain$ are encoded as
something other than formulae. For instance, for linear arithmetic theories,
alternative representations are given by finite automata~\cite{Buchi60} or geometric
objects~\cite{0001HM22}. In the instantiations of the framework provided
in~\Cref{sec:affine_subspaces,sec:shifted_lattices} we will use the geometric
objects. Of course, relying on representations other than formulae requires an
efficient way of changing representation. This is stressed in the
forthcoming~\Cref{theorem:PuttingAllTogether}.
% (see the map~$F$
%in~\Cref{item:map-F}). 
One last observation: above, we defined the
domain~$\domain$ to be a superset of $\{\sem{\Psi}_{\mcA} : \Psi \in
\AC(\sigma)\}$. This is because more general sets might be required to make
$\dnf(\domain)$ closed under (universal) projection. For instance, in weak
integer arithmetic, the formula $\exists y : x = 2 \cdot y$, stating that $x$ is
even, cannot be expressed with a quantifier-free formula, hence $\sem{\exists y
: x = 2 \cdot y}_{\mathcal{Z}}$ must be added~to~$\domain$.

\subsection{What the framework achieves}
The following proposition formalises the observations
in~\Cref{subsection:framework-ideas}. Recall that an algorithm is 
in $\chi$-UXP, for a parameter $\chi \colon \Sigma^* \to \NN$, if it runs in
time $\abs{w}^{G(\chi(w))}$ for every $w \in \Sigma^*$, for some function $G
\colon \NN \to \NN$ not depending on~$w$. A decision problem is in
\mbox{$\chi$-UXP} if there is a $\chi$-UXP algorithm solving that problem.

\begin{restatable}{prop}{PuttingAllTogether}\label{theorem:PuttingAllTogether}
    Fix $k \in \NN$. Assume the following objects to be defined:
    \begin{enumerate}
        \item\label{item:rep-rho} A representation $\rho$ of
        $\domain:=\bigcup_{n\in\NN}\domain_n$, where, for all $n\in\NN$,
        $\domain_n\subseteq\powerset(A^n)$ is s.t.~$\sem{\Psi}_{\mcA} \in
        \domain_n$ for every $\Psi \in \AC(\sigma)$ having maximum variable
        $x_n$.
        \item\label{item:map-F} A $(\xi,\theta)$-UXP reduction $F \colon
        \AC(\sigma) \to \dom(\rho)$ s.t.~${(\rho \circ F)(\Psi) =
        \sem{\Psi}_{\mcA}}$ for all~$\Psi \in \AC(\sigma)$.
    \end{enumerate}
    If $\mcD \coloneqq {(\dnf(\domain),\bot,\top,\lor,\land,-,(\proj,\vec I),
    (\unproj, \vec I), \leq)}$ has a \mbox{$(\dnf(\rho),\depth(\theta))$-UXP}
    signature,
    \begin{itemize}
        \item the $k$ negations satisfiability problem for $\FO(\mcA)$ is in
        $\xi$-UXP (in PTIME, if $\xi = \parone$), and 
        \item there is a $\xi$-UXP (polynomial time, if $\xi = \parone$)
        algorithm that, given a formula $\Phi$ of\, $\FO(\mcA)$ having at most
        $k$ negations, returns $X$ in $\dom(\dnf(\rho))$ such
        that~${\dnf(\rho)(X) = \sem{\Phi}_{\mcA}}$.
    \end{itemize}
\end{restatable}

\noindent
By virtue of the discussion in~\Cref{subsection:framework-ideas}, establishing~\Cref{theorem:PuttingAllTogether}
is straightforward: the reduction~$F$ enables an efficient conversion from
$\AC(\sigma)$ to elements in $\dom(\rho)$, and, since $\mcD$ is a structure,
$\dnf(\domain)$ is closed under all the operations in the signature and thus it
is equal to $\sem{\mcA}_{\FO}$. Consequently, $\FO(\mcA)$ has a
$(\dnf(\rho),\depth(\theta))$-UXP signature, and one can efficiently use
$\dnf(\rho)$ as a data structure to carry out the algorithm to decide
satisfiability: it suffices invoking the various UXP reductions implementing the
functions and relations in~$\mcD$. We remark that the sole purpose of the
parameter $\theta$ is to factor in the parameter~$\xi$, and one can set $\theta \coloneqq \parone$ in the case of $\xi \coloneqq \parone$ (i.e., the
case yielding PTIME algorithms).

\subsection{The framework}
\label{subsection:steps-framework}
To apply~\Cref{theorem:PuttingAllTogether}, one has to provide the required
representation~$\rho$ and the reduction~$F$ from~\Cref{item:rep-rho,item:map-F}, 
and show that the structure $\mcD$ has the desired UXP signature. 
Whereas the choice of~$\rho$ and $F$ depends on the FO theory at
hand, we show that a significant portion of the work required to prove that
$\mcD$ has an UXP signature can be treated in a general way, thanks to
the notion of difference normal form. This ``automation'' is the
core of our framework, which provides a minimal set of subproblems that are
sufficient to conclude that $\mcD$ has a $(\dnf(\rho),\depth(\theta))$-UXP
signature. Below, we divide those subproblems into two requirements, one for Boolean
connectives and one for quantification. One significant result in this context
is that negation can be treated in a general way.
As a running example, we sketch the instantiation of our framework on 
the fragment of Presburger arithmetic having one variable per inequality.
\begin{restatable}[Boolean connectives]{step}{STEPONE}\label{step:basic_framework_ptime}
    Establish the following properties:
    \begin{enumerate}[label=(F\arabic*)]
        \item\label{step:basic_framework_ptime:Item1} The
        structure $(\domain,\land,\leq)$ has a $(\rho,\theta)$-UXP signature. 
        \item\label{step:basic_framework_ptime:Item2} The
        structure $(\un(\domain),\leq)$ has a
        $(\un(\rho),\len(\theta))$-UXP~signature.
    \end{enumerate}  
\end{restatable}
\noindent 
\Cref{step:basic_framework_ptime} asks to provide algorithms for solving typical
computational problems that are highly domain-specific:
\Cref{step:basic_framework_ptime:Item1} considers the intersection and inclusion
problems for elements of $\domain$, with respect to the representation~$\rho$,
whereas \Cref{step:basic_framework_ptime:Item2} deals with the inclusion problem
for unions of elements in~$\domain$, with respect to the
representation~$\un(\rho)$. In the case of unions, we highlight the parameter
$\len(\theta)$ which fixes the length of the union.

\begin{exa}[PA with one variable per inequality]\label{example-unary-PA}
    Consider the fragment of Presburger arithmetic in which formulae follow the grammar in~\Cref{eq:first-order-language}, and atomic formulae are of the form $x \leq k$ or $x \geq k$, where $x$ is a variable ranging over the integers, and $k
    \in \ZZ$. We start by defining the objects in~\Cref{item:rep-rho,item:map-F} 
    of~\Cref{theorem:PuttingAllTogether}: 
    \begin{itemize}
        \item Define $\domain \coloneqq \{\sem{\Psi} : \Psi \in \AC(\sigma)\}$,  and $\theta \coloneqq \parone$.
        \item We set $\rho$ to be a map taking as inputs~$\top$, $\bot$, or systems of constraints~$\Psi$ of the form ${\bigwedge_{i=1}^m x_i \in [\ell_i,u_i]}$,
        where $\ell_i \in \ZZ \cup \{-\infty\}$ and $u_i \in \ZZ \cup \{+\infty\}$, for every $i \in [1,m]$. 
        The definition of $\rho$ is given by~$\rho(\bot) \coloneqq \sem{\bot}$, $\rho(\top) \coloneqq \sem{\top}$ and $\rho(\Psi) \coloneqq  \sem{\bigwedge_{j=1}^m (\ell_j \leq x_j \land x_j \leq u_j)}$, where $-\infty \leq x_j$ and $x_j \leq +\infty$ are interpreted as~$\top$.
        
        \item For a given $\Phi \in \AC(\sigma)$ the function $F$ returns the element of $\dom(\rho)$ computed (in polynomial time) as follows: Let $x_n$ be the maximum variable occurring in $\Phi$. For every $i \in [1,n]$, find the largest $\ell_i \in \ZZ$ and the smallest $u_i \in \ZZ$ such that $\ell_i \leq x_i$ and $x_i \leq u_i$ occur in $\Phi$. If one of these inequalities does not appear, take the corresponding bound to be $\pm \infty$ instead. Return $\bigwedge_{i=1}^n x_i \in [\ell_i,u_i]$.
    \end{itemize}
    Let us now consider~\Cref{step:basic_framework_ptime:Item1}
    of~\Cref{step:basic_framework_ptime}. 
    To establish this item, we must provide UXP reductions for computing, given $\Phi,\Psi \in \dom(\rho)$, (1) an element in $\dom(\rho)$ representing $\rho(\Phi) \land \rho(\Psi)$, and (2) whether $\rho(\Phi) \leq \rho(\Psi)$. 
    Both problems can in fact be solved in polynomial time. 
    Let $\Phi = \bigwedge_{j=1}^m x_i \in [\ell_i,u_i]$ and $\Psi = \bigwedge_{j=1}^n x_i \in [\ell_i',u_i']$.  (Similar arguments hold when at least one among $\Phi$ or $\Psi$ is $\top$ or $\bot$.) 
    For the first problem, the algorithm simply returns $\bigwedge_{i=1}^{\max(n,m)} x_j \in [\max(\ell_j,\ell_j'),\min(u_j,u_j')]$, 
    where the values $\ell_j$ or $\ell_j'$ (resp.~$u_j$ or $u_j'$) not occurring in $\Phi$ or $\Psi$ are defined as $-\infty$ (resp.~$+\infty$).
    For the second problem, the algorithm returns true whenever 
    either $u_i < \ell_i$ for some $i \in [1,m]$ (in this case, $\Phi$ is unsatisfiable), or $\ell_i' \leq \ell_i \leq u_i \leq u_i'$ for every $i \in [1,\max(n,m)]$. 
    
    To establish~\Cref{step:basic_framework_ptime:Item2}, 
    we must provide
    a $(\len(\parone)^{2},\parone)$-UXP reduction for testing
    the inclusion~$\un(\rho)(\Phi_1,\dots,\Phi_j) \leq \un(\rho)(\Psi_1,\dots,\Psi_k)$,
    where each $\Phi_i$ and $\Psi_i$ belongs to~$\dom(\rho)$. By definition of
    $\un(\rho)$, an algorithm for deciding $\rho(\Phi) \leq
    \un(\rho)(\Psi_1,\dots,\Psi_k)$, with $\Phi$ in $\dom(\rho)$, suffices. 
    Viewing elements of $\dom(\rho)$ as formulae in $\AC(\sigma)$, 
    this is equivalent to testing the unsatisfiability of $\Phi \land
    (\lnot \Psi_1) \land \dots \land (\lnot \Psi_k)$. The parameter of the UXP 
    reduction is $\len(\parone)$, and so it suffices to provide a~$n^{\poly(k)}$ procedure. 
    One such procedure consists of converting $\Phi \land
    (\lnot \Psi_1) \land \dots \land (\lnot \Psi_k)$ into DNF, which can be done in~$n^{\poly(k)}$ time, to then test the unsatisfiability of each disjunct. 
    Since $\lnot (x \leq k) \iff x \geq k+1$, every disjunct belongs to $\AC(\sigma)$, 
    and its unsatisfiability can be decided in polynomial time. 
    (Note: taking $k = 1$ provides another argument for solving 
    the inclusion $\rho(\Phi) \land \rho(\Psi)$ required in~\Cref{step:basic_framework_ptime:Item1}.)
    \hfill$\diamond$
\end{exa}

Establishing~\Cref{step:basic_framework_ptime} implies that
the full Boolean algebra (including relative complementation) of $\dnf(\domain)$
has the UXP signature that is required in~\Cref{theorem:PuttingAllTogether}.

\begin{restatable}[Outcome of~\Cref{step:basic_framework_ptime}]{lem}{StepOneImpliesBooleanAlgebra}\label{lemma:Step1-implies-BooleanAlgebra}
    Suppose that: 
    \begin{enumerate}[label=(F\arabic*)]
        \item The
        structure $(\domain,\land,\leq)$ has a $(\rho,\theta)$-UXP signature. 
        \item The structure $(\un(\domain),\leq)$ has a
        $(\un(\rho),\len(\theta))$-UXP~signature.
    \end{enumerate}  
    Then, the
    structure ${(\dnf(\domain),\bot,\top,\lor,\land,-,\leq)}$ has a
    $(\dnf(\rho),\depth(\theta))$-UXP signature.
\end{restatable}

\noindent
The proof of this lemma boils down to the definition of suitable UXP reductions
implementing the binary operations~$\land$, $\lor$ and $-$. We postpone this
proof to~\Cref{section:diffnf}, where we study in depth algorithmic aspects of
the difference normal form.

\vskip3.25ex plus1ex minus.2ex

Moving forward, we now consider projections and universal projections. Again, the
goal is to minimise the efforts needed to add support for these operations. In
this sense, the decision to adopt the difference normal form now becomes
crucial. First, we need to introduce a variant of universal projection which
we call relative universal projection. Given $Z \subseteq A^{m}$, $X \subseteq
A^{n}$ and $\vec i = (i_1,\dots,i_k) \in \vec I$, the \emph{relative universal
projection} $\proj_Z^\forall(\vec i, X)$ of $X$ with respect to $Z$ is defined
as follows, where $M \coloneqq \max(m,n)$:
\begin{align*}
    \proj_Z^\forall(\vec i, X) \coloneqq \{ (a_1,\smalldots,a_{M}) \in A^{M} : & \ \vec a \coloneqq (a_1,\smalldots,a_m) \in \proj(\vec i, Z) \text{ and for all } 
    \vec b \in A^k\\
    &\text{ if }
    \vec a[\vec i \gets \vec b] \in Z \text{ then } 
    (a_1,\smalldots,a_n)[\vec i \gets \vec b] \in X \}. \,
\end{align*}
Informally speaking, and illustrated in \Cref{fig:rel-projection}, the set $\proj_Z^\forall(\vec i, X)$ acts as a universal
projection for the part of $X$ that lies inside~$Z$. Note that $\unproj(\vec i,
X) = \unprojrel{\top}(\vec i, X)$.

\begin{figure}
    \begin{tikzpicture}
        \node (bg) at (0,0) {\includegraphics[scale=0.7]{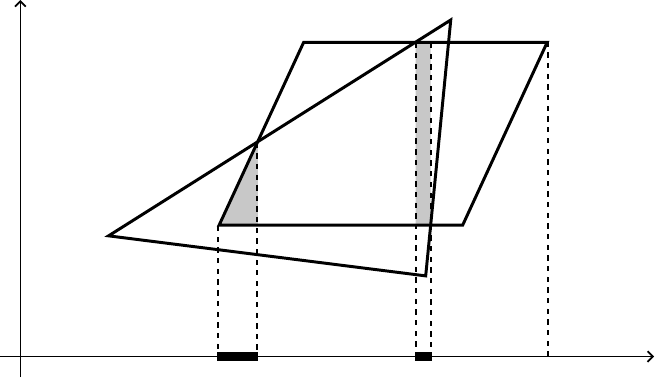}};
        \node (x) at (3.6,-2.2) {$x$};
        \node (y) at (-3.85,2.0) {$y$};
        \node (Z) at (2.2,1.45) {$Z$};
        \node (X) at (-1.8,-0.4) {$X$};

        \node (l) at (-1.3,-2.2) {}; 
        \node (v1) at (-1.05,-2.1) {}; 
        \node (v2) at (1.13,-2.1) {}; 
        \node (r) at (2.62,-2.2) {}; 

        \node (t) at (0.1,-2.4) {$\pi_Z^{\forall}(y,X)$};

        \draw [decorate,decoration={brace,amplitude=5pt,mirror,raise=3.4ex}]
            (l.center) -- (r.center) node[midway,yshift=-2.8em]{$\pi(y,Z)$};

        \draw[->] (t.west) -| (v1.center); 
        \draw[->] (t.east) -| (v2.center); 
    \end{tikzpicture}

    \caption{Illustration of the relative universal projection $\pi_Z^\forall(y,X)$.}
    \label{fig:rel-projection}
\end{figure}

The lemma below outlines a key ``mutual distribution'' property of projection
and relative universal projection over relative complement. In the context of
the difference normal form, this property allows us to disregard complementation
when adding support for quantification.

\begin{restatable}{lem}{LemProjUnivDual}\label{lem:proj_univ_proj_X_setminus_Y}
    Consider $X \subseteq A^n$, $Y \subseteq A^m$, $Z \subseteq A^r$, and 
    $\vec i \in \vec I$. Then,
    \begin{center}
    $\proj(\vec i, X - Y) = \proj(\vec i,X) - \unprojrel{X}(\vec i, Y)$ \ and \
    $\unprojrel{Z}(\vec i, X - Y) = \unprojrel{Z}(\vec i, X) - \proj(\vec i,Y)$.
    \end{center}
\end{restatable}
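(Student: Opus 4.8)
The plan is to prove the two identities in \Cref{lem:proj_univ_proj_X_setminus_Y} directly by element-chasing, unfolding the definitions of $\proj$, $\unprojrel{(\cdot)}$ and $-$ given in \Cref{sec:fo_structures} and in the definition of relative universal projection. Since the dimensions $n$, $m$, $r$ may all differ, the first bookkeeping step is to fix $M := \max$ of the relevant arities on each side and check that both sides live in $A^M$; the padding conventions in the definitions of $X - Y$, $\proj$ and $\proj^\forall_Z$ make this routine but must be stated so the componentwise argument is unambiguous. I would then treat each of the two identities as a pair of inclusions, working with a generic tuple $\gamma \in A^M$ and the substitution notation $\gamma[\vec i \gets \vec b]$.

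For the first identity, $\proj(\vec i, X - Y) = \proj(\vec i, X) - \unprojrel{X}(\vec i, Y)$: expand the left-hand side to ``there exists $\vec a$ with $\gamma[\vec i \gets \vec a] \in X$ and $\gamma[\vec i \gets \vec a] \notin Y$''. Expand the right-hand side to ``(there exists $\vec a$ with $\gamma[\vec i \gets \vec a] \in X$) and not (for all $\vec b$, $\gamma[\vec i \gets \vec b] \in X$ implies $\gamma[\vec i \gets \vec b] \in Y$)'', i.e. ``$\exists \vec a\ (\gamma[\vec i\gets\vec a]\in X)$ and $\exists \vec b\ (\gamma[\vec i\gets\vec b]\in X$ and $\gamma[\vec i\gets\vec b]\notin Y)$''. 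The second conjunct of the right-hand side already subsumes the first (take $\vec a := \vec b$), so the right-hand side is equivalent to ``$\exists \vec b\ (\gamma[\vec i\gets\vec b]\in X$ and $\gamma[\vec i\gets\vec b]\notin Y)$'', which is literally the left-hand side. The key logical point is the quantifier manipulation $\lnot \forall \vec b\,(P(\vec b)\to Q(\vec b)) \equiv \exists \vec b\,(P(\vec b)\land\lnot Q(\vec b))$ together with the absorption of the redundant existential.

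For the second identity, $\unprojrel{Z}(\vec i, X - Y) = \unprojrel{Z}(\vec i, X) - \proj(\vec i, Y)$: expand the left-hand side via the definition of $\proj^\forall_Z$ to ``$\gamma \in \proj(\vec i, Z)$ and for all $\vec b$, if $\gamma[\vec i\gets\vec b]\in Z$ then $\gamma[\vec i\gets\vec b]\in X$ and $\gamma[\vec i\gets\vec b]\notin Y$''. Expand the right-hand side to ``$\big(\gamma\in\proj(\vec i,Z)$ and for all $\vec b$, $\gamma[\vec i\gets\vec b]\in Z$ implies $\gamma[\vec i\gets\vec b]\in X\big)$ and not $\exists\vec b\ \gamma[\vec i\gets\vec b]\in Y$'', i.e.\ the same first two conjuncts together with ``for all $\vec b$, $\gamma[\vec i\gets\vec b]\notin Y$''. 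So the two sides differ only in that the left-hand side demands $\gamma[\vec i\gets\vec b]\notin Y$ only for those $\vec b$ with $\gamma[\vec i\gets\vec b]\in Z$, whereas the right-hand side demands it for all $\vec b$. The forward inclusion ($\subseteq$) is the direction needing care: here one uses that the left-hand side also guarantees $\gamma[\vec i\gets\vec b]\in Z \Rightarrow \gamma[\vec i\gets\vec b]\in X$ — wait, that does not by itself rule out $\gamma[\vec i\gets\vec b]\in Y$ for $\vec b$ outside $Z$ — so in fact the two sides are \emph{not} literally equal unless one reads the statement with the padding/max conventions. I expect the resolution is that the asymmetry is exactly absorbed by the clause $\gamma\in\proj(\vec i,Z)$ on the left and the max-arity padding, so that after expansion both sides reduce to the same first-order condition; the main obstacle, and the step I would spend the most care on, is getting this second inclusion right — carefully tracking which $\vec b$ the universal quantifier ranges over on each side and verifying that the $Z$-relativisation on the left does not lose information once the $\proj(\vec i, Z)$ membership clause is in force. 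Once both inclusions are checked for both identities, the proof is complete.
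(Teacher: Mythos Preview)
Your treatment of the first identity is correct and is essentially the same direct calculation the paper gives; the only imprecision is that the definition of $\unprojrel{X}(\vec i, Y)$ already carries the clause $\gamma \in \proj(\vec i, X)$, but since you are subtracting from $\proj(\vec i, X)$ this clause is absorbed and your argument goes through unchanged.

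Your hesitation about the second identity is, however, entirely justified, and the resolution you hope for does not exist: no amount of padding or $\max$-arity bookkeeping makes the inclusion $\unprojrel{Z}(\vec i, X - Y) \subseteq \unprojrel{Z}(\vec i, X) - \proj(\vec i, Y)$ hold in general. Take $A = \{0,1\}$, $\vec i = (2)$, $Z = X = \{(0,0)\}$ and $Y = \{(0,1)\}$. Then $X - Y = \{(0,0)\}$ and one checks $\unprojrel{Z}(\vec i, X - Y) = \{(0,0),(0,1)\}$; on the other side $\unprojrel{Z}(\vec i, X) = \proj(\vec i, Y) = \{(0,0),(0,1)\}$, so the right-hand side is empty. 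The asymmetry you isolated---the left-hand side constrains $\gamma[\vec i \gets \vec b] \notin Y$ only for $\vec b$ landing in $Z$, the right-hand side for all $\vec b$---is exactly what breaks. The paper's own calculation in fact derives the correct equality
\[
\unprojrel{Z}(\vec i, X - Y) \;=\; \unprojrel{Z}(\vec i, X) - \proj(\vec i, Z \land Y)
\]
and then, in its final line, passes from $\proj(\vec i, Z \land Y)$ to $\proj(\vec i, Y)$; this is precisely the step you could not complete, and it is not valid without further hypotheses. The fix is either to keep $Z \land Y$ in place of $Y$, or to add the assumption $Y \leq Z$. Either suffices for the intended applications (one may always replace a chain $X_1 - (X_2 - (X_3 - \cdots))$ by the semantically equivalent decreasing chain with $X_k$ replaced by $X_1 \land \cdots \land X_k$ before applying the lemma), but the second identity as stated is false, so your difficulty is with the statement, not with your proof strategy.
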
 

% \noindent 
% For instance, $\pi(\vec i, W - (X - (Y - Z))) = \pi(\vec i,
%   W) - (\unprojrel{W}(\vec i, X) -( \proj(\vec i, Y) -
%   \unprojrel{Y}(\vec i, Z)))$.

\begin{exa}[QBF]\label{example:qbf}
    Consider the formula~$\Psi$ from~\Cref{example:qbf-0}.
    Let $\vec i = (b,c)$.
    Following the equivalences in~\Cref{lem:proj_univ_proj_X_setminus_Y}, we have
    \[ 
        \proj(\vec i, \sem{\Psi}) \ = \ \proj(\vec i,\sem{a \lor b}) - \left(\unprojrel{\sem{a \lor b}}(\vec i,\sem{a \lor (b \land c)}) - \big(\proj(\vec i,\sem{a \land c}) - \unprojrel{\sem{a \land c}}(\vec i\sem{a \land b \land c})\big)\right).
    \]
    It is easy to see that: 
    \begin{itemize}
        \item $\proj(\vec i,\sem{a \lor b})$ corresponds to $\sem{\top}$. In other words, $\exists b,c : a \lor b$ is a valid formula.
        \item $\unprojrel{\sem{a \lor b}}(\vec i,\sem{a \lor (b \land c)})$ corresponds to $\sem{a}$; i.e., $a \iff {(\forall b,c : a \lor b \Rightarrow a \lor (b \land c))}$~is~valid.
        \item $\proj(\vec i,\sem{a \land c})$ corresponds to $\sem{a}$, whereas $\unprojrel{\sem{a \land c}}(\vec i, \sem{a \land b \land c})$ corresponds to $\sem{\bot}$.
    \end{itemize}
    Therefore $\proj(\vec i,\sem{\Psi}) = \sem{\top} - (\sem{a} - (\sem{a} - \sem{\bot})) = \sem{\top}$. Moreover, since~${\proj^\forall(a,\sem{\top}) = \sem{\top}}$, 
    it follows that the formula~$\forall a \exists b \exists c \, \Phi$
    from~\Cref{example:qbf-0} is valid.
    \hfill$\diamond$
\end{exa}

We postpone
the proof of~\Cref{lem:proj_univ_proj_X_setminus_Y} and of the
forthcoming~\Cref{lemma:step-one-two-imply-FO-tract}
to~\cref{appendix:sec4-new}. Below, let us write $\dotproj$ for the restriction
of the projection operator~$\proj$ on inputs $(\vec i, X)$ where $X \in
\domain$,  
and write $\dotunprojrel{}$ for the restriction of the relativised universal
projection $\proj^\forall$ on inputs $(Z, \vec i, X)$ where $Z \in \domain$
and $X \in \un(\domain)$. Thanks to~\Cref{lem:proj_univ_proj_X_setminus_Y},
adding to~\Cref{step:basic_framework_ptime} the following requirement is sufficient to
conclude that $\mcD$ has the UXP signature required
by~\Cref{theorem:PuttingAllTogether}.

\noindent
\begin{minipage}{\linewidth}
\begin{restatable}[Projection and universal projection]{step}{STEPTHREE}\label{step:beforeReducPiUnivToSimpleCasesUxp}
    Establish the following properties:
    \begin{enumerate}[label=(F\arabic*)]
        \setcounter{enumi}{2}
        \item\label{step:beforeReducPiUnivToSimpleCasesUxp:Item1} For every $X \in \domain$
            and $\vec i \in \vec I$, $\dotproj(\vec i, X)$ belongs to $\dnf(\domain)$.
        \item\label{step:beforeReducPiUnivToSimpleCasesUxp:Item2} For every $Z \in
            \domain$, $\vec i \in \vec I$ and $X \in \un(\domain)$,
            $\dotunprojrel{Z}(\vec i, X)$ belongs to $\dnf(\domain)$.
        \item\label{step:beforeReducPiUnivToSimpleCasesUxp:Item3} There is a
            $(\parone \cdot \theta,\depth(\theta))$-UXP reduction
            $(\stdrepr{\vec I} \times \rho,\dnf(\rho))$-implementing~$\dotproj$.
        \item\label{step:beforeReducPiUnivToSimpleCasesUxp:Item4} There is a
            $(\theta \cdot \parone \cdot \len(\theta),\depth(\theta))$-UXP
            reduction $(\rho \times \stdrepr{\vec I} \times
            \un(\rho),\dnf(\rho))$-implementing~$\dotunprojrel{}\!$.
    \end{enumerate}
\end{restatable}
\end{minipage}

\begin{restatable}[Outcome of \Cref{step:basic_framework_ptime} and~\ref{step:beforeReducPiUnivToSimpleCasesUxp}]{lem}{StepOneTwoImplyFOTract}
    \label{lemma:step-one-two-imply-FO-tract}
    Assume~\ref{step:basic_framework_ptime:Item1}--\ref{step:beforeReducPiUnivToSimpleCasesUxp:Item4} to hold.
    Then, the structure~$\mcD$ from~\Cref{theorem:PuttingAllTogether} has a
    $(\dnf(\rho),\depth(\theta))$-UXP signature.
\end{restatable}

\begin{exa}[PA with one variable per inequality, cont.]\label{example:PA-step2}
    We show how to implement \Cref{step:beforeReducPiUnivToSimpleCasesUxp} on the fragment
    of Presburger arithmetic  
    introduced in~\Cref{example-unary-PA}. Recall that $\domain \coloneqq
    \{\sem{\Psi} : \Psi \in \AC(\sigma)\}$, and $\dom(\rho)$
    is the set of systems of constraints of the form $\bigwedge_{i=1}^m x_i \in [\ell_i,u_i]$,  where $\ell_i \in \ZZ \cup \{-\infty\}$ and $u_i \in \ZZ \cup \{+\infty\}$, for every $i \in [1,m]$, 
    plus two elements $\top$ and $\bot$. 
    
    Establishing~\Cref{step:beforeReducPiUnivToSimpleCasesUxp:Item1,step:beforeReducPiUnivToSimpleCasesUxp:Item3}
    is straightforward. Let $\Phi$ be an element of $\dom(\rho)$, and consider a
    vector of variables $\vec x$. Item~(F3) asks to show that the set of
    solutions to $\exists \vec x \, \Phi$ is an element $S \in \dnf(\domain)$.
    The next item asks for an algorithm to compute an element of
    $\dom(\dnf(\rho))$ that represent $S$. In fact, we can compute in polynomial
    time an element of $\dom(\rho)$ that represents~$S$. If $\Phi$ is $\top$ or
    $\bot$, we simply return $\Phi$. Let $\Phi = \bigwedge_{i=1}^m x_i \in
    [\ell_i,u_i]$. If $u_i < \ell_i$ for some $x_i$ in $\vec x$, then $\exists
    \vec x \, \Phi$ is unsatisfiable, and we can simply return $\bot$.
    Otherwise, remove $x_i \in [\ell_i,u_i]$ from the constraints in $\Phi$.
    After repeating this step for all the variables in $\vec x$, the algorithm
    returns the resulting system.

    Let us move to~\Cref{step:beforeReducPiUnivToSimpleCasesUxp:Item2,step:beforeReducPiUnivToSimpleCasesUxp:Item4}. 
    These items ask for an algorithm that given in input $\Phi \in \dom(\rho)$, a vector of variables $\vec x$, and $(\Psi_1,\dots,\Psi_k) \in \dom(\un(\rho))$, returns an element of $\dom(\dnf(\rho))$ 
    representing the set of solutions~$S$ to ${(\exists \vec x \Phi) \land \forall \vec x (\Phi \Rightarrow \Psi_1 \lor \dots \lor \Psi_k)}$. 
    For simplicity, let us suppose $\vec x = (x_{n+1},\dots,x_{m})$, 
    ${\Phi = \bigwedge_{i=1}^m x_i \in [\ell_i, u_i]}$ 
    and ${\Psi_j = \bigwedge_{i=1}^m x_i \in [\ell_{ji},u_{ji}]}$.
    We first present an algorithm that does not yield the correct UXP reduction, to illustrate potential pitfalls when instantiating the framework, and then provide the correct algorithm.
    The ``flawed'' algorithm tries to perform quantifier elimination 
    on $\forall \vec x$ in a na\"ive way:
    
    \noindent 
    \begin{minipage}{\linewidth}~

    {\rm\begin{algorithmic}[1]
        \State rewrite  $\forall \vec x (\Phi \Rightarrow \Psi_1 \lor \dots \lor \Psi_k)$ as $\lnot \exists \vec x (\Phi \land \lnot (\Psi_1) \land \dots \lnot (\Psi_k))$
        \State bring $(\Phi \land \lnot (\Psi_1) \land \dots \lnot (\Psi_k))$ in DNF, obtaining a formula $\bigvee_{j = 1}^M \Phi_j$
        \State for every $j \in [1,M]$ compute a system $\Phi_j'$ in $\dom(\rho)$, equivalent to $\exists \vec x\, \Phi_j$
        \State compute a system $\Phi'$ in $\dom(\rho)$, equivalent to $\exists \vec x\, \Phi$
        \State \textbf{return} $\Phi' - (\bigvee_{j=1}^M \Phi_j')$
    \end{algorithmic}}
    \smallskip
    \end{minipage}
    The problem with this algorithm is that, in the worst case, the number of disjuncts $M$ is $\Omega(n^k)$; 
    and so the parameter $\depth(\parone)$ of the output is not bounded by any function in the parameter $\parone \cdot \parone \cdot \len(\parone)$ of the input. In particular, the number of disjuncts must not depend on $n$.
    Note, however, that the running time $m^{\poly(k)}$ of the algorithm is unproblematic. Our main mistake in designing this algorithm is not using any property of the theory at hand.

    We now provide a correct algorithm. Let $\vec y \coloneqq (x_1,\dots,x_{n})$ denote the variables we are not projecting away.
    A vector $\vec v = (v_1,\dots,v_{n}) \in \ZZ^{n}$ is in the set of solutions $S$ 
    whenever it satisfies $\exists \vec x \Phi$ 
    and the sentence obtained from ${\forall \vec x (\Phi \Rightarrow \Psi_1 \lor \dots \lor \Psi_k)}$ by replacing every variable in $\vec y$ with the corresponding value in $\vec v$ is a tautology. 
    If~$\vec v$ satisfies $\exists \vec x \Phi$, then the latter sentence is 
    equivalent to $\forall \vec x (\Phi' \Rightarrow \Psi_1' \lor \dots \lor \Psi_k')$ where ${\Phi' = \bigwedge_{i=n+1}^m x_i \in [\ell_i, u_i]}$
    and each $\Psi_j'$ is either $\bot$ (if $\vec v$ is not a solution to $\exists \vec x \Psi_j$) or it is $\bigwedge_{i={n+1}}^m x_i \in [\ell_{ji},u_{ji}]$. Crucially, there are only $2^k$ such sentences. This observation yields the following algorithm:

    \noindent 
    \begin{minipage}{\linewidth}
    {\rm\begin{algorithmic}[1]
        \State $T \gets {}$ the empty list 
        \For{$J \subseteq [1,k]$}
            \If{$\forall \vec x ((\bigwedge_{i=n+1}^m x_i \in [\ell_i, u_i]) \implies \bigvee_{j \in J} \bigwedge_{i={n+1}}^m x_i \in [\ell_{ji},u_{ji}])$ is a tautology}\label{algo:pa1ineq:line3}
                \State\label{algo:pa1ineq:line4} let $\ell_i' \coloneqq \max\{\ell_i,\ell_{ji} : j \in J\}$ and $u_i' \coloneqq \min\{u_i,u_{ji} : j \in J\}$, for all $i \in [1,n]$
                \State\label{algo:pa1ineq:line5} append $\bigwedge_{i=1}^n x_i \in [\ell_i',u_i']$ to the list~$T$
            \EndIf
        \EndFor
        \State \textbf{return} $T$ 
        \Comment{$T$ is an element of $\dom(\un(\rho))$}
    \end{algorithmic}}
    \smallskip
    \end{minipage}
    Intuitively, Lines~\ref{algo:pa1ineq:line4} and~\ref{algo:pa1ineq:line5} 
    construct a formula whose solutions satisfy $\exists \vec x \Phi$ and ensure 
    the sentence in Line~\ref{algo:pa1ineq:line3} is a tautology. The \emph{\textbf{for}} loop enumerates all such tautologies. 
    The output $T$ therefore contains $2^k$ systems of constraints; 
    in other words, $\depth(\parone)(T) \leq 2^{(\parone \cdot \parone \cdot \len(\parone))(\Phi, \vec x, (\Psi_1,\dots,\Psi_k))}$.
    Checking that the sentences in Line~\ref{algo:pa1ineq:line3} are tautologies can be done using our ``flawed'' algorithm, which runs in $(m-n)^{\poly(k)}$ time. Hence, the overall running time is bounded by $m^{\poly(k)}$ (assuming $m \geq 2$). We conclude that this algorithm achieves the UXP reduction required in~\Cref{step:beforeReducPiUnivToSimpleCasesUxp:Item1,step:beforeReducPiUnivToSimpleCasesUxp:Item3}. 
    \hfill$\diamond$
\end{exa}

% \noindent The first two items of~\Cref{step:beforeReducPiUnivToSimpleCasesUxp}
% concerns the closure properties of $\dnf(\domain)$ under (universal)
% projection that we mentioned earlier in this section. The last two items
% establish the algorithmic conditions necessary
% for~\Cref{theorem:PuttingAllTogether}.

% We highlight once more that the two steps above do not involve (relative)
% complementation. This is advantageous for two reasons. Firstly,
% complementation can often be a difficult operation to reason about. Secondly,
% this allows to pick a domain $\domain$ that is not closed under
% complementation, as we will do in~\Cref{sec:shifted_lattices} for weak
% Presburger arithmetic.

\section{The difference normal form of propositional logic}\label{section:diffnf}
In this section, we explore computational aspects of the difference normal form. 
As a result of this exploration, we will establish~\Cref{lemma:Step1-implies-BooleanAlgebra} 
from~\Cref{section:fo-framework}.
To simplify the presentation, we first focus our study on propositional logic,
to then extend it to first-order theories. Recall that formulae in propositional
logic are from the following grammar: 
\[ 
  \Phi, \Psi \coloneqq \bot \mid \top \mid p \mid \lnot \Phi \mid \Phi \land \Psi \mid \Phi \lor \Psi
\]
where $p$ is an atomic proposition from a countable set $\mathcal{AP}$. We write
$\sem{\Phi}$ for the set of valuations $v \colon \mathcal{AP} \to \{\bot,\top\}$
making $\Phi$ true. Recall that $(\Phi - \Psi) \coloneqq \Phi \land \lnot \Psi$.

We study the problem of bringing formulae in propositional logic into
\emph{strict} difference normal form (SDF). A formula $\Psi \coloneqq \Phi_1 -
(\Phi_2 - ( \dots - (\Phi_{k-1} - \Phi_k)))$ in difference normal form is said
to be strict whenever $\sem{\Phi_k} = \emptyset$ and $\sem{\Phi_{i+1}}
\subsetneq \sem{\Phi_i}$, for every $i \in [1,k-1]$. Note that $\sem{\Psi} =
\emptyset$ if and only if $k = 1$. Every $\Phi_i$ is a negation-free DNF
formula, and so checking the strict entailment $\sem{\Phi_{j}} \subsetneq
\sem{\Phi_i}$ can be done in polynomial time in the size of $\Phi_i$ and
$\Phi_j$: it suffices to iterate through the disjuncts of $\Phi_j$, and check
that each of them is syntactically contained in a disjunct of $\Phi_i$, modulo
associativity and commutativity of $\land$ (e.g., $a \land b$ is syntactically
contained in $b \land c \land a$). We write $\SDF(\Sigma)$ for the set of all
formulae in SDF having atomic propositions from a finite set $\Sigma \subseteq
\mathcal{AP}$. We write $\NFDNF(\Sigma)$ for the set of all formulae in
negation-free DNF having atomic propositions from~$\Sigma$.

\subsection{Operations on formulae in strict difference normal form}
\label{subsection:operations-SDF}

We first study the \emph{cons} operation $(:) \colon \NFDNF(\Sigma) \times
\SDF(\Sigma) \to \SDF(\Sigma)$ that on an input $(\Phi,\Psi)$ returns an SDF
representing $\Phi - \Psi$. Given $\Psi = \Phi_1 - (\dots - \Phi_k)$, this
operator is recursively defined as follows:
\begin{center}
  \scalebox{0.95}{ $\Phi: \Psi \defeq 
    \begin{cases}
      \bot             
      &\text{if } \sem{\Phi} = \emptyset\\
      \Phi - \bot           
      &\text{if } \sem{\Phi_1} = \emptyset\\
      \Phi - (\Phi_1 - (\dots -\Phi_k)) &\text{else if } \sem{\Phi_1} \subsetneq
      \sem{\Phi}\\ 
      \Phi - \big((\Phi \mathbin{\oland} \Phi_1):(\Phi_2 - (\dots -\Phi_k))\big)
      &\text{else if } \sem{\Phi \mathbin{\oland} \Phi_1} \subsetneq
      \sem{\Phi}\\ 
      (\Phi \mathbin{\oland} \Phi_2) : (\Phi_3 - (\dots - \Phi_k)) &\text{else
      if } k \geq 3\\ 
      \bot &\text{otherwise}\end{cases}$ }
\end{center}
where, given $\Phi' = \bigvee_{i=1}^r \phi_i'$ and $\Phi'' = \bigvee_{j=1}^s
\phi_j''$ in $\NFDNF(\Sigma)$ and having $r$ and $s$ disjuncts respectively,
$\Phi' \mathbin{\oland} \Phi''$ is short for the negation-free DNF formula
$\bigvee_{i = 1}^r \bigvee_{j = 1}^s \phi_i' \land \phi_j''$. A simple induction
on $k$ shows that~$(:)$ is well-defined.

\begin{lem}\label{lemma:cons-operator-correct}
  The operator~$(:)$ respects the type ${\NFDNF(\Sigma) \times \SDF(\Sigma) \to
  \SDF(\Sigma)}$. Consider a formula~$\Phi$ in $\NFDNF(\Sigma)$, and $\Psi =
  \Phi_1 - (\dots - \Phi_k)$ in $\SDF(\Sigma)$. Then, ${\Phi : \Psi = \Phi_1' -
  (\dots - \Phi_\ell')}$ with $\ell \leq k+1$, $\sem{\Phi_1'} \subseteq
  \sem{\Phi}$ and, for each $j \in [1,\ell]$, $\Phi_j' = \oland Y_j$ for some
  $Y_j \subseteq \{\bot,\Phi,\Phi_1,\dots,\Phi_k\}$. Furthermore, $\sem{\Phi : \Psi}
  = \sem{\Phi} \setminus \sem{\Psi}$.
\end{lem}

\begin{proof}
  The proof is by induction on the length~$k$ of $\Psi$.
  \begin{description}
    \item[base case: $k = 1$] We have $\Psi = \Phi_1$ and $\sem{\Psi} =
      \emptyset$. By definition of~$(:)$, either $\Phi : \Psi = \bot$ or
      $\Phi : \Psi = \Phi - \bot$, depending on whether $\sem{\Phi} =
      \emptyset$. In both cases, the lemma is satisfied.
    
    \item[induction step: $k \neq 1$] In this case, $\sem{\Phi_1} \neq
    \emptyset$. If $\sem{\Phi} = \emptyset$ then $\Phi : \Psi = \emptyset$, and
    again the lemma is trivially satisfied. Below, assume $\sem{\Phi} \neq
    \emptyset$. We divide the proof following the last three cases in the
    definition of~$(:)$\,:
    \begin{description}
        \item[case: $\sem{\Phi_1} \subsetneq \sem{\Phi}$]
          By definition, $\Phi : \Psi = \Phi - (\Phi_1 - (\dots - \Phi_k))$,
          which belongs to $\SDF(\Sigma)$ and respects all properties of the
          lemma. 

        \item[case: $\sem{\Phi \oland \Phi_1} \subsetneq \sem{\Phi}$ and not $\sem{\Phi_1} \subsetneq \sem{\Phi}$]
          Let $\Psi' \coloneqq \Phi_2 - (\dots - \Phi_k)$. By definition, $\Phi
          : \Psi = \Phi - ((\Phi \oland \Phi_1) : \Psi')$. Since $\Psi'$ has
          length $k-1$, by the induction hypothesis, ${(\Phi \oland \Phi_1)} :
          \Psi' = \Phi_1' -(\dots - \Phi_\ell')$ where~${\ell \leq k}$,
          ${\sem{\Phi_1'} \subseteq \sem{\Phi \oland \Phi_1}}$, and for every $j
          \in [1,\ell]$, $\Phi_j' = \oland Y_j$ for some $Y_j \subseteq \{\bot,
          {(\Phi \oland \Phi_1)},\Phi_2,\dots,\Phi_k\}$. Moreover, $\sem{(\Phi
          \oland \Phi_1) : \Psi'} = \sem{\Phi \oland \Phi_1} \setminus
          \sem{\Psi'}$. Since $\sem{\Phi_1'} \subseteq \sem{\Phi \oland \Phi_1}
          \subsetneq \sem{\Phi}$, the formula $\Phi - (\Phi_1' - (\dots -
          \Phi_\ell'))$ is in $\SDF(\Sigma)$, and all properties required by the
          lemma are satisfied.
        \item[case: not $\sem{\Phi \oland \Phi_1} \subsetneq \sem{\Phi}$ and $k \geq 3$]
          Let $\Psi' \coloneqq \Phi_3 - (\dots - \Phi_k)$. By definition, $\Phi
          : \Psi = (\Phi \oland \Phi_2) : \Psi'$. Since $\Psi'$ has length $k-2$,
          by the induction hypothesis,  $(\Phi \oland \Phi_2) : \Psi' = \Phi_1'
          - (\dots - \Phi_\ell')$ with $\ell \leq k-1$, $\sem{\Phi_1'} \subseteq
          \sem{\Phi \oland \Phi_2}$, $\sem{(\Phi \oland \Phi_2) : \Psi'} =
          {\sem{\Phi \oland \Phi_2} \setminus \sem{\Psi'}}$, and for every $j
          \in [1,\ell]$, $\Phi_j' = \oland Y_j$ for some ${Y_j \subseteq
          {\{\bot,(\Phi \oland \Phi_2), \Phi_3,\dots,\Phi_k\}}}$. Since
          $\sem{\Phi : \Psi} = \sem{(\Phi \oland \Phi_2) : \Psi'}$, to conclude
          the proof of this case it suffices to show that $\sem{\Phi \oland
          \Phi_2} \setminus \sem{\Psi'} = \sem{\Phi} \setminus \sem{\Psi}$. 
          
          By definition of~$\oland$, $\sem{\Phi \oland \Phi_1} \subseteq
          \sem{\Phi}$, and since $\sem{\Phi \oland \Phi_1}$ is not strictly
          included in~$\sem{\Phi}$, we have $\sem{\Phi \oland \Phi_1} =
          \sem{\Phi}$, which implies $\sem{\Phi} \subseteq \sem{\Phi_1}$. From
          the validity ``${A \subseteq B}$ implies ${A \setminus (B \setminus C)
          = A \cap C}$'' we get ${\sem{\Phi} \setminus \sem{\Psi} = \sem{\Phi}
          \cap (\sem{\Phi_2} \setminus \sem{\Psi'})}$. Then, from the validity
          ${A \cap (B \setminus C) = (A \cap B) \setminus C}$ together with the
          definition of $\oland$, we conclude that~$\sem{\Phi} \setminus
          \sem{\Psi} = \sem{\Phi \oland \Phi_2} \setminus \sem{\Psi'}$.
        
        \item[case: not $\sem{\Phi \oland \Phi_1} \subsetneq \sem{\Phi}$ and $k
        = 2$] In this case $\Psi = \Phi_1 - \bot$. By definition $\Phi : \Psi =
        \bot$, and to conclude the proof it suffices to show that $\sem{\Phi}
        \setminus \sem{\Psi} = \emptyset$. This follows from the fact that, as
        in the previous case, we have $\sem{\Phi} \subseteq \sem{\Phi_1}$. 
        \qedhere
      \end{description}
  \end{description}
\end{proof}

We introduce the operations of \emph{union}~$\ccor$, \emph{intersection}~$\cand$
and \emph{difference}~$\cnot$, with type $\SDF(\Sigma) \times \SDF(\Sigma) \to
\SDF(\Sigma)$, that given a pair of SDFs $(\Phi,\Psi)$ compute an SDF
representing $\Phi \lor \Psi$, $\Phi \land \Psi$ and $\Phi - \Psi$,
respectively. These three operators have mutually recursive definitions. Below,
whenever their length is not one (i.e., $\sem{\Phi} \neq \emptyset$ or
$\sem{\Psi} \neq \emptyset$), we assume $\Phi$ and $\Psi$ to be respectively
$\Phi = \Phi_1 - \Phi'$ and $\Psi = \Psi_1 - \Psi'$, with $\Phi_1,\Psi_1 \in
\NFDNF(\Sigma)$ and $\Phi',\Psi' \in \SDF(\Sigma)$. {\allowdisplaybreaks%
\begin{align*}
  \Phi \cand \Psi & := 
    \begin{cases}
      \bot\\
      (\Phi_1 \oland \Psi_1) : (\Phi' \ccor \Psi') 
    \end{cases}
    &\begin{aligned}
      \text{if } \sem{\Phi} = \emptyset \text{ or } \sem{\Psi} = \emptyset,\\
      \hfill\text{otherwise.} 
    \end{aligned}
    \\[3pt]
  \Phi \cnot \Psi & := 
    \begin{cases}
      \Phi\\
      \Phi_1 : (\Phi' \ccor \Psi)
    \end{cases}
    &\begin{aligned}
      \text{if } \sem{\Phi} = \emptyset \text{ or } \sem{\Psi} = \emptyset,\\
      \hfill\text{otherwise.} 
    \end{aligned}
    \\[3pt]
  \Phi \ccor \Psi & := 
    \begin{cases}
      \Psi \\
      \Phi \\
      \Phi_1 : (\Phi' \cnot \Psi) \\
      (\Phi_1 \lor \Psi_1) : \big((\Phi' \ccor \Psi') \cnot ((\Phi_1 \oland \Psi_1) : (\Phi' \cand \Psi'))\big) 
    \end{cases}
    &\begin{aligned}
      \hfill \text{if } \sem{\Phi} = \emptyset,\\
      \hfill \text{else if } \sem{\Psi} = \emptyset,\\ 
      \hfill \text{else if } \sem{\Psi_1} \subseteq \sem{\Phi_1},\\ 
      \hfill \text{otherwise}.
    \end{aligned}
\end{align*}%
}

We show that the operators $\cand$, $-$ and $\ccor$ always yield an element of
$\SDF(\Sigma)$. In particular, the computation of such an element always
terminates. 

\begin{lem}\label{lemma:operation-computation}
  The operators~$\ccor$,~$\cand$,~$\cnot$ are of type ${\SDF(\Sigma) \times
  \SDF(\Sigma) \to \SDF(\Sigma)}$. Given two formulae~$\Phi$ and $\Psi$ in
  $\SDF(\Sigma)$, we have $\sem{\Phi \ccor \Psi} = \sem{\Phi} \cup \sem{\Psi}$,
  $\sem{\Phi \cand \Psi} = \sem{\Phi} \cap \sem{\Psi}$, and $\sem{\Phi \cnot
  \Psi} = \sem{\Phi} \setminus \sem{\Psi}$. Consider $\oplus \in
  \{\ccor,\cand,\cnot\}$, $\Phi = \Phi_1 - (\dots - \Phi_i)$ and $\Psi = \Psi_1
  - (\dots - \Psi_j)$ and $\Phi \oplus \Psi = \Phi_1' - (\dots - \Phi_\ell')$,
  and let ${k \in [1,\ell]}$. We have $\sem{\Phi_k'} = \sem{ \bigvee G_k}$ where
  $G_k$ is a subset of formulae from~$\{\phi,\,\psi,\, \phi \land \psi : {\phi
  \in \{\Phi_1,\dots,\Phi_i\}},\ \psi \in \{\Psi_1,\dots,\Psi_j\}\}$.
\end{lem}

\begin{proof}
  We show the statement by well-founded induction. The domain of the induction
  is $(\SDF(\Sigma) \times \SDF(\Sigma), <)$, whose elements correspond to
  inputs of the operators $\ccor$, $\cand$, $\cnot$, where the order $<$ is
  defined as follows: for every $(X,Y)$ and $(W,Z)$ in $\SDF(\Sigma) \times
  \SDF(\Sigma)$,  
  with $X = X_1 - (\dots - X_{\ell_X})$, ${Y = Y_1 - (\dots - Y_{\ell_Y})}$, $W
  = W_1 - (\dots - W_{\ell_W})$ and $Z = Z_1 - (\dots - Z_{\ell_Z})$,

  \begin{center}
    $(X,Y) < (W,Z)$ if and only if 
    \begin{tabular}[t]{ll}
      (i) & $\sem{Y_1} \subsetneq \sem{Z_1}$, or\\ 
      (ii) & $\ell_X < \ell_W$ and $\sem{Y_1} \subseteq \sem{Z_1}$.  
    \end{tabular}
  \end{center}
  According to~$<$, the only (required) base case corresponds to $(X,Y)$ such
  that $\ell_X = 1$ and $\sem{Y} = \emptyset$. For simplicity, we treat as base
  cases all pairs $(X,Y)$ where $\ell_X = 1$ or $\sem{Y} = \emptyset$.

  Below, we refer to the objects in the statement of the lemma.

  \begin{description}
    \item[base case: $i = 1$] In this case, $\sem{\Phi} = \emptyset$. We have
    ${\Phi \ccor \Psi = \Psi}$, $\Phi \cand \Psi = \bot$ and $\Phi \cnot \Psi =
    \Phi$. In all cases, the statement of the lemma is satisfied. 
    \item[base case: $\sem{\Psi} = \emptyset$]
    We have ${\Phi \ccor \Psi = \Phi}$, $\Phi \cand \Psi = \bot$ and $\Phi \cnot
    \Psi = \Phi$. Again, the statement of the lemma is trivially satisfied.
  \end{description}
  For the induction step, assume $i \geq 2$ and $\sem{\Psi} \neq \emptyset$.
  Note that then $\sem{\Phi} \neq \emptyset$ and ${j \geq 2}$, by definition of
  SDF. For brevity, let $\Phi' \coloneqq \Phi_2 - (\dots - \Phi_i)$ and ${\Psi'
  \coloneqq \Psi_2 - (\dots - \Psi_j)}$. We analyse all operators separately.
  \begin{description} 
    \item[induction step: case $\Phi \cand \Psi$]
      By definition, $\Phi \cand \Psi = (\Phi_1 \oland \Psi_1) : (\Phi' \ccor
      \Psi')$. From the fact that~$\sem{\Psi_2} \subsetneq \sem{\Psi_1}$, and by
      the induction hypothesis, we conclude that the statement of the lemma
      holds for $\Phi' \ccor \Psi'$. In particular, this means that
      \begin{enumerate}
        \item\label{cand:i1} $\sem{\Phi' \ccor \Psi'} = \sem{\Phi'} \cup
        \sem{\Psi'}$, 
        \item\label{cand:i2} $\Phi' \ccor \Psi' = (\Phi_1' - (\dots -
        \Phi_\ell'))$ where, for every $k \in [1,\ell]$, $\sem{\Phi_i'} =
        \sem{\bigvee G_k}$ with $G_k\subseteq \{\phi,\, \psi,\, \phi \land \psi
        : {\phi \in \{\Phi_2,\dots,\Phi_i\}},\ \psi \in
        \{\Psi_2,\dots,\Psi_j\}\}$. 
      \end{enumerate}
      By~\Cref{lemma:cons-operator-correct}, $(\Phi_1 \oland \Psi_1) : (\Phi_1'
      - (\dots - \Phi_\ell')) = (\Psi_1' - (\dots - \Psi_{r}'))$ where 
      \begin{enumerate}
        \setcounter{enumi}{2}
        \item\label{cand:i3} $\sem{(\Phi_1 \oland \Psi_1) : (\Phi_1' - (\dots -
        \Phi_\ell'))} = \sem{\Phi_1 \oland \Psi_1} \setminus \sem{(\Phi_1' -
        (\dots - \Phi_\ell'))}$, 
        \item\label{cand:i4} for all $k \in [1,r]$, $\Psi_k' = \oland Y_k$ with
        $Y_k \subseteq \{\bot,(\Phi_1 \oland
        \Psi_1),\Phi_1',\dots,\Phi_\ell'\}$.
      \end{enumerate}
      From~\Cref{cand:i1,cand:i3}, together with identities of set theory, we
      get $\sem{\Phi \cand \Psi} = \sem{\Phi} \cap \sem{\Psi}$: 
      \begin{align*}
        \sem{\Phi \cand \Psi} 
          &= \sem{\Phi_1 \oland \Psi_1} \setminus (\sem{\Phi'} \cup \sem{\Psi'}) 
            &\text{by~\Cref{cand:i1,cand:i3}}\\ 
          &= (\sem{\Phi_1} \cap \sem{\Psi_1}) \setminus (\sem{\Phi'} \cup \sem{\Psi'}) 
            &\text{definition~of~$\oland$}\\
          &= (\sem{\Phi_1} \setminus \sem{\Phi'}) \cap (\sem{\Psi_1} \setminus \sem{\Psi'}) 
            &\text{set-theoretical validity}\\
            &&\hspace{-20pt} \scalebox{0.9}{$(A \setminus B) \cap (C \setminus D) = (A \cap C) \setminus (B \cup D)$}\\ 
          &= \sem{\Phi} \cap \sem{\Psi}
            &\text{definition~of~$\Phi$ and $\Psi$.}
      \end{align*}
      From~\Cref{cand:i2,cand:i4} and by definition of $\oland$, for every $k
      \in [1,r]$, $\sem{\Psi_k'} = \sem{\bigvee D_k}$ where $D_k$ is a set of
      formulae of the form $\phi_1 \land \dots \land \phi_n$ with $\phi_{p} \in
      \{\Phi_1,\dots,\Phi_i,\Psi_1,\dots,\Psi_j\}$ for every $p \in [1,n]$
      (recall here that $\sem{\Psi_j} = \sem{\Phi_i} = \sem{\bot} = \emptyset$,
      by definition of SDF). Since $\Phi$ and $\Psi$ are in SDF, it is not
      necessary for $\phi_1 \land \dots \land \phi_n$ to contain distinct
      formulae among $\Phi_1,\dots,\Phi_j$, as one of them will imply the
      others. The same is true for formulae among $\Psi_1,\dots,\Psi_j$.
      Therefore, $\phi_1 \land \dots \land \phi_n$ can be simplified into a
      formula of the form $\psi_1$, $\psi_2$ or $\psi_1 \land \psi_2$, with
      $\psi_1 \in \{\Phi_1,\dots,\Phi_i\}$ and $\psi_2 \in
      \{\Psi_1,\dots,\Psi_j\}$. This completes the proof for the case~$\cand$.

      \item[induction step: case $\Phi \cnot \Psi$]
        By definition, $\Phi \cnot \Psi = \Phi_1 : (\Phi' \ccor \Psi)$. Since
        $\Phi'$ has length $i-1$, by the induction hypothesis the statement of
        the lemma holds for $\Phi' \ccor \Psi$. The rest of the proof follows
        similarly to the previous case. In particular, $\sem{\Phi \cnot \Psi} =
        \sem{\Phi} \setminus \sem{\Psi}$ is established thanks to the
        set-theoretical validity $(A \setminus B) \setminus C = A \setminus (B
        \cup C)$ together with $\sem{\Phi_1 : (\Phi' \ccor \Psi)} = \sem{\Phi_1}
        \setminus (\sem{\Phi'} \cup \sem{\Psi})$, which follows by the induction
        hypothesis and~\Cref{lemma:cons-operator-correct}. 

      \item[induction step: case $\Phi \ccor \Psi$, and $\sem{\Psi_1} \subseteq \sem{\Phi_1}$]
        By definition, $\Phi \ccor \Psi = \Phi_1 : (\Phi' \cnot \Psi)$. Once
        more, since $\Phi'$ has length $i-1$, by the induction hypothesis the
        statement of the lemma holds for $\Phi' \cnot \Psi$. The rest of the
        proof follows similarly to the previous cases. In particular, $\sem{\Phi
        \ccor \Psi} = \sem{\Phi} \cup \sem{\Psi}$ follows from the validity ``$C
        \subseteq A$ implies $(A \setminus B) \cup C = (A \setminus (B \setminus
        C))$''. 
      \item[induction step: case $\Phi \ccor \Psi$, and $\sem{\Psi_1}
      \not\subseteq \sem{\Phi_1}$] This case is a bit more involved, so we give
      full details. By definition, 
      \begin{center}
        $\Phi \ccor \Psi = (\Phi_1 \lor \Psi_1) : \big((\Phi' \ccor \Psi') \cnot ((\Phi_1 \oland \Psi_1) : (\Phi' \cand \Psi'))\big).$
      \end{center}
      We proceed with a series of applications
      of~\Cref{lemma:cons-operator-correct} and the induction hypothesis. Since
      $\sem{\Psi_2} \subsetneq \sem{\Psi_1}$, by the induction hypothesis,
      \begin{enumerate}
        \item\label{cor:i1} $\sem{\Phi' \cand \Psi'} = \sem{\Phi'} \cap
        \sem{\Psi'}$,
        \item\label{cor:i2} $\Phi' \cand \Psi' = \Phi_1' - (\dots -
        \Phi_{\ell}')$ where, for all $k \in [1,\ell]$, $\sem{\Phi_k'} =
        \sem{\bigvee G_k}$\\ with $G_k\subseteq \{\phi,\,\psi,\, \phi \land \psi :
        {\phi \in \{\Phi_2,\dots,\Phi_i\}},\ \psi \in \{\Psi_2,\dots,\Psi_j\}
        \}$, 
        \item\label{cor:i3} $\sem{\Phi' \ccor \Psi'} = \sem{\Phi'} \cup
        \sem{\Psi'}$,
        \item\label{cor:i4} $\Phi' \ccor \Psi' = \Phi_1'' - (\dots -
        \Phi_{s}'')$ where, for all $k \in [1,s]$, $\sem{\Phi_k''} =
        \sem{\bigvee D_k}$\\ with $D_k\subseteq \{\phi,\,\psi,\, \phi \land \psi :
        {\phi \in \{\Phi_2,\dots,\Phi_i\}},\ \psi \in \{\Psi_2,\dots,\Psi_j\}
        \}$. 
      \end{enumerate}
      By~\Cref{lemma:cons-operator-correct}, $(\Phi_1 \oland \Psi_1) : (\Phi'
      \cand \Psi') = \Psi_1' - (\dots - \Psi_r')$ where 
      \begin{enumerate}
        \setcounter{enumi}{4}
        \item\label{cor:i5} $\sem{(\Phi_1 \oland \Psi_1) : (\Phi' \cand \Psi')}
        = \sem{\Phi_1 \oland \Psi_1} \setminus \sem{\Phi' \cand \Psi'}$,
        \item\label{cor:i6} for all $k \in [1,r]$, $\Psi_k' = \oland Y_k$ where
        $Y_k \subseteq \{\bot,(\Phi_1 \oland
        \Psi_1),\Phi_1',\dots,\Phi_\ell'\}$, 
        \item\label{cor:i7} $\sem{\Psi_1'} \subseteq \sem{\Phi_1 \oland
        \Psi_1}$.
      \end{enumerate}
      Since~$\sem{\Psi_1} \not\subseteq \sem{\Phi_1}$ we have ${\sem{\Phi_1
      \oland \Psi_1} \subsetneq \sem{\Psi_1}}$. Because of~\Cref{cor:i7}, by
      induction hypothesis we get~$(\Phi_1'' - (\dots - \Phi_{s}'')) \cnot
      (\Psi_1' - (\dots - \Psi_r')) = \Psi_1'' -(\dots - \Psi_{t}'')$, where 
      \begin{enumerate}
        \setcounter{enumi}{7}
        \item\label{cor:i8} $\sem{(\Phi_1'' - (\dots - \Phi_{s}'')) \cnot
        (\Psi_1' - (\dots - \Psi_r'))} = \big(\sem{\Phi_1'' - (\dots -
        \Phi_{s}'')} \setminus \sem{ \Psi_1' - (\dots - \Psi_r')}\big)$,
        \item\label{cor:i9} for every $k \in [1,\ell]$, $\sem{\Phi_k'} =
        \sem{\bigvee E_k}$\\ with $E_k\subseteq \{\phi,\, \psi,\, \phi \land
        \psi : {\phi \in \{\Phi_1'',\dots,\Phi_s''\}},\ \psi \in
        \{\Psi_1',\dots,\Psi_r'\} \}$.
      \end{enumerate}
      By~\Cref{lemma:cons-operator-correct}, $(\Phi_1 \lor \Psi_1) : (\Psi_1'' -
      (\dots - \Psi_t'')) = \widetilde{\Phi}_1 - (\dots - \widetilde{\Phi}_u)$
      where 
      \begin{enumerate}
        \setcounter{enumi}{9}
        \item\label{cor:i10} $\sem{(\Phi_1 \lor \Psi_1) : (\Psi_1'' - (\dots -
        \Psi_t''))} = \sem{\Phi_1 \lor \Psi_1} \setminus \sem{\Psi_1'' - (\dots
        - \Psi_t'')}$,
        \item\label{cor:i11} for all $k \in [1,u]$, $\widetilde{\Psi}_k = \oland
        Z_k$ where $Z_k \subseteq \{\bot,(\Phi_1 \lor
        \Psi_1),\Psi_1'',\dots,\Psi_t''\}$.
      \end{enumerate}
      Now, by~\Cref{cor:i1,cor:i3,cor:i5,cor:i8,cor:i10} and by definition of
      $\oland$, we have 
      \begin{align*}
        \sem{\Phi \ccor \Psi} 
        &= \sem{(\Phi_1 \lor \Psi_1) : \big((\Phi' \ccor \Psi') \cnot ((\Phi_1 \oland \Psi_1) : (\Phi' \cand \Psi'))\big)}\\
        &=(\sem{\Phi_1} \cup \sem{\Psi_1}) \,{\setminus}\,
        \big((\sem{\Phi'} \cup \sem{\Psi'}) 
        \setminus ((\sem{\Phi_1} \cap \sem{\Psi_1}) \setminus (\sem{\Phi'} \cap \sem{\Psi'}))\big)\\
        &=(\sem{\Phi_1} \setminus \sem{\Phi'}) \cup (\sem{\Psi_1} \setminus \sem{\Psi'})\\ 
        &= \sem{\Phi} \cup \sem{\Psi},
      \end{align*}
      where the second to last equivalence follows from the validity ``$B
      \subseteq A$ and $D \subseteq C$ imply $(A \setminus B) \cup (C \setminus
      D) = (A \cup C) \setminus ((B \cup D) \setminus ((A \cap C) \setminus (B
      \cap D)))$''.

      From~\Cref{cor:i2,cor:i4,cor:i6,cor:i9,cor:i11}, we conclude that, for
      every $k \in [1,u]$, $\sem{\widetilde{\Psi}_k} = \sem{\bigvee F_k}$ where
      $F_k$ is a set of formulae of the form $\phi_1 \land \dots \land \phi_n$
      with $\phi_{p} \in \{\Phi_1,\dots,\Phi_i,\Psi_1,\dots,\Psi_j\}$ for every
      $p \in [1,n]$. As explained in the first of the induction steps we
      considered, $\phi_1 \land \dots \land \phi_n$ can be simplified into a
      formula of the form $\psi_1$, $\psi_2$ or $\psi_1 \land \psi_2$, with
      $\psi_1 \in \{\Phi_1,\dots,\Phi_i\}$ and $\psi_2 \in
      \{\Psi_1,\dots,\Psi_j\}$; concluding the proof.
      \qedhere
  \end{description}
\end{proof}

\subsection{Complexity of union, intersection and difference}

We now study the complexity of performing the various operations we have introduced.

\begin{lem}\label{lemma:SDF-longest-chain}
  Let $\Phi_1 - (\dots - \Phi_i)$, $\Psi_1 - (\dots - \Psi_j)$ and  $\Phi_1' -
  (\dots - \Phi_\ell')$ be formulae in $\SDF(\Sigma)$. Suppose that, for every
  $k \in [1,\ell]$, $\sem{\Phi_k'} = \sem{ \bigvee G_k }$, where $G_k \subseteq
  \{\phi,\, \psi,\, {\phi \land \psi} : {\phi \in \{\Phi_1,\dots,\Phi_i\}},\
  \psi \in \{\Psi_1,\dots,\Psi_j\} \}$. Then, $\ell \leq i \cdot j$.
\end{lem}

\begin{proof}
  Define $\Phi_0 \coloneqq \Psi_0 \coloneqq \top$. For $u \in [1,i]$ and $v \in
  [1,j]$, set $\alpha_{u,v} \coloneqq (\Phi_{u-i} - \Phi_{u}) \land (\Psi_{v-1}
  - \Psi_v)$. Every element of $G_k$, and thus every $\Phi_k'$, 
  can be represented as a union of elements from $\{\alpha_{u,v} : (u,v) \neq (1,1) \}$. 
  Indeed, let $\ell \in [0,i]$ and $r \in [0,j]$. 
  Then, $\Phi_\ell \land \Psi_r$
  is equivalent to $(\bigvee_{u=\ell}^{i-1} (\Phi_{u} - \Phi_{u+1})) \land
  (\bigvee_{v=r}^{j-1} (\Psi_v - \Psi_{v+1}))$. (Note that for $\ell = i$ the
  left disjunction simplifies as $\bot$, as expected since $\sem{\Phi_k} = \emptyset$.)
  Expanding this formula into DNF yields a union of elements of the form $\alpha_{u,v}$. 
  Since  $\Phi_1 - (\dots - \Phi_i)$,
  $\Psi_1 - (\dots - \Psi_j)$ belong to $\SDF(\Sigma)$, 
  the sets $\sem{\alpha_{u,v}}$ are pairwise disjoint:
  $\sem{\alpha_{u,v}} \cap
  \sem{\alpha_{u',v'}} = \emptyset$ whenever $(u,v) \neq (u',v')$.
  Finally, because $\Phi_1' - (\dots - \Phi_\ell')$ belongs to $\SDF(\Sigma)$, 
  and each $\Phi_k'$ is a union of
  elements from $\{\alpha_{u,v} : (u,v) \neq (1,1) \}$, 
  we conclude that $k \leq i \cdot j$.
\end{proof}

\begin{lem}\label{lemma:propcalc:BoolAlg}
  There is an algorithm that given $\oplus \in \{\ccor,\cand,\cnot\}$, and 
   ${\Phi = \Phi_1 - (\dots - \Phi_i)}$ and $\Psi = \Psi_1 - (\dots -
  \Psi_j)$ in $\SDF(\Sigma)$, computes $\Phi \oplus \Psi$ in time
  $2^{\poly(i,j)} \cdot \poly(n,\abs{\Sigma})$, where $n$ is the maximum number
  of disjuncts in some of the $\NFDNF(\Sigma)$ formulae
  $\Phi_1,\dots,\Phi_i,\Psi_1,\dots,\Psi_j$. 
\end{lem}

\begin{proof}
  The algorithm, whose pseudocode is given in~\Cref{algorithm:computeBooleanSDF}
  (function~\textsc{Apply}), is simple. It starts by computing the set
  $\mathbb{G}$ of all $G \subseteq \{\phi,\,\psi,\,\phi \oland \psi : \phi \in
  \{\Phi_1,\dots,\Phi_i\}, \psi \in \{\Psi_1,\dots,\Psi_j\} \}$ representing
  formulae that might be needed in order to compute $\Phi \oplus \Psi$,
  according to~\Cref{lemma:operation-computation}. Afterwards
  (function~\textsc{Apply'}), the algorithm simply follows the definitions of
  the operations~$\cand$,~$\cnot$ and~$\ccor$ given
  in~\Cref{subsection:operations-SDF}, making sure at each step to ``normalise''
  the computed formula $\Phi_1' - (\dots - \Phi_k')$ so that
  $\Phi_1',\dots,\Phi_k'$ are formulae given by elements in $\mathbb{G}$. This
  normalisation is done by the function~\textsc{Normalise}. In this function,
  the sentence ``Let $\Phi_1'$ be the representative of $\Phi_1$ in
  $\mathbb{G}$'' indicates that $\Phi_1' = \bigvee G$ for some $G \in
  \mathbb{G}$, and $\sem{\Phi_1'} = \sem{\Phi_1}$. This normalisation is
  required for the $\NFDNF(\Sigma)$ formulae to stay of polynomial size during
  the procedure. The pseudocode of the cons function $(:)$ follows exactly its
  mathematical definition given at the beginning
  of~\Cref{subsection:operations-SDF}, and is thus omitted
  from~\Cref{algorithm:computeBooleanSDF}.

  Since the algorithm simply follows the recursive definitions of the various
  operations, and by~\Cref{lemma:operation-computation} we can restrict the
  formulae of the output to be given by elements in $\mathbb{G}$, its
  correctness is immediate.

\begin{figure}
  \begin{algorithmic}[1]
    \Function{Apply}{$\oplus$, $\Phi$, $\Psi$} \State
    \hspace{-0.55cm}\textbf{input:}  $\oplus$: operation among $\cand$, $\cnot$
    and $\ccor$; \ $\Phi$ and $\Psi$: two formulae in SDF. \State
    \hspace{-0.55cm}\textbf{output:} A formula in SDF equivalent to $\Phi \oplus
    \Psi$.
    \medskip
    \State Let $\Phi = \Phi_1 - (\dots - \Phi_i)$ and $\Psi = (\Psi_1 - (\dots -
    \Psi_j))$. \State Compute the set $\mathbb{G}$ containing all sets $G$ such
    that
    \vspace{3pt}
    \State 
    \begin{minipage}{0.95\linewidth} 
      \begin{itemize}
        \setlength{\itemsep}{0pt}
        \item $G \subseteq \{\phi,\,\psi,\,\phi \oland \psi : \phi \in
        \{\Phi_1,\dots,\Phi_i\}, \psi \in \{\Psi_1,\dots,\Psi_j\} \}$, 
        \item for every $k \in [1,i]$, $\Phi_k$ occurs at most once in a formula
        of $G$\\ 
        \emph{(we consider $\phi$ and $\psi$ to be both occurring in $\phi
        \oland \psi$)}, and 
        \item for every $k \in [1,j]$, $\Psi_k$ occurs at most once in a formula
        of $G$.
      \end{itemize}
    \end{minipage}
    \vspace{2pt}
    \State \textbf{return} \Call{Apply'}{$\oplus$, $\Phi$, $\Psi$, $\mathbb{G}$}
    \EndFunction
    \medskip 
    \Function{Apply'}{$\oplus$, $\Phi$, $\Psi$, $\mathbb{G}$} \label{alg:apply}
    \If{$\sem{\Phi}= \emptyset$ or $\sem{\Psi} = \emptyset$} \Switch{$\oplus$}
    \Case{$\cand$} \textbf{return} a representative of $\bot$ in $\mathbb{G}$
    \EndCase \Case{$\cnot$} \textbf{return} $\Phi$ \EndCase \Case{$\ccor$}
    \textbf{return} ($\sem{\Phi} = \emptyset$ ? $\Psi$ : $\Phi$) \EndCase
    \EndSwitch \EndIf \State Let $\Phi = \Phi_1 - \Phi'$ and $\Psi = \Psi_1 -
    \Psi'$ \State \textbf{var} result \Switch{$\oplus$}
    \Case{$\cand$}\label{line:cand-recurse} result $\gets$ $(\Phi_1 \oland
    \Psi_1) : \text{\Call{Apply'}{$\ccor, \Phi', \Psi', \mathbb{G}$}}$ \EndCase
    \Case{$\cnot$}\label{line:cnot-recurse} result $\gets$ $\Phi_1 :
    \text{\Call{Apply'}{$\ccor,\Phi',\Psi,\mathbb{G}$}}$ \EndCase \Case{$\ccor$}
    \If{$\sem{\Psi_1} \subseteq \sem{\Phi_1}$}\label{line:cor-recurse-one}
    result $\gets$ $\Phi_1 : \text{\Call{Apply'}{$\cnot, \Phi', \Psi,
    \mathbb{G}$}}$ \Else \State result $\gets$ \textsc{Normalise}($(\Phi_1
    \oland \Psi_1) : \text{\Call{Apply'}{$\cand, \Phi', \Psi', \mathbb{G}$}}$,
    $\mathbb{G}$)\label{line:cor-recurse-two}  
    \State result $\gets$ \textsc{Apply'}($\cnot$, \Call{Apply'}{$\ccor$,
    $\Phi'$, $\Psi'$, $\mathbb{G}$}, result, $\mathbb{G}$) \State result $\gets$
    $(\Phi_1 \lor \Psi_1) : \text{result}$
          \label{line:cor-recurse-three} 
        \EndIf \EndCase \EndSwitch \State \textbf{return}
      \Call{Normalise}{result, $\mathbb{G}$} \EndFunction
    \medskip 
    \Function{Normalise}{$\Phi, \mathbb{G}$} \If{$\sem{\Phi} = \emptyset$}
    \textbf{return} a representative of $\bot$ in $\mathbb{G}$ \EndIf \State Let
    $\Phi = \Phi_1 - \Phi'$ \State Let $\Phi_1'$ be a representative of $\Phi_1$
    in $\mathbb{G}$ \State \textbf{return} $\Phi_1' -
    \text{\Call{Normalise}{$\Phi', \mathbb{G}$}}$ \EndFunction
  \end{algorithmic}
  \caption{Computing Boolean operations on SDFs.}
  \label{algorithm:computeBooleanSDF}
\end{figure}

  Let us discuss the runtime of~\textsc{Apply}. First, note that computing
  $\mathbb{G}$ requires $2^{\poly(i, j)}$ time: to construct the sets in
  $\mathbb{G}$, we first iterate through the subsets $A \subseteq
  \{\Phi_1,\dots,\Phi_i\}$ and $B \subseteq \{\Psi_1,\dots,\Psi_j\}$. Every such
  pair $(A,B)$ indicates what formulae appear in the set of $\mathbb{G}$ we are
  constructing. Afterwards, we iterate through all possible injections $f \colon
  A' \to B$ such that~$A' \subseteq A$ and $|A'| \leq |B|$, and for a choice of
  $f$ construct the set $\{ \phi : \phi \in A \setminus A' \} \cup \{ \phi \land
  \psi : \phi \in A' \text{ and } f(\phi) = \psi \} \cup \{ \psi : \psi \in B
  \setminus f(A') \}$. The total number of iterations (and hence of sets in
  $\mathbb{G}$) is bounded by {\allowdisplaybreaks
  \begin{align*} 
    &\underbrace{\sum_{l=0}^i \sum_{r=0}^j \Big({i \choose l} {j \choose r}}_{\text{subsets $A$ and $B$}} \underbrace{\sum_{d=0}^{\min(l,r)} \Big( {l \choose d} {r \choose d} d! \Big)}_{\text{injections }f} \Big)\\
    % ={}& \sum_{l=0}^i \sum_{r=0}^j \sum_{d=0}^{\min(l,r)} 
    % \Big( \frac{i! \cdot j! \cdot l! \cdot r! \cdot d!}{l!(i-l)! \cdot r!(j-r)! \cdot d! (l-d)! \cdot d! (r-d)!} \Big)\\
    %={}& \sum_{l=0}^i \sum_{r=0}^j \sum_{d=0}^{\min(l,r)} 
    %\Big( \frac{i! \cdot j!}{(i-l)! \cdot (j-r)! \cdot d! \cdot (l-d)! \cdot (r-d)!} \Big)\\
    \leq{}& 3 \cdot \sum_{l=0}^i \sum_{r=0}^j
    \Big( \frac{i! \cdot j!}{(i-l)! \cdot (j-r)!} \Big)
    &\text{using } \sum_{d=0}^{\min(l,r)} \frac{1}{d!} < e\\
    % ={}& 3 \cdot \Big(\sum_{l=0}^i \frac{i!}{(i-l)!}\Big) \cdot \Big(\sum_{r=0}^j \frac{j!}{(j-r)!}\Big)\\
    \leq{}& 3 \cdot (i+1)! \cdot (j+1)! \leq 2^{\poly(i,j)}.
  \end{align*}
  }

  Let us now move to the function \textsc{Normalise}, that given $\mathbb{G}$
  computed as above, and a formula $\Phi'' = \Phi_1'' - (\dots - \Phi_k'')$ in
  $\SDF(\Sigma)$ where each $\Phi_\ell''$ ($\ell \in [1,k]$) is a can be
  represented with a set in $\mathbb{G}$, returns a formula $(\bigvee G_1) -
  (\dots - (\bigvee G_k))$ in $\SDF(\Sigma)$ such that $G_1,\dots,G_k \in
  \mathbb{G}$ and $\sem{\Phi_\ell''} = \sem{\bigvee G_\ell}$ for all $\ell \in
  [1,k]$. This function runs in time 
  \begin{align*}
    \underbrace{\vphantom{\poly(i,j,n,m,\abs{\Sigma})}k \cdot 2^{\poly(i,j)}}_{\text{iterate through $\Phi''$ and $\mathbb{G}$}}\
    \cdot\ \underbrace{\poly(i,j,n,m,\abs{\Sigma})}_{\text{comparison of $\NFDNF(\Sigma)$ formulae}}
  \end{align*}
  where $m$ is the maximum number of disjuncts in some $\NFDNF(\Sigma)$ formula
  $\Phi_\ell''$. Note that an upper bound to the number of disjuncts that
  formulae $\bigvee G$, with $G \in \mathbb{G}$, is given by $i \cdot n^2+j$.
  
  Lastly, consider the function \textsc{Apply'}. Given $\mathbb{G}$, an
  operation $\oplus \in \{\cand,\cnot,\ccor\}$, and two formulae $\Phi$ and
  $\Psi$ in $\SDF(\Sigma)$ made of $\NFDNF(\Sigma)$ formulae of the form
  $\bigvee G$ with $G \in \mathbb{G}$, this function returns a formula in
  $\SDF(\Sigma)$ that is equivalent to $\Phi \oplus \Psi$ and that is made of
  $\NFDNF(\Sigma)$ formulae of the form $\bigvee G$ with $G \in \mathbb{G}$.
  Because of~\Cref{lemma:SDF-longest-chain}, we know that all the recursive
  calls done by \textsc{Apply'} return a formula $\Phi_1' - (\dots -
  \Phi_\ell')$ with $\ell \leq i \cdot j$. Because of this, we can bound the
  time required to perform the various calls to $(:)$ and \textsc{Normalise}
  throughout the procedure. For example, in line~\ref{line:cand-recurse},
  computing $(\Phi_1 \oland \Psi_1) :
  \textsc{Apply'}(\ccor,\Phi',\Psi',\mathbb{G})$ once the result of
  $\textsc{Apply'}(\ccor,\Phi',\Psi',\mathbb{G})$ is given takes time
  $\poly(i,j,n,\abs{\Sigma})$.

  Below, given $ \ell \in [1,i \cdot j]$ and $r \in [0,i \cdot j]$, let us write
  $\mathtt{R}(\ell,r)$ for the maximal running time of \textsc{Apply'} on an
  input $(\oplus,\Phi',\Psi',\mathbb{G})$ with $\oplus \in
  \{\cand,\cnot,\ccor\}$, $\mathbb{G}$ as above, $\Phi' = \Phi_1' - (\dots -
  \Phi_\ell')$, and $\Psi' = \Psi_1' - \Psi''$, with $\Psi''$ formula in
  $\SDF(\Sigma)$, such that there are $r$ elements $G \in \mathbb{G}$ satisfying
  $\sem{\bigvee G} \subsetneq \sem{\Psi_1'}$. The fact that $r$ can be
  restricted to be at most $i \cdot j$ follows
  from~\Cref{lemma:SDF-longest-chain}. A simple inspection of \textsc{Apply'}
  yield the following inequalities: {\allowdisplaybreaks
  \begin{align*}
    \mathtt{R}(1,r) & \leq O(1) &\text{case: $\sem{\Phi}= \emptyset$}\\ 
    \mathtt{R}(\ell,0) & \leq O(1) &\text{case: $\sem{\Psi} = \emptyset$}\\ 
    \mathtt{R}(\ell+1,r+1) & \leq \max
      \begin{cases}
        \mathtt{R}(\ell,r) + \poly(i,j,n,\abs{\Sigma})\\
        \mathtt{R}(\ell,r+1) + \poly(i,j,\abs{\Sigma})\\
        2 \cdot \mathtt{R}(\ell,r) + \mathtt{R}(i \cdot j, r)\\ 
        \qquad {}+{} 2^{\poly(i,j)} \cdot \poly(n,\abs{\Sigma})
      \end{cases}
      &\begin{aligned}
        \text{{case: line~\ref{line:cand-recurse}}}\\ 
        \text{{case: line~\ref{line:cnot-recurse} or line~\ref{line:cor-recurse-one}}}\\ 
          \\ 
        \text{{case: lines~\ref{line:cor-recurse-two}--\ref{line:cor-recurse-three}}}.
      \end{aligned}
  \end{align*}
  } Let us write $C(i,j,n,\abs{\Sigma})$ for a function in $2^{\poly(i,j)} \cdot
  \poly(n,\abs{\Sigma})$ that upper bounds all functions $O(1)$,
  $\poly(i,j,n,\abs{\Sigma})$ and $2^{\poly(i,j)} \cdot \poly(n,\abs{\Sigma})$
  appearing in the inequalities above. Note that, in giving an upper bound to
  $\mathbb{R}(\ell,r)$, we can treat $i,j,n$ and $\abs{\Sigma}$ as constants,
  hence below we simply write $C$ instead of $C(i,j,n,\abs{\Sigma})$. We have
  $\mathbb{R}(1,r) \leq C$, $\mathbb{R}(\ell,0) \leq C$ and otherwise, given
  $\ell \geq 1$ and $r \geq 0$, $\mathbb{R}(\ell+1,r+1) \leq C +
  \max(\mathbb{R}(\ell,r+1), 2 \cdot \mathbb{R}(\ell,r) + \mathbb{R}(i \cdot j,
  r))$. A simple check shows that $\mathbb{R}(\ell,r) \leq C \cdot 3^{r \cdot i
  \cdot j + \ell}$. This inequality is clearly true for the base cases
  $\mathbb{R}(1,r)$ and $\mathbb{R}(\ell,0)$, and otherwise
  \begin{align*}
    \mathbb{R}(\ell+1,r+1) 
      & \leq C + \max(\mathbb{R}(\ell,r+1), 2 \cdot \mathbb{R}(\ell,r) + \mathbb{R}(i \cdot j, r))\\ 
      & \leq C + \max(C \cdot 3^{(r+1) \cdot i \cdot j + \ell}, 2 \cdot C \cdot 3^{r \cdot i \cdot j + \ell} + C \cdot 3^{r \cdot i \cdot j + i\cdot j})\\ 
      %& \leq C + C \cdot \max(3^{(r+1) \cdot i \cdot j + \ell}, 3^{(r + \log_3(2)) \cdot i \cdot j + \ell} + 3^{(r+1) \cdot i \cdot j})\\ 
      & %\leq C + C \cdot 2 \cdot 3^{(r+1) \cdot i \cdot j + \ell} 
      \leq C \cdot 3^{(r+1) \cdot i \cdot j + \ell+1}. 
  \end{align*}
  We conclude that, on the formulae $\Phi$ and $\Psi$ of the statement of the
  lemma, $\textsc{Apply'}(\oplus,\Phi,\Psi,\mathbb{G})$ runs in time
  $2^{\poly(i,j)} \cdot \poly(n,\abs{\Sigma})$. Then, taking into account the
  computation of $\mathbb{G}$, $\textsc{Apply}(\oplus,\Phi,\Psi)$ runs in
  $2^{\poly(i,j)} \cdot \poly(n,\abs{\Sigma})$.   
\end{proof}

\subsection{Proof of Lemma~\ref{lemma:Step1-implies-BooleanAlgebra}}

We now rely on the strict difference normal form for propositional logic to 
establish~\Cref{lemma:Step1-implies-BooleanAlgebra}. 
Let us fix a structure $\mcA = (A,\sigma,I)$ and consider its
first-order theory $\FO(\mcA)$. Moreover, let $\rho$ be a representation of
$\domain:=\bigcup_{n\in\NN}\domain_n$, where, for all $n\in\NN$,
$\domain_n\subseteq\powerset(A^n)$ is such that $\sem{\Psi}_{\mcA} \in
\domain_n$ for every $\Psi \in \AC(\sigma)$ having maximum variable $x_n$.
In order to establish~\Cref{lemma:Step1-implies-BooleanAlgebra}, we assume that:
\begin{enumerate}
  \item\label{step:basic_framework_ptime:Item1-bis} The structure
  $(\domain,\land,\leq)$ has a $(\rho,\theta)$-UXP signature; 
  \item\label{step:basic_framework_ptime:Item2-bis} The structure
  $(\un(\domain),\leq)$ has a $(\un(\rho),\len(\theta))$-UXP signature,
\end{enumerate}  
and show that, then, ${(\dnf(\domain),\bot,\top,\lor,\land,-,\leq)}$ has a
$(\dnf(\rho),\depth(\theta))$-UXP signature. 

The functions $\bot$ and $\top$ are
constants, and thus can be implemented with computable functions running in
constant time. Consider a binary function $\oplus \in \{\lor,\land,-\}$. The
pseudocode of the $(\depth(\theta)^2,\depth(\theta))$-UXP reduction that is a
$(\dnf(\rho)^2,\dnf(\rho))$-implementation of $\oplus$ is given
in~\Cref{algorithm:FOBoolOperations}. 
\begin{figure}[t]
  \begin{algorithmic}[1]
    \Require $(u_1,\dots,u_i)$ and $(v_1,\dots,v_j)$ in $\dom(\dnf(\rho))$.
    \Ensure $(w_1,\dots,w_k)$ belonging to $\dom(\dnf(\rho))$, and such that $k \leq (i+1)
    \cdot (j+1)$ and ${\hspace{0.6cm}(\dnf(\rho)(u_1,\dots,u_i)) \oplus
    (\dnf(\rho)(v_1,\dots,v_j)) = \dnf(\rho)(w_1,\dots,w_k)}$.
    \medskip
    \State Let $p_1,\dots,p_i$ and $q_1,\dots,q_j$ be $i+j$ fresh propositional
    symbols \label{algo:FOBool:freshprop} \State Let $\oplus_{\mathbb{B}}$ be
    equal to~$\cand$, $\cnot$ or $\ccor$, depending on $\oplus$ being $\land$,
    $-$ or $\lor$, respectively \State $\Phi \gets (p_1 - (p_1 \land p_2 -
    (\dots - (\bigwedge_{\ell=1}^i p_\ell - \bot))))$ \label{algo:FOBool:Phidef}
    \State $\Psi \gets (q_1 - (q_1 \land q_2 - (\dots - (\bigwedge_{r=1}^j q_r -
    \bot))))$ \label{algo:FOBool:Psidef} \State $\Phi_1' - (- \dots \Phi_k')
    \gets \textsc{Apply}(\oplus_{\mathbb{B}}, \Phi, \Psi)$
    \label{algo:FOBool:Apply} \Comment{\Cref{algorithm:computeBooleanSDF}}
    \State $(w_1,\dots,w_k) \gets ((),\dots,())$ \Comment{to be populated}
    \For{$t$ in $[1,k]$}\label{algo:FOBool:For} \State Let $\Phi_t' =
    \bigvee_{s=1}^{d} \Psi_{s}'$ \For{$s$ in $[1,d]$ such that $\bot$ does not
    occur in $\Psi_{s}'$} \label{algo:FOBool:SecondFor} \State $m$ $\gets$
    element of $\dom(\un(\rho))$ corresponding to $\top$ \For{$\ell \in [1,i]$
    such that $p_\ell$ occurs in $\Psi_s'$} $m \gets u_\ell \land m$
          \label{algo:FOBool:lineoland}
        \EndFor \For{$r \in [1,j]$ such that $q_r$ occurs in $\Psi_s'$} $m \gets
        v_r \land m$ \EndFor \State $w_t \gets m \lor w_t$
        \label{algo:FOBool:linelor} \EndFor \EndFor  
        \State \textbf{return} $(w_1,\dots,w_k)$ \label{algo:FOBool:EndFor}
  \end{algorithmic}
  \caption{Implementation of $\oplus \in \{\lor,\land,-\}$ as a $(\depth(\theta)^2,\depth(\theta))$-UXP reduction.}
  \label{algorithm:FOBoolOperations}
\end{figure}
In a nutshell, given as input two sequences $(u_1,\dots,u_i)$ and
$(v_1,\dots,v_j)$ in $\dom(\dnf(\rho))$, the procedure
in~\Cref{algorithm:FOBoolOperations} computes a sequence $(w_1,\dots,w_k)$
representing $(\dnf(\rho)(u_1,\dots,u_i)) \oplus (\dnf(\rho)(v_1,\dots,v_j))$ by
translating the input sequences into propositional formulae, computing $\oplus$
with respect to propositional calculus, and translating back into an element of
$\dom(\dnf(\rho))$.

\begin{lem}\label{lemma:AlgoFOBoolCorrect}
  The function in~\Cref{algorithm:FOBoolOperations} respects its specification.
\end{lem}

\begin{proof}
  Let $p_1,\dots,p_i$ and $q_1,\dots,q_j$ be the fresh propositional symbols
  introduced in line~\ref{algo:FOBool:freshprop}, and let $\Sigma =
  \{p_1,\dots,p_i,q_1,\dots,q_j\}$. Given $\ell \in [1,i]$, the procedure
  associates to $u_\ell$ the propositional symbol $p_\ell$. Similarly, given $r
  \in [1,j]$, it associates to $v_r$ the propositional symbol $q_r$. Recall that
  elements in $\dom(\dnf(\rho))$ are tuples representing formulae in difference
  normal form that might not be \emph{strict} (which is instead required in the
  notion of~$\SDF(\Sigma)$). For example, the sequence $(u_1,\dots,u_i)$ might
  be such that, for some $k \in [1,i-1]$, $\un(\rho)(u_{k+1}) \not\leq
  \un(\rho)(u_{k})$. However, the following equivalence holds directly from
  set-theoretical validities:
  \[ 
    \dnf(\rho)(u_1,\dots,u_i) = \dnf(\rho)(u_1, \, u_1 \land u_2, \, \dots, \, u_1 \land u_2 \land \dots \land u_i, \bot).
  \]
  At the propositional level, it suffices then to consider the formulae $\Phi$
  and $\Psi$ defined in lines~\ref{algo:FOBool:Phidef}
  and~\ref{algo:FOBool:Psidef}, which are in~$\SDF(\Sigma)$ and correspond to
  $(u_1,\dots,u_i)$ and $(v_1,\dots,v_j)$, respectively.
  By~\Cref{lemma:propcalc:BoolAlg}, the formula $\Phi_1' - (- \dots \Phi_k')$
  computed in line~\ref{algo:FOBool:Apply} corresponds to $\Phi
  \oplus_{\mathbb{B}} \Psi$. Moreover, as described at the beginning of the
  proof of~\Cref{lemma:propcalc:BoolAlg} (and according
  to~\Cref{lemma:operation-computation}), for every $t \in [1,k]$, the formula
  $\Phi_t'$ is equal to $\bigvee G$ for some $G \subseteq \{ \phi,\, \psi,\,
  \phi \oland \psi : \phi \in \{\Phi_1,\dots,\Phi_i, \bot\}, \psi \in
  \{\Psi_1,\dots,\Psi_j, \bot\} \}$. Note that, given $\phi \in
  \{\Phi_1,\dots,\Phi_i\}$ and $\psi \in \{\Psi_1,\dots,\Psi_j\}$, $\phi \oland
  \psi$ is a disjunction of formulae of the form $p_1 \land \dots \land p_\ell
  \land q_1 \land \dots \land q_r$, for some $\ell \in [1,i]$ and $r \in [1,j]$.
  Then, lines~\ref{algo:FOBool:For}--\ref{algo:FOBool:EndFor} simply translate
  back the formulae $p_\ell$ and $q_r$ into $u_\ell$ and $v_r$, respectively,
  and compute the necessary conjunction and disjunctions of these elements of
  $\un(\rho)$. In particular, the operations $\land$ and $\lor$ appearing in
  lines~\ref{algo:FOBool:lineoland}--\ref{algo:FOBool:linelor} are the binary
  functions implementing conjunction and disjunction on elements of $\un(\rho)$.
  Then, the fact that translating back from propositional calculus yield an
  element of representing $(\dnf(\rho)(u_1,\dots,u_i)) \oplus
  (\dnf(\rho)(v_1,\dots,v_j))$ stems directly from the definitions of $\cand$,
  $\cnot$ and $\ccor$, that only rely on set-theoretical tautologies.
\end{proof}

\begin{lem}\label{lemma:AlgoFOBoolRuntime}
  The function in~\Cref{algorithm:FOBoolOperations} is a
  $(\depth(\theta)^2,\depth(\theta))$-UXP reduction.
\end{lem}

\begin{proof}
  Directly from~\Cref{lemma:propcalc:BoolAlg}, the formula~$\Phi_1 - ( - \dots -
  \Phi_k')$ of line~\ref{algo:FOBool:Apply} can be computed in polynomial time
  when the lengths $i$ and $j$ of $(u_1,\dots,u_i)$ and $(v_1,\dots,v_j)$ are
  considered fixed (as it is the case for the parameter $\depth(\theta)$). Note
  moreover that $k \leq i \cdot j$, by~\Cref{lemma:SDF-longest-chain}, and that
  $d$ in line~\ref{algo:FOBool:SecondFor} is bounded in $\poly(i,j)$, as every
  $\Phi_t'$ is equal to $\bigvee G$ for some $G \subseteq \{ \phi,\, \psi,\,
  \phi \oland \psi : \phi \in \{\Phi_1,\dots,\Phi_i, \bot\}, \psi \in
  \{\Psi_1,\dots,\Psi_j, \bot\} \}$ such that for every $k \in [1,i]$ $\Phi_k
  \coloneqq p_1 \land \dots \land p_k$ occurs at most once in a formula of $G$,
  and for every $k \in [1,j]$, $\Psi_k \coloneqq q_1 \land \dots \land q_k$
  occurs at most once in a formula of $G$. Therefore, the \textbf{for} loops of
  lines~\ref{algo:FOBool:For}--\ref{algo:FOBool:linelor} perform overall a
  constant number of iteration, when taking into account the parameter
  $\depth(\theta)$. This implies a constant number of calls to the functions
  computing $\land$ and $\lor$ on elements of $\dom(\un(\rho))$. To conclude the
  proof it suffices then to show that these functions are
  $(\len(\theta)^2,\len(\theta))$-UXP reductions. For $\lor$, the analysis is
  simple: given $u = (a_1,\dots,a_n)$ and $v = (b_1,\dots,b_m)$ in
  $\dom(\un(\rho))$, $u \lor v = (a_1,\dots,a_n,b_1,\dots,b_m)$, and therefore
  $\lor$ is a constant time operation when the lengths $n$ and $m$ are fixed.
  For $\land$, $u \land v$ correspond to the sequence of all $a_\ell \land b_r$
  with $\ell \in [1,n]$ and $r \in [1,m]$. Since $n$ and $m$ are considered
  fixed, this is a constant number of calls to the function computing $\land$ on
  elements of $\dom(\rho)$, which from the properties
  of~\Cref{step:basic_framework_ptime} runs in polynomial time when accounting
  for the parameter~$\theta$.
\end{proof}

To conclude the proof of~\Cref{lemma:Step1-implies-BooleanAlgebra}, it now
suffices to provide a $(\depth(\theta)^2,\parone)$-UXP reduction that
implements~$\leq$. Its pseudocode is given in~\Cref{algorithm:FOLWEQ}. Briefly,
given $(u_1,\dots,u_i)$ and $(v_1,\dots,v_j)$ in $\dom(\dnf(\rho))$, this
function establishes whether $(\dnf(\rho)(u_1,\dots,u_i)) \leq
(\dnf(\rho)(v_1,\dots,v_j))$ by checking if $(\dnf(\rho)(u_1,\dots,u_i)) -
(\dnf(\rho)(v_1,\dots,v_j)) = \emptyset$.

\begin{figure}[t]
  \begin{algorithmic}[1]
    \Require $(u_1,\dots,u_i)$ and $(v_1,\dots,v_j)$ in $\dom(\dnf(\rho))$.
    \Ensure \textbf{true} iff $(\dnf(\rho)(u_1,\dots,u_i)) \leq
    (\dnf(\rho)(v_1,\dots,v_j))$.
    \medskip
    \State $(w_1,\dots,w_k) \gets (u_1,\dots,u_i) - (v_1,\dots,v_j)$ 
    \label{algoleq:lineW}
    \Comment{\Cref{algorithm:FOBoolOperations}} \State $\ell = 1$ \While{$\ell
    \leq k$} 
    \label{algoleq:linewhile}
      \If{$w_\ell \leq ()$} \textbf{return true}\label{algoleq:lineemp}
        \Comment{i.e., $\un(\rho)(w_\ell) = \emptyset$} \ElsIf{$\ell < k$ and
        $w_{\ell} \leq w_{\ell+1}$} $\ell \gets \ell+2$ \label{algoleq:lineseq}
        \Else   
        \ \textbf{return false} \EndIf \EndWhile \State \textbf{return
        true}\label{algoleq:linereturn}
  \end{algorithmic}
  \caption{Implementation of $\leq$ as a $(\depth(\theta)^2,\parone)$-UXP reduction.}
  \label{algorithm:FOLWEQ}
\end{figure}

\begin{lem} 
  The function in~\Cref{algorithm:FOLWEQ} respects its specification.
\end{lem}

\begin{proof}
  By~\Cref{lemma:AlgoFOBoolCorrect}, the sequence $(w_1,\dots,w_k)$ computed in
  line~\ref{algoleq:lineW} belongs to $\dom(\dnf(\rho))$ and represents
  $(\dnf(\rho)(u_1,\dots,u_i)) - (\dnf(\rho)(v_1,\dots,v_j))$. By definition,
  $\dnf(\rho)(w_1,\dots,w_k)$ is equivalent to $\un(\rho)(w_1) - (\un(\rho)(w_2) - (\dots -
  \un(\rho)(w_k)))$. Hence, $\dnf(\rho)(w_1,\dots,w_k) = \emptyset$ if and
  only if one of the following holds
  \begin{itemize}
    \setlength{\itemsep}{2pt}
    \item $k = 0$, i.e.,~$(w_1,\dots,w_k)$ is an empty sequence of elements in
    $\un(\rho)$, 
    \item $\un(\rho)(w_1) = \emptyset$, or
    \item $k \geq 2$, $\un(\rho)(w_1) \leq \un(\rho)(w_2)$ and
    $\dnf(\rho)(w_3,\dots,w_k) = \emptyset$. 
  \end{itemize}
  Lines~\ref{algoleq:linewhile}--\ref{algoleq:linereturn}
  implement this check, leading to the correctness of the procedure.
\end{proof}

\begin{lem}
  The function in~\Cref{algorithm:FOLWEQ} is a $(\depth(\theta)^2,\parone)$-UXP
  reduction.
\end{lem}

\begin{proof}
  By~\Cref{lemma:AlgoFOBoolRuntime}, there is a
  $(\depth(\theta)^2,\depth(\theta))$-UXP reduction computing the sequence
  $(w_1,\dots,w_k)$ in line~\ref{algoleq:lineW}.
  By~\Cref{lemma:AlgoFOBoolCorrect}, ${k \leq (i+1) \cdot (j+1)}$, and thus the
  \textbf{while} loop of line~\ref{algoleq:linewhile} iterates $O(i \cdot j)$
  times. Following lines~\ref{algoleq:lineemp} and~\ref{algoleq:lineseq}, each
  iteration might require a comparison between two elements of
  $\dom(\un(\rho))$. Because of the properties required
  by~\Cref{step:basic_framework_ptime}, this comparison can be implemented by a
  $(\len(\theta)^2,\parone)$-UXP reduction.   
  Putting all together, we conclude that the function in~\Cref{algorithm:FOLWEQ}
  is a $(\depth(\theta)^2,\parone)$-UXP reduction.
\end{proof}

\allowdisplaybreaks[1]
\section{Proofs Lemmas~\ref{lem:proj_univ_proj_X_setminus_Y} 
  and~\ref{lemma:step-one-two-imply-FO-tract} 
  from~Section~\ref{section:fo-framework}}
\label{appendix:sec4-new}

We present the proofs of~\Cref{lem:proj_univ_proj_X_setminus_Y,lemma:step-one-two-imply-FO-tract} 
deferred from \Cref{section:fo-framework}. Given a set $A$, $Z \subseteq A^j$ and $k \in \NN$, we define $Z^c \coloneqq A^j \setminus Z$ and write
\[ 
    \upscale{Z}^k \coloneqq Z \times A^{\max(0,k-j)} = 
    \{(v_1,\dots,v_{\max(j,k)}) \in \seq(A) :
    (v_1,\dots,v_j) \in Z\}
\]
for the set of tuples obtained from $Z$ by ``appending''
$\max(0,k-j)$ dimensions.

\LemProjUnivDual* 

\begin{proof}
    Let $M \coloneqq \max(m,n)$ and $\vec i = (i_1,\dots,_k)$.
    For the first equivalence:
    \begin{align*}
        \proj(\vec i, X - Y)
            &=\{ \gamma \in A^M : \exists \vec a \in A^k \text{ s.t. } \gamma[\vec i \gets \vec a] \in (X - Y) \}\\
            &= \compl{\{ \gamma \in A^M : \forall \vec a \in A^k \text{ s.t. } \gamma[\vec i \gets \vec a] \not\in (X - Y) \}}\\
            &= \compl{\{ \gamma \in A^M : \forall \vec a \in A^k \text{ s.t. } \gamma[\vec i \gets \vec a] \in (\compl{X} \lor Y) \}}\\
            &= \compl{\Big(\{ \gamma \not\in \upscale{\proj(\vec i, X)}^m : \forall \vec a \in A^k \text{ s.t. } \gamma[\vec i \gets \vec a] \in (\compl{X} \lor Y) \} \, \lor\\ 
            &\qquad \{ \gamma \in \upscale{\proj(\vec i, X)}^m : \forall \vec a \in A^k \text{ s.t. } \gamma[\vec i \gets \vec a] \in (\compl{X} \lor Y) \}\Big)}\\
            &= \compl{\big(\compl{\proj(\vec i, X)} \lor \unprojrel{X}(\vec i, Y) \big)}\\
            &=\proj(\vec i, X) - \unprojrel{X}(\vec i, Y).
    \end{align*}
    where in the second-last equality we have used the fact that if $\gamma \notin \upscale{\proj(\vec i, X)}^m$ then for every $\vec a \in A^k$ we have $\gamma[\vec i \gets \vec a] \not\in \upscale{(\compl{X})}^m$.
    
    The second equivalence is proven in a similar way. 
    Let $N \coloneqq \max(M,r)$.
    {\allowdisplaybreaks%
    \begin{align*}
          & \unprojrel{Z}(\vec i, X - Y)\\
      ={} & \{ \gamma \in \proj(\vec i, \upscale{Z}^N) : 
          \forall \vec a \in A^k\!\!,\,
          \gamma' \coloneqq \gamma[\vec i \gets \vec a] \in \upscale{Z}^N \text{ implies }
          \gamma' \in \upscale{(X - Y)}^N \}\\
      ={} & \{ \gamma \in \proj(\vec i, \upscale{Z}^N) : 
          \forall \vec a \in A^k \text{ and }
          \gamma' \coloneqq \gamma[\vec i \gets \vec a], (\gamma' \in \upscale{Z}^N \text{ implies } \gamma' \in \upscale{X}^N)\\
      & \hphantom{\{ \gamma \in \proj(\vec i, \upscale{Z}^N) :}
          \text{ and }
          (\gamma' \in \upscale{Z}^N \text{ implies } \gamma' \in \upscale{(\compl{Y})}^N)\}\\
      ={} & \{ \gamma \in \proj(\vec i, \upscale{Z}^N) : 
          \forall \vec a \in A^k \text{ and }
          \gamma' \coloneqq \gamma[\vec i \gets \vec a], \gamma' \in \upscale{Z}^N \text{ implies } \gamma' \in \upscale{X}^N\} \\
      & \land  
          \{ \gamma \in \proj(\vec i, \upscale{Z}^N) : \forall \vec a \in A^k \text{ and }
          \gamma' \coloneqq \gamma[\vec i \gets \vec a], \gamma' \in \upscale{Z}^N \text{ implies } \gamma' \in \upscale{\compl{Y}}^N\}
          \\
      ={} &\upscale{\unprojrel{Z}(\vec i, X)}^N \land
            \compl{\{ \gamma \in A^N : \gamma \not\in \proj(\vec i, \upscale{Z}^N), \text{ or } \gamma[\vec i \gets \vec a] \in (\upscale{Z}^N \land \upscale{Y}^N) \text{ for some } \vec a \in A^k \}}\\
      ={} & \upscale{\unprojrel{Z}(\vec i, X)}^N \land \upscale{\Big(\compl{\big(\compl{\proj(\vec i, Z)} \lor \proj(\vec i, Z \land Y) \big)}\Big)}^N\\
      ={} & \unprojrel{Z}(\vec i, X) \land \compl{\big(\compl{\proj(\vec i, Z)} \lor \proj(\vec i, Z \land Y) \big)} &\hspace{-20cm}\text{by 
            definition~of } \land \text{ and } N\\
      ={} & \unprojrel{Z}(\vec i, X) \land \big(\proj(\vec i, Z) \land \compl{\proj(\vec i, Z \land Y)} \big)\\
      ={} &\unprojrel{Z}(\vec i, X) - \proj(\vec i, Z \land Y) 
            &\hspace{-20cm}
            \text{as } \proj(\vec i, Z) \subseteq \unprojrel{Z}(\vec i, X)
            \text{ by def.~of } \unprojrel{\_}\\
      ={} &\unprojrel{Z}(\vec i, X) - \proj(\vec i, Y) 
            &\hspace{-20cm}
            \text{again from } \proj(\vec i, Z) \subseteq \unprojrel{Z}(\vec i, X) 
          \hspace{0.85cm}\qedhere
    \end{align*}
    }
\end{proof}

Before moving to the proof of~\Cref{lemma:step-one-two-imply-FO-tract} 
we need the following intermediate result.

\begin{lem}\label{lem:simplify_univ_proj_rel}
    Let $X \subseteq A^n$, 
    $Z_1,\ldots,Z_r$ with $Z_j \subseteq A^{m_j}$, and 
    $\vec i = (i_1,\dots,i_k) \in \vec I$.
    Let $Z \coloneqq {\bigvee}_{j=1}^r Z_j$. Then,
    $\unprojrel{Z}(\vec i, X)= {\bigwedge}_{j=1}^r\Big(\unprojrel{Z_j}(\vec i, X) \lor \big(\proj(\vec i, Z) - \proj(\vec i, Z_j)\big)\Big)$.
\end{lem}
\begin{proof}
    Let $M \coloneqq \max_{i=1}^r(n,m_i)$.
    The lemma follows from a simple calculation:
    \begin{align*}
      & \unprojrel{Z}(\vec i, X)\\
      = {} & 
        \{ \gamma \in \proj(\vec i, \upscale{Z}^n) : \forall \vec a \in A^k \text{ and }
        \gamma' \coloneqq \gamma[\vec i\gets \vec a],
        \gamma' \in \upscale{Z}^n \text{ implies }
        \gamma' \in \upscale{X}^M \}\\
      = {} &  
        \proj(\vec i, \upscale{Z}^n) \land \{ \gamma \in A^M : \text{for every } j \in [1,r], \text{ every } \vec a \in A^k \text{ and }
        \gamma' \coloneqq \gamma[\vec i\gets \vec a],\\
      & \hphantom{\proj(\vec i, \upscale{Z}^n) \land \{ \gamma \in A^M : } 
        \ \ \gamma' \in \upscale{Z_j}^M \text{ implies }
        \gamma' \in \upscale{X}^M \}\\
      = {} & \proj(\vec i, \upscale{Z}^n) \land \bigwedge_{j=1}^r\{ \gamma \in A^M : \text{for every } \vec a \in A^k \text{ and }
            \gamma' \coloneqq \gamma[\vec i\gets \vec a],\\[-10pt]
      & \hphantom{\proj(\vec i, \upscale{Z}^n) \land \bigwedge_{j=1}^r\{ \gamma \in A^M}
            \ \ \gamma' \in \upscale{Z_j}^M \text{ implies }
            \gamma' \in \upscale{X}^M \}\\
      = {} & 
        \bigwedge_{j=1}^r \Big(\proj(\vec i, \upscale{Z}^n) \land \{ \gamma \in A^M : \forall \vec a \in A^k \text{ and }
        \gamma' \coloneqq \gamma[\vec i\gets \vec a],\\[-10pt] 
      & \hphantom{\bigwedge_{j=1}^r \Big(\proj(\vec i, \upscale{Z}^n) \land \{ \gamma \in A^M : }
        \ \ \gamma' \in \upscale{Z_j}^M \text{ implies }
        \gamma' \in \upscale{X}^M \} \Big)\\
            &\hspace{1cm}\text{\textit{(Note: if $\gamma \notin\proj(\vec i, \upscale{Z_j}^M)$ then $\gamma' \notin \upscale{Z_j}^M$ for all $\vec a \in A^k$ and 
            $\gamma' \coloneqq \gamma[\vec i\gets \vec a]$.)}}\\
      = {} & 
        \bigwedge_{j=1}^r \Big(\proj(\vec i, \upscale{Z}^n) \land \big(\compl{\proj({\vec i, Z_j})} \lor \{ \gamma \in \proj(\vec i, \upscale{Z_j}^M) : \forall \vec a \in A^k \text{ and }
            \gamma' \coloneqq \gamma[\vec i\gets \vec a]\\[-10pt]
            & \hphantom{\bigwedge_{j = 1}^r \Big(\proj(\vec i, \upscale{Z}^n) \land \big(\compl{\proj({\vec i, Z_j})} \lor \{ \gamma \in \proj(\vec i, \upscale{Z_j})} \gamma' \in \upscale{Z_j}^M \text{ implies }
            \gamma' \in \upscale{X}^M \} \big) \Big)\\
        = {} &  
          \bigwedge_{j = 1}^r \Big(\proj(\vec i, \upscale{Z}^n) \land \big(\compl{\proj({\vec i, Z_j})} \lor \unprojrel{Z_j}(\vec i, X)\big) \Big)\\
        = {} & 
          \bigwedge_{j = 1}^r \Big((\proj(\vec i, \upscale{Z}^n) \land \unprojrel{Z_j}(\vec i, X)) \lor \big(\proj(\vec i, \upscale{Z}^n) \land \compl{\proj({\vec i, Z_j})}\big) \Big)\\
        = {} & 
           \bigwedge_{j = 1}^r \Big(\unprojrel{Z_j}(\vec i, X) \lor \big(\proj(\vec i, Z) \land \compl{\proj({\vec i, Z_j})}\big) \Big)\\
        = {} & 
          \bigwedge_{j = 1}^r \Big(\unprojrel{Z_j}(\vec i, X) \lor \big(\proj(\vec i, Z) - \proj({\vec i, Z_j})\big) \Big).
          \hspace{5.5cm}\qedhere
    \end{align*}
\end{proof}

We recall~\Cref{step:beforeReducPiUnivToSimpleCasesUxp} of the framework.

\STEPTHREE*

\StepOneTwoImplyFOTract*

\begin{proof}
    By~\Cref{lemma:Step1-implies-BooleanAlgebra}, 
    the structure 
    $\mcD' \coloneqq (\dnf(\domain),\bot,\top,\lor,\land,-,\leq)$ 
    has a $(\dnf(\rho), \depth(\theta))$-UXP  % line break because text reached into the margin. If you do not agree with this, let us know
    signature.
    It suffices to show that there are $(\parone \cdot \depth(\theta),\depth(\theta))$-UXP reductions implementing $\proj$ and $\unproj$ with respect to the representation $\dnf(\rho)$.

    Consider the projection~$\proj$.
    We describe a $(\parone \cdot \depth(\theta),\depth(\theta))$-UXP reduction 
    that given $\vec x \in \dom(\dnf(\rho))$
    computes $\vec y \in \dom(\dnf(\rho))$ satisfying $\dnf(\rho)(\vec y) = \proj(\vec i, \dnf(\rho)(\vec x))$. 
    Let $\vec i \in \vec I$
    and~$\vec x = (\vec x_1,\dots, \vec x_\ell) \in \dom(\dnf(\rho))$, where each $\vec x_r$ is in $\dom(\un(\rho))$, and it is thus of the form $(x_{r,1},\dots,x_{r,j_r})$ with every $x_{r,j} \in \dom(\rho)$. The reduction computes~$\vec y$ as follows:
    
    \begin{algorithmic}[1]
      \For{$r \in [1,\ell]$}
        \If{$r$ is odd}
          \ $\vec x_r' \gets \bigvee_{k=1}^{j_r} \dot\pi(\vec i, x_{r,k})$ 
        \Else 
          \ $\vec x_r' \gets \bigwedge_{k=1}^{j_{r-1}}\Big(\dotunproj_{x_{r-1,k}}(\vec i, \vec x_r) \lor \big(\vec x_{r-1}' - \dot\proj(\vec i, x_{r-1,k})\big)\Big)$
        \EndIf
      \EndFor
      \State \textbf{return} $\vec x_1' - (\vec x_2' - \dots ( \ldots - \vec x_{\ell}'))$
    \end{algorithmic}
    Above, note that all calls to the projections $\dot\pi$ and $\dotunproj$ are with respect to elements in~$\dom(\rho)$ and $\dom(\un(\rho))$, respectively, and thus return elements of $\dom(\dnf(\rho))$, accordingly to the first two properties in~\Cref{step:beforeReducPiUnivToSimpleCasesUxp}.
    The correctness of the procedure stems directly from~\Cref{lem:proj_univ_proj_X_setminus_Y}
    and~\Cref{lem:simplify_univ_proj_rel}.
    For its complexity, note that because of the parameter $\depth(\theta)$, the lengths $\ell,j_1,\dots,j_\ell$ are to be considered constant. 
    Hence, the procedure above boils down to a constant number of calls to suitable UXP reductions with respect to the parameter $\depth(\theta)$, and it is therefore a $(\parone \cdot \depth(\theta),\depth(\theta))$-UXP reduction.

    The arguments are analogous for~$\unproj$. In particular, given $\vec i \in \vec I$ and $\vec x = (\vec x_1,\dots, \vec x_\ell) \in \dom(\dnf(\rho))$, the procedure for $\unproj$ simply computes $\dotunproj_{\top}(\vec i, \vec x_1) - \proj(\vec i, (\vec x_2,\dots, \vec x_\ell))$.
\end{proof}

\allowdisplaybreaks[0]

\section{Instantiation of the framework to weak linear real arithmetic}\label{sec:affine_subspaces}

In this section, we provide a first non-trivial instantiation of our framework. 
We consider weak linear real arithmetic
(weak~LRA), the first-order theory of the structure $(\RR,0,1,+,=)$, 
and show that its $k$ negations satisfiability problem lies in PTIME.
In fact, since the first-order theories of 
weak~LRA over the reals and rationals are known 
be elementary equivalent, for technical convenience this section considers 
the structure ${\mathcal{Q} = (\QQ,0,1,+,=)}$ instead.
% Note that, because of this, while we are able to show that the fixed
% negation fragment of weak~LRA is in PTIME, we will not be able to get a
% representation for the set of solutions of a formula of weak~LRA (as we did
% in~\Cref{sec:shifted_lattices} for weak PA). We are however able to derive a
% representation of the solutions of formulae in $\FO(\mathcal{Q})$. 

\subsection{Setup}
According to~\Cref{theorem:PuttingAllTogether}, instantiating the framework
requires first to define the domain $\domain$, its representation $\rho$ and the
change of representation $F \colon \AC(\sigma) \to \dom(\rho)$. In weak LRA,
conjunctions of atomic formulae are systems of affine equations, which over
$\QQ$ are known to define \emph{affine subspaces} (AS). We define $\domain_n$ be
the set of all affine subspaces of~$\QQ^n$. Then, $\domain = \bigcup_{n \in \NN}
\domain_n$ is the set of all affine subspaces over $\QQ^n$, for some $n$. 

Following our notation, let $\stdrepr{\QQ}$ be the (canonical) representation of
the rational numbers as pairs $(n,d)$ where $n \in \dom(\stdrepr{\ZZ})$ and $d
\in \dom(\stdrepr{\NN})$, and $\stdrepr{\QQ}(n,d) =
\frac{\stdrepr{\ZZ}(n)}{\stdrepr{\NN}(d)}$. Note that the structure
$(\QQ,+,\times,-,/,=,\leqslant)$ has a $(\stdrepr{\QQ},\parone)$-UXP signature
since arithmetic operations on $\QQ$ are in PTIME. Vectors are simply represented as
tuples: $\stdrepr{\QQ^n}:=\stdrepr{\QQ}^n$ for all $n$. To ease the
presentation, we do not make a distinction between $\QQ^n$ and
$\dom(\stdrepr{\QQ^n})$. We represent the affine subspaces by an offset and a
basis (i.e., a set of linearly independent vectors) of the linear part.
Formally, for every $n\in\NN$, if $v_0$ represents a vector in $\QQ^n$, and
$v_1,\ldots,v_k$ represent linearly independent vectors in $\QQ^n$, then we let
\[
    \rho_{\AS}(n,v_0,\cdot\cdot\cdot,v_k)\coloneqq\stdrepr{\QQ^n}(v_0)+\Span_{\QQ}\set{\stdrepr{\QQ^n}(v_1),\cdot\cdot\cdot,\stdrepr{\QQ^n}(v_k)}.
\]
Here, $+$ stands for the Minkowski sum, 
and given ${\vec v_1,\dots,\vec v_k \in \QQ^n}$, $\Span_{\QQ}\set{\vec v_1,
\dots, \vec v_k} \coloneqq \{\lambda_1 \cdot \vec v_1 + \dots + \lambda_k \cdot
\vec v_k : \lambda_1,\dots,\lambda_k \in \QQ\}$ is a \emph{linear subspace}.
We call the vectors
$v_0,\dots,v_k$ above a $V$-representation of the affine subspace
$\rho_{\AS}(n,v_0,\cdot\cdot\cdot,v_k)$. 
The presence of the dimension $n$ as first argument of
$\rho_{\AS}$ is required to map the vectors to a shifted lattice
of the right dimension. 
The definition
of $\rho_{\AS}$ is surjective but not injective, as every
affine subspace admits an infinite number of
\mbox{$V$-representations}. 
We also consider a constant symbol $\varnothing$ to represent the
empty set, i.e., $\rho_{\AS}(\varnothing) \coloneqq \emptyset$, and assume
$\emptyset \in \domain_n$ for every $n \in \NN$. 
Note that $(0,())$ stands instead for the only vector space of dimension $0$;
hence in particular $\rho_{\AS}(0,()) = \{()\} \neq \rho_{\AS}(\varnothing)$.
The map $\rho_{\AS}$ is the map $\rho$ required by~\Cref{theorem:PuttingAllTogether}.  

Finally, as a preliminary step for instantiating the framework, 
we must provide a change of representation
$F$ from conjunctions of atomic formulae to elements in
$\dom(\rho_{\AS})$. This function is essentially given by
the PTIME algorithm to compute the echelon form of a matrix
with rational entries:

\begin{propC}[\cite{Edmonds1967SystemsOD}]\label{prop:echelon_form}
    There is a PTIME algorithm computing the echelon form,
    along with the transformation matrices, of a given
    matrix with rational entries.
\end{propC}

Since $F$ runs in PTIME, the parameter $\xi$
from~\Cref{theorem:PuttingAllTogether} equals $\parone$. To instantiate the
framework, it now suffices to show that the structure~$\mcD$ from~\Cref{theorem:PuttingAllTogether} has a
$(\dnf(\rho_{\SL}),\depth(\parone))$-UXP signature, by establishing
\Cref{step:basic_framework_ptime}
and~\ref{step:beforeReducPiUnivToSimpleCasesUxp}.  

% \subsection{Helper lemmas}

% We will make use of \am{[Appendix F of MFCS]}, and \am{[Lemmas 48 and 49 of MFCS]} in particular
% to systematically simplify the treatement of inputs with different dimensions and of projections.
% This requires us to study the functions $\bla,\perm$ and $\unperm$ defined in \am{[Appendix F of MFCS]}.

% \begin{lemma}\label{lem:bla_ptime_as}\label{lem:bla_ptime_uoas}
%     There is a $(\parone,\parone)$-P reduction that $\rho_{\AS}$-implements $\bla$ over $\AS$,
%     and a $(\len(\parone),\len(\parone))$-P reduction that $\un(\rho_{\AS})$-implements $\bla$ over $\UoAS$.
% \end{lemma}
% \begin{proof}
%     This proof is analogous to the proof of~\am{[Lemma 50 of MFCS]}.
% \end{proof}

% \begin{lemma}\label{lem:perm_ptime_as}
%     The structure $(\AS, (\perm, \vec I),(\unperm, \vec I))$ has a $(\rho_{\AS},\parone)$-UXP signature. 
%     The structure $(\un(\AS), (\perm, \vec I),(\unperm, \vec I))$ has a $(\rho_{\AS},\len(\parone))$-UXP signature. 
% \end{lemma}

% \begin{proof}
%     This proof is analogous to the proof of~\am{[Lemma 51 of MFCS]}.
% \end{proof}

\subsection{Requirement~\ref{step:basic_framework_ptime}: Boolean connectives}
The next
lemma establishes~\Cref{step:basic_framework_ptime:Item1}
of~\Cref{step:basic_framework_ptime}. Therein, $+$ stands for the Minkowski sum, which we later need to implement subsequent parts of the framework.

\begin{restatable}{lem}{BasicFrameworkAssumptionAS}\label{lem:basic_framework_ptime_as}
    The structure~$(\domain,\land,+,\leq)$ has a $(\rho_{\AS},\parone)$-UXP
    signature.
\end{restatable}
\begin{proof}
    Observe that it suffices to show how to compute $\leq$, $\land$ and $+$ on
    affine subspaces of the same dimension: given $(n,\vec v_0,\dots, \vec v_k)$
    and $(m, \vec w_0,\dots, \vec w_j)$ from $\dom(\rho_{\AS})$ with $n > m$, we
    append $n-m$ many zeros to each vector~$\vec w_i$ in order to obtain an affine
    subspace of dimension~$n$. 
    (In the case of subspaces having same dimension, $\leq$ and $\land$ correspond 
    to $\subseteq$ and $\cap$, so we often use these symbols interchangeably.)
    Moreover, 
    all operations are straightforward when at least one of the two arguments 
    is~$\varnothing$.
    Below, let 
    $X \coloneqq (n,\vec v_0,\dots,\vec v_k)$ 
    and $Y \coloneqq (n,\vec w_0,\dots, \vec w_j)$.
    For brevity, we write $V$ and $W$ for the matrices having 
    as columns $\vec v_1,\dots,\vec v_k$ and $\vec w_1, \dots, \vec w_j$, respectively.

    The algorithm for testing $\rho_{\AS}(X) \subseteq \rho_{\AS}(Y)$ is simple: 
    the inclusion holds if and only if $\vec v_0 \in \vec w_0 +W \cdot \QQ^j$ 
    and, for every $i \in [1,k]$, $\vec v_i \in W \cdot \QQ^j$.
    These membership queries boil down to solving systems of equations over $\QQ$, 
    which can be done in polynomial time by, e.g., Gaussian elimination 
    or by appealing to~\Cref{prop:echelon_form}.

    Computing the Minkowski sum is also simple: we have $\rho_{\AS}(X) +
    \rho_{\AS}(Y) = (\vec v_0 + \vec w_0) + U$ with 
    $U \coloneqq \Span_\QQ(\vec v_1,\dots,\vec v_k, \vec w_1,\dots, \vec w_j)$. 
    Hence, the algorithm
    consists in computing a basis (of independent vectors) 
    $\vec t_1,\dots, \vec t_r$ 
    for the vector space $U$, and return 
    $(n,(\vec v_0 + \vec w_0),\vec t_1,\dots, \vec t_r)$. 
    The basis can be directly extracted from the echelon form 
    of the matrix $\begin{bmatrix}V & W\end{bmatrix}$, 
    so the algorithm runs in polynomial time (\Cref{prop:echelon_form}).

    For the intersection $\rho_{\AS}(X) \cap \rho_{\AS}(Y)$, we have: 
    \begin{align*}
        \rho_{\AS}(X)  \cap \rho_{\AS}(Y) &
            = \set{\vec v_0 + V \cdot \vec y : \vec y \in \QQ^k \text{ and } \vec v_0 + V \cdot \vec y = \vec w_0 + W \cdot \vec z \text{ for some } \vec z \in \QQ^j}\\
            & = \vec v_0 + \set{V \cdot \vec y :  \vec y \in \QQ^k \text{ and } V \cdot \vec y - W \cdot \vec z = \vec w_0 - \vec v_0 \text{ for some } \vec z \in \QQ^j}\\
            &= \vec v_0 + V \cdot \pi(U),
    \end{align*}
    where $\pi(\vec y, \vec z) \coloneqq \vec y$ is the projection from $\QQ^{k+j}$ to $\QQ^k$, 
    and $U \coloneqq \set{(\vec y, \vec x): V \cdot \vec y- W \cdot \vec z = \vec w_0 - \vec v_0}$ is an affine subspace
    whose representation $(k+j,\vec u_0,\dots,\vec u_\ell) \in \dom(\rho_{\AS})$ can be computed in polynomial time by putting the matrix $\begin{bmatrix}V^T&-W^T\end{bmatrix}^T$ in echelon form with \Cref{prop:echelon_form}. 
    Let $\vec u_i'$ be the vector obtained from $\vec u_i$ by projecting away the last $j$ entries. We have $\rho_{\AS}(X)  \cap \rho_{\AS}(Y) = (\vec v_0 + V \cdot \vec u_0') + 
    U$, with $U \coloneqq \Span_\QQ((V \cdot \vec u_1'),\dots,(V \cdot \vec u_\ell'))$.
    The algorithm computes a basis~$\vec t_1,\dots,\vec t_r$ for the vector space $U$ 
    (as done for the Minkowski sum), 
    and returns $(n,(\vec v_0 + V \cdot \vec u_0'),\vec t_1,\dots,\vec t_r)$.
\end{proof}

\Cref{step:basic_framework_ptime:Item2} requires an algorithm for inclusion testing of unions of affine subspaces represented as~$\un(\rho_{\AS})$.
The algorithm relies on the following well-known result.

\begin{restatable}{lem}{InclusionASinUofAS}\label{lem:inclusion_AS_in_UofAS}
    Let $X$ be an affine subspace and $Y=\bigcup_{i\in I} Y_i$ be a
    (finite) union of affine subspaces. Then, $X\subseteq Y$ if and only if
    $X\subseteq Y_i$ for some $i\in I$.
\end{restatable}
\begin{proof}
    This can be shown by induction on the size of $I$. If $|I|=1$ then the result is trivial. 
    By induction, suppose the result to be true up to a certain size $k$ and let $I$ of size $k+1$.
    Let $i_0\in I$ and $J=I\setminus\set{i_0}$.
    Assume that $X\subseteq \bigcup_{i\in I} Y_i$. If $X\subseteq Y_{i_0}$ then the result is proven.
    Otherwise, $X\nsubseteq Y_{i_0}$ and we will show that $X\subseteq\bigcup_{j\in J}Y_j$
    which will conclude by induction.

    Pick $x\in X\setminus Y_{i_0}$ and let $y\in X$ be arbitrary.
    The case $y\in \bigcup_{j\in J}Y_j$ is trivial. Otherwise, it must be the case that
    $y\in Y_{i_0}$ since $X\subseteq \bigcup_{i\in I} Y_i$. In particular, $x\neq y$ because $x\notin Y_{i_0}$.
    Since $x,y\in X$ and $x\neq y$, the affine line $L$ that passes through $x$ and $y$ is contained
    in $X$ ($X$ being an affine subspace) and contains infinitely many points. As a result, there exists $i_1\in I$ such
    that $L\cap Y_{i_1}$ contains at least two points. But $Y_{i_1}$ is an affine subspace so $Y_{i_1}$
    contains $L$. This implies that $i_0\neq i_1$ because otherwise we would have $x\in L\subseteq Y_{i_1}=Y_{i_0}$
    which is not possible. Therefore $i_1\in J$ and $y\in L\subseteq Y_{i_1}\subseteq \bigcup_{j\in J}Y_j$.
\end{proof}

\Cref{lem:inclusion_AS_in_UofAS} allows reducing 
inclusion testing of union of affine subspaces 
to inclusion testing on affine subspaces, 
which is in polynomial time by~\Cref{lem:basic_framework_ptime_as}.
\Cref{step:basic_framework_ptime:Item2}~follows.

\begin{lem}\label{lem:framework_assumption_UofAS}
    The structure
    $(\un(\domain),\leq)$ has
    a $(\un(\rho_{\AS}),\parone)$-UXP signature.
\end{lem}

\subsection{Requirement~\ref{step:beforeReducPiUnivToSimpleCasesUxp}: Projections and universal projections}
We now move to the second Requirement of the framework, which adds support for
projections and universal projections. Establishing the
Items~\ref{step:beforeReducPiUnivToSimpleCasesUxp:Item1}
and~\ref{step:beforeReducPiUnivToSimpleCasesUxp:Item3} of
\Cref{step:beforeReducPiUnivToSimpleCasesUxp} is trivial; in $V$-representation, (existential) projection is a simple operation.
Indeed, given $X \in
\dom(\rho_{\AS})$ and $\vec i \in \vec I$, $\dotproj(\vec i, X)$ can be computed
by simply crossing out the entries of all vectors of $X$ corresponding to the
indices in $\vec i$. The resulting vectors $\vec v_0,\dots,\vec v_\ell$ are such that
$\dotproj(\vec i, \rho_{\AS}(X)) =
\stdrepr{\QQ^n}(\vec v_0)+ V$ with $V \coloneqq \Span_{\QQ}\set{\stdrepr{\QQ^n}(\vec v_1),\cdots,\stdrepr{\QQ^n}(\vec v_\ell)}$.
It then suffices to compute, starting from $\vec v_1,\dots,\vec v_\ell$, a basis for
$V$,
which can be done in polynomial time following~\Cref{prop:echelon_form}. 
The following result is thus immediate. 

\begin{restatable}{lem}{ProjAS}\label{lem:proj_AS}
    The structure $(\domain,(\dotproj,\vec I))$ has a $(\rho_{\AS},\parone)$-UXP signature.
\end{restatable}
% \begin{proof}
%     From~\am{[Lemmas 49, 68 and 69 of MFCS]}, 
%     it suffices to show that the map $(k,X)\mapsto\pi_k(X)$ is $(\stdrepr{\NN}\times\rho_{\AS},\rho_{\AS})$-computable in PTIME
%     over\footnote{Technically, over $(k,X)$ where $k\leqslant n$, the ambient dimension of $X$. We will not repeat this detail
%     for every lemma.} $\NN\times\AS$.

%     The proof is analogous to that of \am{[Lemma~\ref{lem:proj_SL_high} in the Presburger part]}, but relies on the PTIME algorithm for the echelon normal form (\Cref{prop:echelon_form}) instead of the algorithm for the Hermite normal form in order to recompute a basis for the projected lattice.
% \end{proof}

On the contrary, universal projections are not \emph{a
priori} easy to compute. 
Our main observation is that affine subspaces and their unions have very special properties that allow us to express
the universal projection in terms of projections. 
For simplicity, below we index entries in vectors starting from one, and instead of considering projections over arbitrary vectors of
indices $\vec i$, we assume $\vec i = [1,k]$ for some $k \in \NN$ so
that~$\unprojrel{Z}(\vec i, X)$ projects over the first $k$ dimensions. This is
w.l.o.g.,~ as we can reorder components appropriately. So, let~$\unprojrel{Z}(k.X) \coloneqq \unprojrel{Z}([1,k], X)$, 
and $\proj(k,X) \coloneqq \proj([1,k],X)$.  
Given a set $X\subseteq\QQ^n$ and $k\leqslant n$, it will be useful
to introduce the following set, for any $\vec p \in\QQ^{n-k}$:
\[
    X_{\vec p} := \set{\vec t\in\QQ^k:(\vec p,\vec t)\in X}.
\]
Note that the value of $k$ is implicit in the notation, but it will always be clear from the context.
We start with an auxiliary lemma.

\begin{lem}\label{lem:slice_of_AS}
    Let $X \subseteq \QQ^n$ be an affine subspace, and $k \leq n$. There is a linear subspace $V\subseteq\QQ^k$
    such that for every~$\vec p \in\pi(k,X)$, 
    there is $\vec b \in \QQ^k$ s.t.~$X_{\vec p} = \vec b + V$. 
    Furthermore, there is a polynomial time algorithm 
    that given in input an element of $\dom(\rho_{\AS})$ representing~$X$, 
    and $k$ (encoded in unary), computes an element of $\dom(\rho_{\AS})$ representing $V$.
\end{lem}
\begin{proof}
    If $\pi(k,X) = \emptyset$, then the statement allows to take an arbitrary linear space~$V \subseteq \QQ^k$.
    Otherwise, consider $\vec p_0 \in \pi(k,X)$. Then $X_{\vec p_0} \neq \emptyset$ is an affine subspace, hence
    $X_{\vec p_0}= \vec t_0 + V$ for some $t_0 \in \QQ^k$ and $V \subseteq \QQ^k$ a linear subspace.
    We show that $V$ is the linear subspace in the statement of the lemma.
    Let $\vec p \in \pi(k,X)$, and $\vec t' \in \QQ^k$ such that $(\vec p ,\vec t')\in X$.
    Denote by $\Lin(X)$ the \emph{linear part} of $X$, i.e., $X = \vec v + \Lin(X)$ for some vector $\vec v$. 
    The following chain of equivalences shows that, for every $\vec x \in \QQ^k$, 
    $\vec x \in X_{\vec p}$ if and only if $\vec x \in \vec t' + V$.
    \begin{align*}
        \vec x\in X_{\vec p}
            &\Leftrightarrow (\vec p, \vec x)\in X\\
            &\Leftrightarrow (\vec p, \vec x)-(\vec p,\vec t')\in \Lin(X)   &&\text{since $X$ is an affine subspace and $(\vec p,\vec t')\in X$}\\
            &\Leftrightarrow (\vec 0,\vec x- \vec t')\in \Lin(X)\\
            &\Leftrightarrow (\vec p_0, \vec t_{0})+(\vec 0,\vec x- \vec t')\in X&&\text{since $X$ is an affine subspace and $(\vec p_0, \vec t_{0})\in X$}\\
            &\Leftrightarrow (\vec p_0, \vec t_{0} + \vec x- \vec t')\in X\\
            &\Leftrightarrow \vec t_{0} + \vec x- \vec t' \in X_{\vec p_0}\\
            &\Leftrightarrow \vec x- \vec t' \in V&&\text{since $X_{\vec p_0} = \vec t_0 + V$}\\
            &\Leftrightarrow \vec x \in \vec t' + V
    \end{align*}
    Consider a representation $(n,\vec v_0,\dots,\vec v_\ell)$ of $X$.
    In order to compute $V$, we first compute (an arbitrary) $\vec p_0 \in \pi(k,X)$, and then consider the quantified system of equalities 
    \[ 
        \exists y_1,\dots,y_\ell : \ 
        \begin{bmatrix}
        \vec x\\ 
        \vec p_0
        \end{bmatrix}
        = \vec v_0 + \vec v_1 \cdot y_1 + \dots \vec v_n \cdot y_\ell.
    \] 
    Note that this is a formula fo weak LRA. 
    Following~\Cref{prop:echelon_form} and~\Cref{lem:proj_AS}, 
    we already know how to compute an affine subspace 
    for the quantifier-free part of this formula (which has variables $\vec x,y_1,\dots,y_\ell$), to then project away the coordinates 
    corresponding to $y_1,\dots,y_\ell$. 
    The result is a family of vectors $\vec w_0,\dots,\vec w_r$ 
    such that $X_{\vec p_0} = \vec w_0 + V$ with $V \coloneqq \Span_{\QQ}\set{\vec w_1,\dots,\vec w_r}$. Therefore, it suffices to return $(k,\vec 0,\vec w_1,\dots,\vec w_r)$ as a representation of $V$.
\end{proof}

\Cref{lem:slice_of_AS} allows us to tackle relative universal projections~$\pi_Z^\forall(k,X)$ in the cases where both $Z$ and $X$ are affine subspaces.

\begin{restatable}{lem}{RelUnivProjASAS}\label{lem:rel_univ_proj_AS_AS}
    Let $X$ and $Z$ be affine subspaces, and let $k \in \NN$. 
    Then either $\unprojrel{Z}(k,X)=\emptyset$ or
    $\unprojrel{Z}(k,X)=\proj(k,X\land
    Z)$. Moreover, there is an algorithm that given in input $X,Z \in \dom(\rho_{\AS})$
    and $k \in \NN$ in unary, returns $Y \in \dom(\rho_{\AS})$ 
    such that $\rho_{\AS}(Y) = \pi_{\rho_{\AS}(Z)}^\forall(k,\rho_{\AS}(X))$. 
    The algorithm runs in polynomial time.
\end{restatable}

\begin{proof}
    Following the discussion at the beginning of the proof of~\Cref{lem:basic_framework_ptime_as}, it is easy to see that the only 
    interesting case is when $X$ and $Z$ are affine subspaces having the same dimension.
    First note that by definition, we must have $\univ{Z}(k,X) \subseteq \pi(k,X \cap Z)$.
    Indeed, if $\vec x \in \univ{Z}(k,X)$ then $\vec x \in\pi(k,Z)$ so there is $\vec t$ such that $(\vec x, \vec t) \in Z$
    and then $(\vec x, \vec t)\in X$ so $(\vec x, \vec t)\in X \cap Z$.
    Consider then the reverse inclusion.
    By \Cref{lem:slice_of_AS}, there are linear subspaces
    $V$ and $W$ such that
    \[
        \forall \vec p \in \pi(k,Z),\exists \vec b_{\vec p} \in \QQ^k, X_{\vec p} = \vec b_{\vec p} + V
        \qquad\text{and}\qquad
        \forall \vec p \in\pi(k,X),\exists \vec g_{\vec p} \in \QQ^k, Z_{\vec p} = \vec g_{\vec p} + W.
    \]
    It follows that
    \begin{align*}
        \univ{Z}(k,X)
            &=\set{\vec x \in \pi(k,Z) : \forall \vec t \in \QQ^k, (\vec x,\vec t) \in Z\Rightarrow (\vec x,\vec t)\in X}\\ 
            &=\set{ \vec x \in\pi(k,Z): Z_{\vec x} \subseteq X_{\vec x}}\\
            &=\set{\vec x \in \pi(k,Z): \vec g_{\vec x}- {\vec b}_{\vec x} + W \subseteq V}\numberthis\label{eq:rel_univ_proj_AS:eq_univ}.
    \end{align*}
    We now distinguish two cases: 
    \begin{enumerate}
        \item If $W \not\subseteq V$ then it cannot be the case that 
        $\vec g_{\vec x} - \vec b_{\vec x} + W \subseteq V$. This is true for every
        $\vec x \in \pi(k,Z)$, and therefore $\univ{Z}(k,X)=\emptyset$.
        \item If $W \subseteq V$, then consider $\vec x \in \pi(k, X \cap Z)$.
        There is $\vec t$ such that $(\vec x, \vec t) \in X \cap Z$, that is,
        $\vec t \in X_{\vec x} \cap Z_{\vec x}$. In particular, 
        $\vec t = \vec b_{\vec x}+ \vec v = \vec g_{\vec x} + \vec w$ for some 
        $\vec v \in V$
        and $\vec w \in W$. But then 
        $\vec g_{\vec x} - \vec b_{\vec x} = \vec v-\vec w \in V$ since 
        $W \subseteq V$ and $W$ is a linear subspace.
        Therefore, 
        $\vec g_{\vec x} - \vec b_{\vec x} + W\subseteq \vec g_{\vec x} 
        - \vec b_{\vec x} +V = V$. This shows that $\vec x \in \univ{Z}(k,X)$
        by Equation~\eqref{eq:rel_univ_proj_AS:eq_univ}.
    \end{enumerate}
    Algorithmically, given $X,Z \in \dom(\rho_{\AS})$, one computes an element
    of $\dom(\rho_{\AS})$ representing
    $\pi_{\rho_{\AS}(Z)}^\forall(k,\rho_{\AS}(X))$ as follows. First, if one
    among $X$ or $Z$ is~$\varnothing$, the algorithm simply
    returns~$\varnothing$. Otherwise, it computes representations of the linear
    subspaces $V$ and $W$ above, by using the algorithm
    from~\Cref{lem:slice_of_AS}. 
    By~\Cref{lem:basic_framework_ptime_as}, testing $W \subseteq V$ 
    can be performed in polynomial time. 
    If the inclusion is not true, the algorithm returns $\emptyset$. 
    Otherwise, it returns $\pi(k, X \land Z)$, 
    which can again be computed in polynomial time by~\Cref{lem:proj_AS}.
\end{proof}

We extend~\Cref{lem:rel_univ_proj_AS_AS} to union of affine subspaces,
completing~\Cref{step:beforeReducPiUnivToSimpleCasesUxp}:

\begin{restatable}{lem}{RelUnivProjASUoAS}\label{lem:rel_univ_proj_AS_UofAS}
    Let $Z$ be an affine subspace, and $X=\bigvee_{j=1}^m X_j$, with ${X_j}$ affine subspace. 
    Let $k \in \NN$. 
    Then,
    $\unprojrel{Z}(k,X)=\bigvee_{j=1}^m
    \unprojrel{Z}(k,X_j)$. Moreover, there is an 
    algorithm that given in input $Z \in \dom(\rho_{\AS})$, $X \in \dom(\un(\rho_{\AS}))$
    and $k \in \NN$ in unary, returns $Y \in \dom(\un(\rho_{\AS}))$ 
    such that $\un(\rho_{\AS})(Y) = \pi_{\rho_{\AS}(Z)}^\forall(k,\un(\rho_{\AS})(X))$. 
    The algorithm runs in polynomial time.
\end{restatable}

\begin{proof}
    Again, without loss of generality we assume $Z$ and each $X_j$ to have the same dimension.
    Note that
    \begin{align*}
        \univ{Z}(k,X)
            &=\set{\vec x \in \pi(k,Z) : Z_{\vec x} \subseteq X_{\vec x} } =
            \set{\vec x \in \pi(k,Z): X_{\vec x} \subseteq \bigcup\nolimits_{j=1}^{m}\,(X_j)_{\vec x}}.
    \end{align*}
    Recall that $Z_{\vec x}$ and $(X_j)_{\vec x}$ are affine subspaces (\Cref{lem:slice_of_AS}). By~\Cref{lem:inclusion_AS_in_UofAS}, $Z_{\vec x} \subseteq \bigcup_{j=1}^m (X_j)_{\vec x}$
    if and only if $Z_{\vec x} \subseteq (X_j)_{\vec x}$ for some $j \in [1,m]$. Hence,
    \begin{align*}
        \univ{Z}(k,X)
            & =\set{\vec x\in\pi(k,Z): X_{\vec x} \subseteq (X_j)_{\vec x} \text{ for some } j \in [1,m]}\\
            &=\bigcup\nolimits_{j=1}^m\set{x \in \pi(k,Z): Z_{\vec x} \subseteq (X_j)_{\vec x}} = \bigcup\nolimits_{j=1}^m \univ{Z}(k,X_j).
    \end{align*}
    Algorithmically, given $Z \in \dom(\rho_{\AS})$ and $(X_1,\dots,X_m) \in \dom(\un(\rho_{\AS}))$,
    by~\Cref{lem:rel_univ_proj_AS_AS} we can compute in polynomial time a representation $Y_j \in \dom(\rho_{\AS})$ of $\unprojrel{\rho_{\AS}(Z)}(k,\rho_{\AS}(X_j))$, for every $j \in [1,m]$. 
    Then, it suffices to return $Y \coloneqq (Y_1,\dots,Y_m)$.
\end{proof}

The main result of the section follows:

\begin{thm}
    The $k$ negations satisfiability problem
    for weak~LRA is in PTIME.
\end{thm}

\begin{proof}
    \Cref{lem:basic_framework_ptime_as,lem:framework_assumption_UofAS} 
    imply \Cref{step:basic_framework_ptime}, 
    and
    \Cref{lem:proj_AS,lem:rel_univ_proj_AS_UofAS} imply~\Cref{step:beforeReducPiUnivToSimpleCasesUxp}. 
    By~\Cref{lemma:step-one-two-imply-FO-tract}, 
    ~$\mcD$ has a
    $(\dnf(\rho),\depth(\parone))$-UXP signature.
    Then, the theorem follow from~\Cref{theorem:PuttingAllTogether}.
\end{proof}
\section{Instantiation of the framework to weak linear integer arithmetic}
\label{sec:shifted_lattices} 
In this section, we apply our framework  
to show that the $k$ negations satisfiability problem for weak Presburger
arithmetic (weak PA), i.e.~the FO theory of the structure $\mathcal{Z} =
(\ZZ,0,1,+,=)$, is in PTIME. Our presentation follows quite closely of~\Cref{sec:affine_subspaces}, although the details are now more intricate, 
particularly those involving universal projection.

\subsection{Setup} We define the domain $\domain$, its representation $\rho$ and
the change of representation $F \colon \AC(\sigma) \to \dom(\rho)$ required
by~\Cref{theorem:PuttingAllTogether}. In weak PA, conjunctions of atomic
formulae are systems of affine equations, which over $\ZZ$ define \emph{shifted
(integer) lattices} (SL), which are not necessarily fully dimensional. We
let~$\domain_n$ be the set of all shifted lattices of $\ZZ^n$, so that $\domain$
is the set of all shifted lattices of~$\ZZ^n$ for some $n$. We represent
elements in $\domain$ with the standard representation of shifted lattice as a
\emph{base} point together with a collection of linearly independent vectors
(the \emph{periods} of the lattice). Recall that we write $\nu_{\ZZ^n}$ for the
canonical representation of $\ZZ^n$ (see~\Cref{sec:represent-and-complex}).
Formally, we define the representation~$\rho_{\SL}$ for shifted lattices as
follows. For every $n\in\NN$, if $v_0$ represents a vector in $\ZZ^n$, and
$v_1,\ldots,v_k$ represent linearly independent vectors in $\ZZ^n$, then
$(n,v_0,\dots,v_k) \in \dom(\rho_{\SL})$ and
\[
  \rho_{\SL}(n,v_0,\cdot\cdot\cdot,v_k) 
  \coloneqq 
  \stdrepr{\ZZ^n}(v_0)+\Span_{\ZZ}\set{\stdrepr{\ZZ^n}(v_1),\cdot\cdot\cdot,\stdrepr{\ZZ^n}(v_k)},
\]
which is a shifted lattice in $\domain_n$. As in~\Cref{sec:affine_subspaces}, $+$ stands for the Minkowski
sum, and given ${\vec v_1,\dots,\vec v_k \in \ZZ^n}$, $\Span_{\ZZ}\set{\vec v_1,
\dots, \vec v_k} \coloneqq \{\lambda_1 \cdot \vec v_1 + \dots + \lambda_k \cdot
\vec v_k : \lambda_1,\dots,\lambda_k \in \ZZ\}$ is a (non-shifted)
\emph{lattice}. 
We use a constant symbol $\varnothing$ to represent the
empty lattice, i.e., $\rho_{\SL}(\varnothing) \coloneqq \emptyset$, and assume
$\emptyset \in \domain_n$ for every $n \in \NN$.

Below, to ease the presentation, we do not make a distinction between elements
of $\ZZ^n$ and elements of $\dom(\stdrepr{\ZZ^n})$, that is we assume $\ZZ$ to
be the set of integers encoded in binary (hence, all algorithm we give assume
integers represented via~$\stdrepr{\ZZ^n}$). Because of this (usual) assumption,
occurrences of the map $\stdrepr{\ZZ^n}$ are often omitted, and $\rho_{\SL}$ is
seen as a function taking as input $n \in \ZZ$ together with linearly
independent vectors of $\ZZ^n$ (or $\varnothing$).

A polynomial time function $F$ allowing to change representation from
conjunctions of atomic formulae of~$\FO(\mathcal{Z})$ to elements in
$\dom(\rho_{\SL})$ can be obtained due to the Hermite normal form of an integer
matrix being computable in polynomial time. See~\cite[Chapter~4]{Schrijver99}
for an introduction to the Hermite normal form. Briefly, recall that the Hermite
normal form $H \in \ZZ^{n \times d}$ of a matrix $A \in \ZZ^{n \times d}$ is
unique and has (among others) the following properties:
\begin{itemize}
  \item $H$ is lower triangular (so, its non-zero columns are linearly
  independent) and every pivot of a non-zero column is positive,
  \item $H = A \cdot U$ for some unimodular matrix $U \in \ZZ^{d \times d}$ (i.e.~a matrix with determinant $\pm 1$), 
  \item $H$ generates the same (non-shifted) lattice as $A$, i.e., $H \cdot
  \ZZ^d = A \cdot \ZZ^d$. 
\end{itemize}

\begin{propC}[\cite{KannanB79}]\label{prop:hnf_snf}
  There is a polynomial time algorithm to compute the Hermite normal
  form~$H$, along with the transformation matrix~$U$, of a given
  matrix~${A \in \ZZ^{n \times d}}$.
\end{propC}

\begin{lem}\label{lemma:change-of-representation}
  There is a polynomial time function $F$ that given in input a system of
  equations $A \cdot \vec x = \vec b$ in $d$ variables, returns $\varnothing$ if
  the system is unsatisfiable, and otherwise it returns a tuple 
  $(d,\vec v_0,\dots,\vec v_k)$ where $\vec v_0 \in \ZZ^d$, $\vec v_1,\dots,\vec v_k \in \ZZ^d$ are $k$
  linearly independent vectors, and 
  $\{ \vec x \in \ZZ^d : A \cdot \vec x = \vec b \} 
  = \vec v_0 + \Span_{\ZZ}\set{\vec v_1,\dots, \vec v_k}$.
\end{lem}

\begin{proof}
  The function $F$ relies on the algorithm of~\Cref{prop:hnf_snf} 
  to compute $H$ in Hermite normal form and the unimodular matrix $U$ such that ${H = A \cdot U}$. 
  We have ${\{ \vec x \in \ZZ^d : A \cdot \vec x = \vec b \}} = \{ U \cdot \vec y : \vec y \in \ZZ^d \text{ and } H \cdot \vec y = \vec b\}$, as setting $\vec x = U \cdot \vec y$ yields $\vec y = U^{-1} \cdot \vec x$ and $H \cdot U^{-1} = A$.
  Let $\vec y = (y_1,\dots,y_d)$.
  Since $H$ is triangular, we have $H = [\vec b_1 \mid \dots \mid \vec b_j \mid \vec 0 \mid \dots \mid \vec 0]$ where the columns $\vec b_1,\dots,\vec b_j$ are non-zero. 
  Moreover, if $A \cdot \vec x = \vec b$ has a solution, then there is a unique way to generate $\vec b$ as a linear combination of $\vec b_1,\dots,\vec b_j$. Finding the values $y_1^*,\dots,y_j^*$ for $y_1,\dots,y_j$ that generate $\vec b$ is trivial (briefly, $y_1^*$ is the only integer making the first non-zero entry of $\vec b$ equal to the entry of $\vec b_1 \cdot y_1^*$ in the same position; update then $\vec b$ to $\vec b - \vec b_1 \cdot y_1^*$ and recursively find values for $y_2,\dots,y_j$).
  If $y_1^*,\dots,y_j^*$ do not exist, then $F$ outputs $\varnothing$.
  Otherwise, let $U = [\vec u_1 \mid \dots \mid \vec u_d]$.
  We set $\vec v_0 \coloneqq \vec u_1 \cdot y_1^* + \dots + \vec u_j \cdot y_j^*$.
  Note that, because of the shape of $H$, the values assigned to the variables $y_{j+1},\dots,y_d$ do not influence the satisfaction of $H \cdot \vec y = \vec b$.
  Then, $F$ returns 
  $(d, \vec v_0, \vec u_{j+1},\dots,\vec u_{d})$.
\end{proof}

Since the function $F$ runs in polynomial time, the parameter $\xi$
in~\Cref{theorem:PuttingAllTogether} equals~$\parone$, and to instantiate the framework
we need to show that~$\mcD$ has a
$(\dnf(\rho_{\SL}),\depth(\parone))$-UXP signature, by establishing
\Cref{step:basic_framework_ptime}
and~\ref{step:beforeReducPiUnivToSimpleCasesUxp}.

\subsection{Requirement~\ref{step:basic_framework_ptime}, Item~\ref{step:basic_framework_ptime:Item1}: 
  the structure~\texorpdfstring{$(\domain,\land,\leq)$}{(D,conj,leq)} 
  has a~\texorpdfstring{$(\rho_{\SL},\parone)$}{(rhoSL,1)}-UXP signature} 
Briefly, both the
problems of computing intersections and testing inclusions
for two shifted lattices represented as in $\rho_{\SL}$
reduces to solving systems of linear equations over $\ZZ$,
which can be done in polynomial time again thanks
to~\Cref{prop:hnf_snf} (alternatively,~\Cref{lemma:change-of-representation}). 
We now formalise these reductions.

\begin{restatable}{lem}{BasicFrameworkAssumptionSL}\label{lem:basic_framework_ptime_sl}
  $(\domain,\land,\leq)$ has a $(\rho_{\SL},\parone)$-UXP signature.
\end{restatable}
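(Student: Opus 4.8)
We need to show two things: that intersection of shifted lattices can be computed in polynomial time from the $\rho_{\SL}$-representation, and that inclusion between shifted lattices can be decided in polynomial time. Since the target parameter is $\parone$, "UXP reduction" here just means "polynomial-time computable function" (with $\theta(F(w)) \le G(\eta(w))$ automatically satisfied by taking $G$ large enough, as the output of the intersection is again a list of linearly independent integer vectors whose bit-length is polynomially bounded — this follows from Cramer-type bounds on the Hermite normal form). So the content is two polynomial-time algorithms.

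For intersection, the plan is as follows. Given two shifted lattices $L_1 = \vec v_0 + \Span_\ZZ\{\vec v_1,\dots,\vec v_k\}$ and $L_2 = \vec w_0 + \Span_\ZZ\{\vec w_1,\dots,\vec w_\ell\}$ in $\ZZ^n$ (after padding the shorter one with zero coordinates so both live in $\ZZ^{\max(n,m)}$, consistent with the definition of $\land$ in $\FO(\mcA)$), a point lies in $L_1 \cap L_2$ iff there exist $\lambda \in \ZZ^k$ and $\mu \in \ZZ^\ell$ with $\vec v_0 + V\lambda = \vec w_0 + W\mu$, i.e. $V\lambda - W\mu = \vec w_0 - \vec v_0$, where $V,W$ are the matrices of periods. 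This is a single system of linear equations over $\ZZ$ in the unknowns $(\lambda,\mu)$; by \Cref{lemma:change-of-representation} (equivalently \Cref{prop:hnf_snf}) we can, in polynomial time, either detect that it is unsatisfiable — in which case we output $\varnothing$ — or obtain a particular solution $(\lambda^*,\mu^*)$ together with a $\ZZ$-basis $(\lambda^{(1)},\mu^{(1)}),\dots,(\lambda^{(r)},\mu^{(r)})$ of the solution lattice of the homogeneous system. Then $\vec v_0 + V\lambda^*$ is a base point of $L_1 \cap L_2$, and the vectors $V\lambda^{(1)},\dots,V\lambda^{(r)}$ span its period lattice; we finally run Hermite normal form once more on this spanning set to extract a linearly independent subset, producing a valid $\rho_{\SL}$-representation. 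Correctness is immediate from the description; polynomiality follows from \Cref{prop:hnf_snf} and the fact that all intermediate integers stay polynomially bounded.

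For inclusion, $L_1 \leq L_2$ (which by the definition of $\leq$ in $\FO(\mcA)$, after zero-padding to a common ambient dimension, means $L_1 \subseteq L_2$) holds iff: (i) $L_2 \neq \varnothing$ and $\vec v_0 \in L_2$, and (ii) every period $\vec v_i$ of $L_1$ lies in the homogeneous lattice $\Span_\ZZ\{\vec w_1,\dots,\vec w_\ell\}$. Membership of a given vector in a lattice generated by given vectors is again solving a linear system over $\ZZ$ (does $W\mu = \vec v_i - \vec v_0$, resp.\ $W\mu = \vec v_0 - \vec w_0$, have an integer solution?), decidable in polynomial time via \Cref{prop:hnf_snf}; we perform $k+1$ such tests. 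The empty-lattice cases are handled separately and trivially. I expect no genuine obstacle here — the only mild care needed is the bookkeeping around differing ambient dimensions and the degenerate cases involving $\varnothing$, and a one-line remark that the bit-sizes of the output vectors of the intersection algorithm are polynomially bounded so that the $\theta$-bound required of a UXP reduction is met.
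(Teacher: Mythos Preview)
Your approach is essentially identical to the paper's: reduce both intersection and inclusion of shifted lattices to solving integer linear systems via the Hermite normal form (\Cref{prop:hnf_snf} and \Cref{lemma:change-of-representation}), with a final HNF pass to extract a linearly independent basis for the intersection's periods. Two small slips to fix: when aligning ambient dimensions you must not only zero-pad the existing vectors but also adjoin the $m-n$ new unit vectors as additional periods (so the lift is $X \times \ZZ^{m-n}$, as the paper does, not $X \times \{0\}^{m-n}$), and the membership test for a period $\vec v_i$ in the homogeneous lattice should read $W\mu = \vec v_i$, not $W\mu = \vec v_i - \vec v_0$.
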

\begin{proof}
  First, note that in case one of the inputs given to $\land$ or $\leq$ is $\varnothing$, computing the output is trivial. Hence, we only consider the case of non-empty shifted lattices.
  Moreover, note that we can restrict ourselves to 
  shifted lattices having the same dimension. 
  (As in the previous section, this restriction means that $\land$ and $\leq$ 
  are equivalent to $\cap$ and $\subseteq$.)
  Indeed, consider elements $X = (n,\vec v_0,\dots,\vec v_k)$ and $Y = (m,\vec w_0,\dots, \vec w_j)$ from $\dom(\rho_{\SL})$, 
  with $n < m$. 
  Recall that, from the definition of FO theory given in~\Cref{sec:fo_structures}, 
  the shifted lattice $\rho_{\SL}(X)$ can be extended into a shifted lattice in $\ZZ^m$ as $\rho_{\SL}(X) \times \ZZ^{m-n}$.
  At the level of representation, this corresponds to adding $m-n$ vectors $\vec v_{k+1},\dots,\vec v_{k+m-n}$ such that, for $i \in [1,m-n]$, $\vec v_i$ has a $1$ in position $n+i$ and zeros in all other position (i.e., essentially adding an identity matrix for the last $m-n$ dimensions).
  One can then consider $X' = (m,\vec v_0,\dots, \vec v_k,\dots, \vec v_{k+m-n})$ instead of $X$ to compute $\land$ and $\leq$. Adding the new vectors can be done in polynomial time.
  
  Consider $X = (n,\vec v_0,\dots, \vec v_k)$ and $Y = (n, \vec w_0,\dots, \vec w_j)$ from $\dom(\rho_{\SL})$. We show that computing an element of $\dom(\rho_{\SL})$ representing $\rho_{\SL}(X) \cap \rho_{\SL}(Y)$ can be done in polynomial time.
  Let $A \coloneqq [ \vec v_1 \mid \dots \mid \vec v_k ]$ 
  and $B \coloneqq [ \vec w_1 \mid \dots \mid \vec w_j]$ be the matrices whose columns 
  correspond to the periods of the shifted lattices represented by $X$ and $Y$, respectively. Then,
  \begin{align*}
    \rho_{\SL}(X) \cap \rho_{\SL}(Y)
    &= 
    \left\{
    \begin{aligned}
    \vec v \in \ZZ^m : \ 
    &\vec v = \vec v_0 + A \cdot \vec y 
    = \vec w_0 + B \cdot \vec z\\
    &\text{for some }
    \vec y \in \ZZ^k \text{ and } 
    \vec z \in \ZZ^j 
    \end{aligned}
    \right\}\\ 
    &=
    \vec v_0
    + \left\{
    \begin{aligned}  
      A \cdot \vec y : \ 
      &
      \vec y \in \ZZ^k 
      \text{ and }
      A \vec y - B \vec z = \vec w_0  - \vec v_0\\
      &\text{for some } \vec z \in \ZZ^j
    \end{aligned}
    \right\}\\
    &= 
    \vec v_0 + A \cdot p(Z),
  \end{align*}
  where $p \colon \ZZ^{k} \times \ZZ^{j} \to \ZZ^k$ is the projection $p(\vec y,
  \vec z) = \vec y$ and $Z \coloneqq \{(\vec y, \vec z) \in \ZZ^{k} \times
  \ZZ^{j} : A \vec y - B \vec z = \vec w_0  - \vec v_0 \}$. In particular, $Z$
  is a shifted lattice given by the set of solutions of the weak PA formula $A
  \vec y - B \vec z = \vec w_0  - \vec v_0$. By appealing
  to~\Cref{lemma:change-of-representation}, we can compute a representation
  $(k+j,\vec z_0,\dots, \vec z_\ell)$ in $\dom(\rho_{\SL})$ of $Z$. The
  projection $p$ simply removes the last $j$ components of 
  $\vec z_0,\dots, \vec z_\ell$, resulting in vectors 
  $\vec z_0',\dots, \vec z_\ell' \in \ZZ^k$. 
  We have $\rho_{\SL}(X) \cap \rho_{\SL}(Y) = \vec v_0 + A \cdot \vec z_0' +
  {\Span_{\ZZ}\set{A \cdot \vec z_1',\ \dots,\ A \cdot \vec z_r'}}$. A
  representation for ${\rho_{\SL}(X) \cap \rho_{\SL}(Y)}$ is then given by
  $(n,\vec v_0 + A \cdot \vec z_0',\vec u_1, \dots, \vec u_r)$, where 
  $\vec u_1,\dots,\vec u_r$ form a basis for ${\Span_{\ZZ}\set{A \cdot \vec z_1',\
  \dots,\ A \cdot \vec z_r'}}$. To compute such a basis (in polynomial time) it
  suffices to put the matrix 
  $[A \cdot \vec z_1' \mid \dots \mid A \cdot \vec z_r']$ in Hermite normal 
  form and take as $\vec u_1,\dots,\vec u_r$ all its
  non-zero columns.

  Consider now the case of establishing whether $\rho_{\SL}(X) \subseteq \rho_{\SL}(Y)$. 
  It is easy to see that this inclusion holds if and only if $\vec v_0 \in \rho_{\SL}(Y)$ and for every $i \in [1,k]$, $\vec v_i \in \Span_{\ZZ}\set{\vec w_1,\dots,\vec w_j}$. 
  The right to left direction is trivial: 
  as $\Span_{\ZZ}\set{\vec w_1,\dots,\vec w_j}$ is closed under linear combinations, we conclude $\Span_{\ZZ}\set{\vec v_1,\dots,\vec v_k} \subseteq \Span_{\ZZ}\set{\vec w_1,\dots,\vec w_j}$ which, together with 
  $\vec v_0 \in \rho_{\SL}(Y)$, implies 
  $\rho_{\SL}(X) \subseteq \rho_{\SL}(Y)$.
  For the left to right direction, $\rho_{\SL}(X) \subseteq \rho_{\SL}(Y)$ directly implies 
  $\vec v_0 \in \rho_{\SL}(Y)$ 
  and 
  $\vec v_0 + \vec v_i \in \rho_{\SL}(Y)$, for every $i \in [1,k]$.
  Then, 
  \begin{align*}
  \vec v_i &= (\vec v_0 + \vec v_i) - \vec v_0\\ 
  &= \vec w_0 + B \cdot \vec z_1 - (\vec w_0 + B \cdot \vec z_2) 
  \text{ for some } \vec z_1,\vec z_2 \in \ZZ^j\\
  & = B \cdot (\vec z_1 - \vec z_2) \text{ for some } \vec z_1,\vec z_2 \in \ZZ^j\\ 
  & \in \Span_{\ZZ}\set{\vec w_1,\dots, \vec w_j}.
  \end{align*}
  The various membership queries $\vec v_0 \in \rho_{\SL}(Y)$ and, for all $i \in [1,k]$, $\vec v_i \in \Span_{\ZZ}\set{\vec w_1,\dots, \vec w_j}$
  ask for the feasibility of some systems of linear equations. This problem can be solved in polynomial time by Gaussian elimination (or by checking if $F$ in~\Cref{lemma:change-of-representation} returns $\varnothing$). 
\end{proof}

\subsection{Auxiliary lemmas on lattices}

Before moving to~\Cref{step:basic_framework_ptime:Item2}  
of~\Cref{step:basic_framework_ptime}, a few more notions and lemmas on lattices are required. 
Consider linearly independent vectors $\vec v_1,\dots,\vec v_k \in \ZZ^d$ and the lattice  $L \coloneqq \Span_{\ZZ}\set{\vec v_1,\dots, \vec v_k}$.
The determinant of $L$ is defined by $\det(L):=\sqrt{\det(B^\intercal B)}$ where $B \coloneqq [\vec v_1 \mid \dots \mid \vec v_k]$.
When $L$ is fully-dimensional (i.e., $k = d$), 
we have $\det(L)=|\det(B)|$.

\begin{propC}[{\cite[Lecture 1, Theorem~16]{LectureLattice}}]\label{lem:lattice_contained_detZ}
  Let $L \subseteq \ZZ^d$ be a fully-dimensional lattice.
  Then, $\det(L) \cdot \ZZ^d \subseteq L$.
\end{propC}

We let $L^\perp$ denote the \emph{orthogonal lattice} to $L$,
given by $L^\perp \coloneqq \set{\vec y\in\ZZ^d:{\forall \vec x\in L,} {\Angle{\vec y,\vec x}=0}}$, where $\Angle{\cdot,\cdot}$ stands for the dot product. See e.g.~\cite{10.1007/BFb0052236}.

\begin{lem}\label{lem:orto-lattice-properties}
  Let $L \subseteq \ZZ^d$ be a lattice of dimension $k$. Then, $L^\perp$ is a lattice of dimension $d-k$, and $L + L^\perp$ is a fully-dimensional lattice.
  Moreover, there is a polynomial time algorithm that on input $X = (d, \vec 0, \vec v_1,\dots, \vec v_k) \in \dom(\rho_{\SL})$ returns a basis $\vec v_{k+1},\dots,\vec v_d \in \ZZ^d$ of $L^{\perp}$, where $L \coloneqq \rho_{\SL}(X)$; and the vectors $\vec v_1,\dots, \vec v_d$ form a basis of $L + L^{\perp}$. 
\end{lem}

\begin{proof}
    For $k = 0$, i.e.,~when $L$ has dimension $0$, 
    $L^{\perp} = \ZZ^d$ and the lemma is trivial.
    Below, we show the result for $k=1$, and then generalise it to arbitrary~$k \geq 2$. 

    Let $L= \ZZ \cdot \vec v_1$ where $ \vec v_1 = (u_1,\dots,u_d) \in\ZZ^d \setminus \set{\vec 0}$.
    Without loss of generality, since $\vec v_1 \neq 0$, we can permute the coordinates so that $u_1\neq 0$.
    For $j\in [1,d]$, let
    $g_j \coloneqq \gcd(u_1,\ldots,u_j)$, which is non-zero as $u_1\neq0$.
    Note that $g_{j+1} = \gcd(g_j,u_{j+1})$ (for $j < d$). 
    We rely on the extended Euclidean algorithm for GCD to compute in polynomial time each $g_j$ and the Bézout's coefficients $a_1,\ldots,a_j\in\ZZ$
    such that $g_j=a_{j,1}u_1+\cdots+a_{j,j}u_j$.
    For $j \in [1,d-1]$, let
    \[
        \vec v_{j+1} \coloneqq (-\beta_j a_{j,1},\,\ldots,-\beta_j a_{j,j},\,\tfrac{g_j}{g_{j+1}},\,0,\,\ldots,\,0), \text{ where }
        \beta_j \coloneqq \tfrac{u_{j+1}}{g_{j+1}}.
    \]
    Note that both $\beta_j$ and $\tfrac{g_j}{g_{j+1}}$ belong to $\ZZ$, by definition of $g_j$ and $g_{j+1}$. 

    Observe that $\vec v_2,\dots,\vec v_d$ is a family of $d-1$ linearly independent vectors, as they form an echelon family of vectors.
    Let $L'\coloneqq \Span_\ZZ\set{\vec v_2,\ldots, \vec v_{d}}$, which is a lattice of dimension $d-1$. We claim that $L'=L^\perp$. 

    First, we show that $L'\subseteq L^\perp$. Let $j\in\set{1,\ldots,d-1}$, we need to show that $\Angle{\vec v_1, \vec v_{j+1}}=0$. This is simple to check:
    \begin{align*}
        \Angle{\vec v_1,\vec v_{j+1}}
            &=u_1\cdot(-\beta_ja_{j,1})+\cdots+u_j\cdot(-\beta_ja_{j,j})+u_{j+1}\cdot\tfrac{g_j}{g_{j+1}}\\
            &=\beta_j\left(g_j-u_1a_{1,j}-\cdots-u_ja_{j,j}\right)
             && \text{since }u_{j+1}\cdot\tfrac{g_j}{g_{j+1}}=\beta_jg_j\\
            &=0 && \hspace*{52pt}\text{by def.~of }g_j.
    \end{align*}

    Let us now show that~$L^\perp\subseteq L'$.
    For any $\vec x = (x_1,\dots,x_d)\in\ZZ^d$, define $k(\vec x) \coloneqq \max(\set{0}\cup\set{i:x_i \neq 0})$. We show that if $\vec x\in L^\perp$ then $\vec x\in L'$, for every $\vec x \in \ZZ^d$,
    by induction on $k(\vec x)$. If $k(\vec x)=0$ then $\vec x=0$ so the result is trivial. Otherwise, let $\ell:=k(\vec x)$
    and observe that
    $\Angle{\vec v_1,\vec x}=0$ since $\vec x\in L^\perp$, that is
    \[
        u_1x_1+\cdots+u_{\ell-1}x_{\ell-1}=-u_\ell x_\ell.
    \]
    By Bézout's identity, $\gcd(u_1,\ldots,u_{\ell-1})=g_{\ell-1}$ divides $u_\ell x_\ell$. Furthermore,
    note that $g_\ell=\gcd(g_{\ell-1},u_\ell)$ by definition. 
    Note that, for all integers $a,b,c$, if $a$ divides $b \cdot c$ then $a$ divides $\gcd(a,b) \cdot c$.
    Hence,
    $g_{\ell-1}$ divides $g_\ell x_\ell$, and $N \coloneqq \tfrac{g_\ell x_\ell}{g_{\ell-1}}$ is an integer.    
    Now consider the vector $\vec x' = (x_1',\dots,x_k')=\vec x-N \vec v_{\ell}$.
    Recall that $\vec v_{\ell}\in L'\subseteq L^\perp$ by the previous inclusion, so $\vec x'\in L^\perp$
    since $L^\perp$ is a lattice. Furthermore, we claim that $k(\vec x')<k( \vec x)$. Indeed,
    the coordinates of $\vec x$ above $\ell=k(\vec x)$ are all zero, and the same is true for $\vec v_{\ell}$ by definition.
    The $\ell$-th coordinate is
    \[
        x'_{\ell}
            \ = \ x_\ell-N \cdot \tfrac{g_{\ell-1}}{g_\ell}
            \ = \ x_\ell-\tfrac{g_\ell x_\ell}{g_{\ell-1}}\cdot\tfrac{g_{\ell-1}}{g_\ell}
            \ = \ 0.
    \]
    This shows that $k(\vec x')<\ell=k(\vec x)$ so by induction $\vec x'\in L'$. 
    Since $L'$ is a lattice and $\vec v^{\ell-1}\in L'$, we conclude that~$\vec x= \vec x'+N \vec v_{\ell}\in L'$.

    In summary, we have shown that $L^\perp=L'$ when $L=\ZZ \vec v_1$, and that $\dim(L^\perp)=d-1$. It follows
    immediately by the orthogonality of the vectors that $\dim(L+L^\perp)=\dim(L)+\dim(L^\perp)=d$.
    Finally, computing $\vec v_2,\dots,\vec v_k$ can be done in polynomial time. 
    Hence, the lemma is proven when $k = 1$.

    For the case $k \geq 2$, observe that 
    given $L=\Span_{\ZZ}\set{\vec v_1,\ldots,\vec v_k}$, we have~$L^\perp=\bigcap_{i=1}^k(\ZZ \vec v_i)^\perp$.
    It is clear then that $\dim(L^\perp)\geqslant d-k$ since every $(\ZZ \vec v_i)^\perp$ has dimension $d-1$.
    On the other hand, it is easy to see that the orthogonality yields 
    $\dim(L+L^\perp)=\dim(L)+\dim(L^\perp)$ and therefore
    $\dim(L)+\dim(L^\perp)\leqslant d$, so we conclude that $\dim(L^\perp)=d-k$ and $\dim(L+L^\perp)=\dim(L)+\dim(L^\perp)=d$.
    To compute a basis for $L^\perp$, we first 
    compute a basis for each $(\ZZ \vec v_i)^\perp$ in polynomial time by relying on the argument used above for the case $k = 1$. 
    Afterwards, the intersection of the $d$ resulting lattices can be computed in polynomial time by slightly extending the arguments used in~\Cref{lem:basic_framework_ptime_sl} to compute $\rho_{\SL}(X) \land \rho_{\SL}(Y)$. 
    Given, for every $i \in [1,k]$, the matrix $A_i \in \ZZ^{d \times (d-1)}$ whose columns form a basis of $(\ZZ \vec v_i)^\perp$, we have 
    \begin{align*}
      L^{\perp} & = \left\{  
      \begin{aligned}
        A_1 \cdot \vec y_1 : {}&
        \vec y_1 \in \ZZ^{d-1} \text{ and } A_1 \cdot \vec y_1 = A_2 \cdot \vec y_2 = \dots = A_d \cdot \vec y_d\\ 
        &\text{for some } \vec y_2,\dots,\vec y_d \in \ZZ^{d-1}
      \end{aligned}
      \right\}\\
      & = A \cdot \pi(Z),
    \end{align*} 
    where $\pi \colon (\ZZ^{(d-1)})^d \to \ZZ^{d-1}$ 
    is here the projection into the first vector of dimension $d-1$,
    and $Z \coloneqq \{(\vec y_1, \dots, y_d) \in (\ZZ^{(d-1)})^d : A_1 \cdot \vec y_1 = A_2 \cdot \vec y_2 = \dots = A_d \cdot \vec y_d \}$.
    Then, the computation of a basis for $L^\perp$ proceeds as in~\Cref{lem:basic_framework_ptime_sl}, by appealing to~\Cref{prop:hnf_snf}. 
\end{proof}

Below, given an assertion $\Phi$,
we write $\indicator{}{\Phi}$ for
the indicator function defined as $\indicator{}{\Phi} = 1$ if~$\Phi$ is true and $\indicator{}{\Phi} = 0$ otherwise.

\begin{lem}\label{lemma:lattice-volume}
  Let $L = \Span_{\ZZ}\set{\vec v_1,\dots, \vec v_k} \subseteq \ZZ^d$ and $s \geq 1$ such that $s \cdot \ZZ^d \subseteq L$. 
  For every $\vec v_0 \in \ZZ^d$, $|(\vec v_0 + L) \cap [0,s)^d | = \frac{s^d}{\det(L)}$.
\end{lem}

\begin{proof}
  Note that the hypotheses of the lemma imply $s \in \NN$ and that $L$ is fully-dimensional. Considering $\vec v_0 = \vec 0$ suffices to show the lemma. Indeed, given $\vec w \in \ZZ^d$, we have 
  \begin{align*}
    |(\vec w + L) \cap [0,s)^d| & = 
    \sum_{\vec y \in [0,s)^d} \indicator{}{\vec y \in (\vec w + L)}\\
    & = \sum_{\vec y \in [0,s)^d} \indicator{}{\vec y - \vec w \in L}\\
    & = \sum_{\vec y \in [0,s)^d} \indicator{}{(\vec y - \vec w \bmod s) \in L} 
    &\text{since } s \cdot \ZZ^d \subseteq L\\ 
    & = \sum_{\vec y \in [0,s)^d} \indicator{}{\vec y \in L}
    &
    \hspace{-30pt}
    \begin{aligned}
      \text{since } f \colon [0,s)^d \to [0,s)^d\\ 
      \text{defined as }f(\vec y) \coloneqq \vec y - \vec x \bmod s\\
      \text{is a bijection}
    \end{aligned}\\ 
    & = |L \cap [0,s)^d|.
  \end{align*}
  Let us show that $|L \cap [0,s)^d| = \frac{s^d}{\det(L)}$. 
  Let $H$ be the Hermite normal form of the matrix $A \coloneqq [\vec v_1 \mid \dots \mid \vec v_k]$, and $U$ be the unimodular matrix 
  such that 
  $H = A \cdot U$. Since $L$ is fully-dimensional, $H$ is invertible and (from the fact that $H$ is triangular and has positive pivots), we conclude that 
  \[ 
    H=\begin{bmatrix}
      p_1 & 0 & 0 & \cdots & 0 \\
      a_{2,1}   & p_2     & 0 &  \cdots & 0 \\
      a_{3,1}   & a_{3,2}       & p_3 &  \cdots & 0 \\
      \vdots   & \vdots  & \vdots  & \ddots & \vdots \\
      a_{d,1}   & a_{d,2} & a_{d,3}   & \cdots  & p_d
     \end{bmatrix}.
  \]
  Here, $p_i$ is strictly positive, for every $i \in [1,d]$. 
  We claim that $p_i$ divides~$s$.
  Indeed, since $s \cdot \ZZ^d \subseteq L$, we have $s \cdot \vec e_i \in L$, where $\vec e_i$ is the $i$th unit vector of the canonical basis of $\ZZ^d$.
  Let $\vec x = (x_1,\dots,x_d) \in \ZZ^d$ such that $H \cdot \vec x = s \cdot e_i$. Because of the shape of $H$, it must be the case that $x_j = 0$ for every $j \in [1,i-1]$. This implies $p_i \cdot x_i = s$, i.e., $p_i$ divides $s$.
  Since $U$ is unimodular, 
  the function $f(\vec u) \coloneqq U \cdot \vec u$ is a bijection on $\ZZ^d$, 
  that is, $U \cdot \ZZ^d = \ZZ^d$. 
  We get:
  \begin{align*}
    |L \cap [0,s)^d| 
      &= | \{ \vec y \in [0,s)^d : \exists \vec x \in \ZZ^d, \vec y = A \cdot \vec x \}|\\
      &= |\{ \vec y \in [0,s)^d : \exists \vec x \in \ZZ^d, \vec y = H \cdot \vec x \}|.
  \end{align*}
  Let us denote with $(H \cdot \vec x)_i$ the $i$th entry of $H \cdot \vec x$. We have,
  \begin{align*}
    0 \leq (H \cdot \vec x)_i < s
    \ \Leftrightarrow\ 
    0 \leq {\underbrace{\sum_{j=1}^{i-1} a_{i,j} \cdot x_j}_{\alpha_i}} + p_i \cdot x_i < s 
    \ \Leftrightarrow\ 
    - \frac{\alpha_i}{p_i} \leq x_i < \frac{s}{p_i} - \frac{\alpha_i}{p_i}.
  \end{align*}
  Note that $\frac{s}{p_i}$ is a positive integer, since $p_i,s \geq 1$ and $p_i$ divides $s$. Therefore, the rightmost inequality above is of the form $\beta \leq x_i < N + \beta$ for some positive integer $N$ and rational $\beta$. This system of inequalities has always $N$ integer solutions for $x_i$, independently of the value of $\beta$. 
  It therefore follows that the cardinality of the set $\{ \vec y \in [0,s)^d : \exists \vec x \in \ZZ^d, \vec y = H \cdot \vec x \}$ 
  is 
  \[ 
    \frac{s^d}{p_1 \cdot \ldots \cdot p_k} 
    = 
    \frac{s^d}{|\det(H)|}
    = 
    \frac{s^d}{\det(L)}.
    \qedhere
  \]
\end{proof}

The following two lemmas are related to the problem of testing inclusion of union of lattices, but will also play a role when considering the universal projection in Requirement~\ref{step:beforeReducPiUnivToSimpleCasesUxp}.

\begin{lem}\label{lem:lattice_equal_period}
  Consider $\ell$ lattices $L_0,\ldots,L_\ell\subseteq\ZZ^d$ and 
  vectors $\vec v_0,\ldots, \vec v_\ell\in\ZZ^d$. 
  Suppose $s \cdot \ZZ^d\subseteq L_0,\ldots,L_\ell$ for some $s \geq 1$.
  Then,
  $\vec v_0+L_0 = \bigcup_{i=1}^\ell(\vec v_i+L_i)$
  if and only if $(\vec v_0+L_0)\cap[0,s)^d = \bigcup_{i=1}^\ell(\vec v_i+L_i)\cap[0,s)^d$.
\end{lem}
\begin{proof} 
  The left to right direction is immediate. 
  For the other direction, suppose
  $(\vec v_0+L_0)\cap[0,s)^d = \bigcup_{i=1}^\ell( \vec v_i+L_i)\cap[0,s)^d$. We prove the two inclusions. 

  ($\subseteq$): We consider $\vec v \in L_0$ and show $\vec v_0 + \vec v \in \bigcup_{i=1}^\ell(\vec v_i+L_i)$.
  Since $s \cdot \ZZ^d \subseteq L_0$, there is $\vec u \in \ZZ^d$ such that $\vec v_0 + \vec v + s \cdot \vec u \in (\vec v_0 + L_0) \cap [0,s)^d$. Then, by hypothesis, $\vec v_0 + \vec v + s \cdot \vec u \in \bigcup_{i=1}^\ell(\vec v_i+L_i)\cap[0,s)^d$.
  Since $s \cdot \ZZ^d \subseteq L_i$ for every $i \in [1,\ell]$, 
  we conclude that $\vec v_0 + \vec v \in \bigcup_{i=1}^\ell(\vec v_i+L_i)\cap[0,s)^d$.

  ($\supseteq$): This inclusion is similar to the previous one. 
  Consider $i \in [1,\ell]$ and $\vec v \in L_i$. 
  We show that $\vec v_i + \vec v \in \vec v_0 + L_0$.
  Since $s \cdot \ZZ^d \subseteq L_i$, there is $\vec u \in \ZZ^d$ 
  such that $\vec v_i + \vec v + s \cdot \vec u \in \vec (v_i + L_i) \cap [0,s)^d$. Then, $\vec v_i + \vec v + s \cdot \vec u \in \vec (v_0 + L_0) \cap [0,s)^d$, 
  and since $s \cdot \ZZ^d \subseteq L_0$ 
  we conclude that $\vec v_i + \vec v \in \vec v_0 + L_0$.
\end{proof}

\begin{lem}\label{lem:inclusion_uofls_fulldim}
  Consider $\ell$ lattices $L_0,\ldots,L_\ell\subseteq\ZZ^d$ and 
  vectors $\vec v_0,\ldots, \vec v_\ell\in\ZZ^d$,
  with $L_0$ fully-dimensional. Then, 
  \[
      \vec v_0 + L_0 \subseteq \bigcup\nolimits_{i = 1}^\ell \vec v_i + L_i
      \ \Leftrightarrow\
      \vec v_0 + L_0 \subseteq \bigcup\nolimits_{i \in I} \vec v_i + L_i,
  \]
  where $I \coloneqq \{ i \in [1,\ell] : L_i \text{ is fully-dimensional}\}$.
\end{lem}
\begin{proof}
  The lemma is trivial for $d = 0$, hence assume $d \geq 1$.
  The right to left direction is straightforward.
  For the left to right direction, \emph{ad absurdum}
  suppose $\vec v_0 + L_0 \subseteq \bigcup\nolimits_{i = 1}^\ell \vec v_i + L_i$ 
  but $\vec v_0 + L_0 \not\subseteq \bigcup\nolimits_{i \in I} \vec v_i + L_i$.
  Since, for every $i \in I$, $L_i$ is fully-dimensional, 
  by~\Cref{lem:lattice_contained_detZ} 
  we can find $s_i \in \NN$ such that $s_i \cdot \ZZ^d \subseteq L_i$.
  Similarly, we can find $s_0 \in \NN$ such that $s_0 \cdot \ZZ^d \subseteq L_0$. 
  Let $s \coloneqq \lcm\{ s_i : i \in {I \cup \{0\}} \}$, 
  so that $s \cdot \ZZ^d \subseteq L_i$ for every $i \in {I \cup \{0\}}$.
  By properties of lattices, this means that for every $i \in {I \cup \{0\}}$ there is a (finite) set $U_i \subseteq [0,s)^d$ such that 
  $\vec v_i + L_i = U_i + s \cdot \ZZ^d$.
  Now, on the one hand,
  \[
    \vec v_0 + L_0 \subseteq \bigcup\nolimits_{i=1}^\ell \vec v_i + L_i 
    = (U + s \cdot \ZZ^d) \cup \bigcup\nolimits_{j \in [1,\ell] \setminus I} 
    (\vec v_j + L_j),
  \]
  where $U \coloneqq \bigcup_{i \in I} U_i$. 
  On the other hand, we have 
  \[ 
    \vec v_0 + L_0 \not\subseteq \bigcup\nolimits_{i \in I} \vec v_i + L_i  = U + s \cdot \ZZ^d.
  \]
  Recall that $\vec v_0 + L_0 = U_0 + s \cdot \ZZ^d$, so we must have $U_0 \not\subseteq U$. Since $L_0$ is fully-dimensional, $U_0 \neq \emptyset$. 
  Then, take $\vec z \in U_0 \setminus U$. 
  Note $U_0,U \subseteq [0,s)^d$ and $\vec z \in [0,s)^d$,
  and therefore
  $\vec z + s \cdot \ZZ^d \subseteq (U_0 + s \cdot \ZZ^d) \setminus (U + s \cdot \ZZ^d)$.
  This means $\vec z + s \cdot \ZZ^d \subseteq \bigcup\nolimits_{j \in [1,\ell] \setminus I} 
  (\vec v_j + L_j)$. 
  We claim that this is not possible by a dimension argument. 
  To see that, fix an integer $M$.
  The last inclusion implies that
  \begin{equation*}
      |( \vec z+ s \cdot \ZZ^d)\cap[0,M \cdot s)^d|\leqslant \sum\nolimits_{j \in [1,\ell] \setminus I}|(\vec v_j + L_j)\cap[0,M \cdot s)^d|.
  \end{equation*}
  However, observe that
  $|(\vec v+ s \cdot \ZZ^d)\cap[0,M \cdot s)^d| = (M-1)^d \in \Omega(M^d)$
  whereas, since $\dim(L_i)\leqslant d-1$ for every $i\in [1,\ell] \setminus I$,
  $|(\vec v_j + L_j)\cap[0,M \cdot s)^d|=O_{M\to\infty}(M^{k-1})$.
  As $M\to\infty$, we can see that the left hand-side of the above equation
  grows much faster than the right-hand side, and so we have reached a contradiction.
\end{proof}

\subsection{
Requirement~\ref{step:basic_framework_ptime}, Item~\ref{step:basic_framework_ptime:Item2}: \texorpdfstring{$(\un(\domain),\leq)$}{(un(D),leq)} has a 
\texorpdfstring{$(\un(\rho_{\SL}),\len(\parone))$}{(un(rhoSL),len(1))}-UXP signature
}

We are ready to present an algorithm
to solve inclusion between union of shifted
lattices that runs in polynomial time when the length of the union is considered fixed (as it is the case when taking into account the parameter $\len(\parone)$).

\begin{lem}\label{lemma:wPA-un-UXP}
  The structure~$(\un(\domain),\leq)$ has a
  $(\un(\rho_{\SL}),\len(\parone))$-UXP signature.
\end{lem}

\begin{proof}
  As in~\Cref{lem:basic_framework_ptime_sl}, without loss of generality we assume all shifted lattices to be non-empty and have the same dimension. 
  We describe an algorithm that given
  $X \coloneqq (X_1,\dots,X_n)$ and $Y = (Y_1,\dots,Y_m)$ in $\dom(\un(\rho_{\SL}))$,
  with $X_i = (d,\vec v_{i,0},\dots,\vec v_{i,k_i})$ 
  and $Y_j = (d, \vec w_{j,0},\dots,\vec w_{j,\ell_j})$, 
  checks if $\un(\rho)(X) \leq \un(\rho)(Y)$.
  The algorithm runs in time $2^{m} \cdot \poly(|X|,|Y|)$.
  The inclusion is checked by verifying that, 
  for every $i \in [1,n]$, $\rho_{\SL}(X_i) \leq \un(\rho)(Y)$.

  For every $i \in [1,n]$, the algorithm first computes the following objects:
  \begin{enumerate}
    \item Using~\Cref{lem:basic_framework_ptime_sl}, for all $j \in [1,m]$, 
    we compute $\vec v_j' \in \ZZ^d$ and a basis for the lattice $L_j' \subseteq \ZZ^d$ such that $\rho_{\SL}(Y_j \land X_i) = \vec v_j' + L_j'$.
    This step only requires polynomial time.
    \item A basis for 
    the lattice $L^{\perp}$ orthogonal to $L \coloneqq \Span_{\ZZ}(\vec v_{i,1},\dots,\vec v_{i,k_i})$, using~\Cref{lem:orto-lattice-properties}. 
    This step only requires polynomial time.
    \item The set $I \coloneqq \{ j \in [1,m] : L^{\perp} + \rho_{\SL}(Y_j \land X_i) \text{ is fully-dimensional} \}$. 
    To check whether ${L^{\perp} + \rho_{\SL}(Y_j \land X_i)}$ is fully-dimensional, it suffices to check that $Y_j \land X_i$ and $X_i$ have the same dimension, i.e., the same number of periods. This can be done in polynomial time.
    \item The positive integer $s \coloneqq \det(L + L^\perp) \cdot \prod_{j \in I} \det(L_j' + L^\perp)$.
    Since we have bases for all these lattices,
    computing the determinant only requires polynomial time (e.g., bring the matrix in Hermite normal form and multiply all the elements in the diagonal).
  \end{enumerate}
  By~\Cref{lem:lattice_contained_detZ}, we have $s \cdot \ZZ^{d} \subseteq L + L^\perp$ and $s \cdot \ZZ^d \subseteq L_j' + L^\perp$ for every $j \in I$.
  At this stage,
  observe the following equivalences that stem from the previous lemmas on lattices:
  {\allowdisplaybreaks\begin{align*}
    & \rho_{\SL}(X_i) \leq \un(\rho)(Y)\\
    \Leftrightarrow{} & \rho_{\SL}(X_i) = \bigcup_{j=1}^m (\rho_{\SL}(Y_j \land X_j))\\
    \Leftrightarrow{} & Z = \bigcup_{j=1}^m W_j
    &
    \begin{aligned}
    Z &\coloneqq \vec v_{i,0} + L + L^\perp
    \text{ and }\\
    W_j &\coloneqq \vec v_j' + L_j' + L^\perp
    \end{aligned}\\
    \Leftrightarrow{} & Z = \bigcup_{j \in I} 
    W_j
    &
    \begin{aligned}
    \text{\Cref{lem:inclusion_uofls_fulldim} and the fact}\\
    \text{that $W_j \subseteq Z$ for all $j \in [1,m]$}
    \end{aligned}\\
    \Leftrightarrow{} & Z \cap [0,s)^d = \bigcup_{j \in I} W_j \cap [0,s)^d
    &\text{\Cref{lem:lattice_equal_period}}\\
    \Leftrightarrow{} & |Z \cap [0,s)^d| = |\bigcup_{j \in I} W_j \cap [0,s)^d|
    &
    \begin{aligned}
    \text{$\Leftarrow$ follows from the fact}\\ 
    \text{that $W_j \subseteq Z$ for all $j \in I$}
    \end{aligned}\\
    \Leftrightarrow{} &
    |Z \cap [0,s)^d| = \sum_{J \subseteq I} 
    (-1)^{|J|+1}|\bigcap_{j \in J} W_j \cap [0,s)^d| 
    &\text{inclusion-exclusion.}
  \end{align*}}
  The algorithm then computes, for every $J \subseteq I$ 
  a shifted lattice $V_J$ from $\dom(\rho_{\SL})$ representing $\bigcap_{j \in J} W_j$.
  If $V_J \neq \varnothing$, below let $V_J \coloneqq (d,\vec u_{J,0},\dots,\vec u_{J,r_J})$.
  As described in the last part of the proof of~\Cref{lem:orto-lattice-properties},
  computing such a representation can be done in polynomial time. 
  Since, by~\Cref{lem:lattice_contained_detZ}, $s \cdot \ZZ^d \subseteq L_j' + L^\perp$ for every $j \in I$, 
  we have $s \cdot \ZZ^d \subseteq \Span_{\ZZ}\{\vec u_{J,1},\dots,\vec u_{J,r_J}\}$ for every $J \subseteq I$ with $V_J \neq \varnothing$.
  Note now the following:
  \begin{align*}
    &
    |Z \cap [0,s)^d| = \sum_{J \subseteq I} 
    (-1)^{|J|+1}|\bigcap_{j \in J} W_j \cap [0,s)^d|\\
    \Leftrightarrow{} &
    \frac{s^d}{\det(L+L^{\perp})} 
    = \sum_{J \subseteq I} 
    (-1)^{|J|+1} \frac{s^d \cdot \indicator{}{V_J \neq \varnothing}}{\det(\Span_{\ZZ}\{\vec u_{J,1},\dots,\vec u_{J,r_J}\})}
    &\text{\Cref{lemma:lattice-volume}}\\
    \Leftrightarrow{} &
    1
    = \sum_{J \subseteq I} 
    (-1)^{|J|+1} \frac{\det(L+L^{\perp}) \cdot \indicator{}{V_J \neq \varnothing} }{|\det([\vec u_{J,1} \mid \dots \mid \vec u_{J,r_J}])|}.
  \end{align*}
  Hence, the algorithm computes $\sum_{J \subseteq I} (-1)^{|J|+1} \frac{\det(L+L^{\perp}) \cdot \indicator{}{V_J \neq \varnothing}}{|\det([\vec u_{J,0} \mid \dots \mid \vec u_{J,r_J}])|}$ and 
  returns true if and only if it equals $1$.
  Each determinant computation requires polynomial time, but $O(2^{|I|})$ such computations are required. 
  Overall, we observe that the exponential blow-up in the algorithm is limited to the iterations of all subsets $J$ of $I$. All other operations are in polynomial time, 
  resulting in a $2^{m} \cdot \poly(|X|,|Y|)$ running time.
\end{proof}

\subsection{Requirement~\ref{step:beforeReducPiUnivToSimpleCasesUxp}: 
  both projections are in \texorpdfstring{$(\dnf(\rho),\depth(\theta))$}{(dnf(rhoSL),dep(theta))}-UXP}
Establishing the Items~\ref{step:beforeReducPiUnivToSimpleCasesUxp:Item1}
and~\ref{step:beforeReducPiUnivToSimpleCasesUxp:Item3}
of \Cref{step:beforeReducPiUnivToSimpleCasesUxp} 
is trivial: thanks to our choice of representation
based on~$\rho_{\SL}$, 
given $X \in \dom(\rho_{\SL})$ and $\vec i \in \vec I$, $\dotproj(\vec i, X)$
can be computed by simply
crossing out the entries of all vectors of $X$ corresponding
to the indices in $\vec i$, bringing the resulting matrix of periods in Hermite normal form and removing all its zero columns (to force the periods to be linearly independent).
The following result is thus immediate. 

\begin{restatable}{lem}{ProjSL}\label{lem:proj_SL_high}
  The structure~$(\domain,(\dotproj,\vec I))$ has a $(\rho_{\SL},\parone)$-UXP signature.
\end{restatable}

On the contrary, computing the universal projections~$\dotunprojrel{Z}(\vec i, X)$, 
as required by the Items~\ref{step:beforeReducPiUnivToSimpleCasesUxp:Item2}
and~\ref{step:beforeReducPiUnivToSimpleCasesUxp:Item4}, is computationally
expensive.
For simplicity, below we index entries in vectors starting from one, and instead of considering projections over arbitrary vectors of
indices $\vec i$, we assume $\vec i = [1,k]$ for some $k \in \NN$ so
that~$\unprojrel{Z}(\vec i, X)$ projects over the first $k$ dimensions. This is
w.l.o.g.,~ as we can reorder components appropriately. So, let~$\unprojrel{Z}(k.X) \coloneqq \unprojrel{Z}([1,k], X)$, 
and $\proj(k,X) \coloneqq \proj([1,k],X)$.  
For a set $S\subseteq\ZZ^d$ and $\vec x \in \RR^k$, with $k \leq d$ we define
the \emph{slice} of $S$ at $\vec x$, denoted by $\slice{S}{\vec x}$ as the set 
\[
  \slice{S}{\vec x}:=\set{\vec t\in\ZZ^{d-k}:(\vec x, \vec t)\in S}.
\] 
Before giving the algorithm for universal projection, we need the following
result.

\begin{lem}\label{lem:slice_of_LS}
  Let $L \subseteq \ZZ^d$ be a lattice and $\vec v_0 \in \ZZ^d$. There is a lattice $L' \subseteq \ZZ^d$ such that for all $\vec x \in \pi(k,\vec v_0 + L)$ there is $\vec t_{\vec x} \in \ZZ^{k}$ such that $\slice{(\vec v_0 + L)}{\vec x} = {\vec t_{\vec x} + L'}$.
  Moreover, there is an algorithm that given ${X = (d, \vec v_0, \vec v_1,\dots, \vec v_n) \in \dom(\rho_{\SL})}$ and $k \in \NN$ written in unary, returns a basis of $L'$, with respect to~$L \coloneqq \rho_{\SL}(X)$. 
  The algorithm runs in polynomial time.
\end{lem}
\begin{proof}
  Let $L \coloneqq \Span_{\ZZ}(\vec v_1,\dots, \vec v_n)$, 
  with $\vec v_1,\dots, \vec v_n$ being linearly independent.
  Note that $\pi(k,\vec v_0 + L)$ cannot be empty. 
  Let $\vec x_0$ be the vector in $\ZZ^{d-k}$ that is obtained from $\vec v_0$ by 
  removing the first $k$ components, so that $\vec x_0 \in \pi(k,\vec v_0 + L)$.
  The non-empty set $\slice{(\vec v_0 + L)}{\vec x_0}$
  corresponds to the set of solutions~$\vec x \in \ZZ^{d-k}$ to the following weak PA formula~$\Phi(\vec x)$
  \[ 
    \exists y_1,\dots,y_n : \ 
    \begin{bmatrix}
      \vec x_0\\ 
      \vec x
    \end{bmatrix}
    = \vec v_0 + \vec v_1 \cdot y_1 + \dots \vec v_n \cdot y_n.
  \]
  By~\Cref{lemma:change-of-representation} and~\Cref{prop:hnf_snf}, we can
  compute in polynomial time with respect to $\vec v_0,\dots, \vec v_k$ a family
  of vectors $\vec w_0,\dots,\vec w_j$ such that $\vec w_1,\dots, \vec w_j$ are
  linearly independent and 
  $\slice{(\vec v_0 + L)}{\vec x_0} = \sem{\Phi(\vec x)}_{\mathcal{Z}} = \vec w_0 + \Span_{\ZZ}(\vec w_1,\dots, \vec w_j)$. 
  Then, $L'$ in the statement of the lemma is given by $\Span_{\ZZ}(\vec w_1,\dots, \vec w_j)$. 

  To conclude the proof, we must check that for every $\vec x \in \pi(k,\vec v_0 + L)$ there is $\vec t_{\vec x} \in \ZZ^{k}$ such that $\slice{(\vec v_0 + L)}{\vec x} = \vec t_{\vec x} + L'$.
  To this end, consider $\vec x \in \pi(k,\vec v_0 + L)$ 
  and pick as $\vec t_{\vec x} \in \ZZ^{k}$ the only vector in $\slice{\{\vec v_0\}}{\vec x}$.
  Given $\vec s \in \ZZ^k$, we have
  {\allowdisplaybreaks
  \begin{align*}
      & \vec s \in \slice{(\vec v_0 + L)}{\vec x}\\
          \Leftrightarrow{}& (\vec x,\vec s) \in \vec v_0 + L\\
          \Leftrightarrow{}& (\vec x, \vec s)-(\vec x, \vec t_{\vec x})
          \in L 
          &\text{since $(\vec x, \vec t_{\vec x}) = \vec v_0$ and $L$ is a lattice}\\
          %\Leftrightarrow{}& (\vec 0, \vec t_{ \vec x}- \vec s)\in L\\
          \Leftrightarrow{}& (\vec x_0, \vec w_0)+(\vec 0,\vec t_{\vec x}- \vec s)\in \vec v_0 + L
          &\text{as $(\vec x_0, \vec w_0)\in \vec v_0 + L$ and $L$ is a lattice}\\
          \Leftrightarrow{}& (\vec x_0, \vec t_{\vec x}- \vec s + \vec w_0)\in \vec v_0 + L\\
          \Leftrightarrow{}& \vec t_{\vec x}-\vec s+ \vec w_0 \in \slice{(\vec v_0 + L)}{\vec x_0}\\
          \Leftrightarrow{}& \vec t_{\vec x}- \vec s \in L'
          &\text{by def.~of $\vec w_0$}\\
          \Leftrightarrow{}& \vec s \in \vec t_{\vec x} + L'
          &\text{as $L'$ is a lattice.}
          &\quad\qedhere
  \end{align*}}
\end{proof}

Back to the problem of performing universal projection, 
intuitively, we need to count points in
unions of shifted lattices (similarly to inclusion testing) but in a parametric way. This means that given a union of shifted lattices 
$X= \bigcup_{i=1}^m (\vec v_i + L_i)$ every
intersection $\bigcap_{j\in J}(\vec v_i + L_i)$ with $J\subseteq L$ in the
inclusion-exclusion formula may or may not need to be
accounted for, depending on the value of a parameter $f \colon 2^{[1,m]} \to \{0,1\}$ belonging
of a certain set of parameters $\mathcal{F}$ (see the lemma below, 
the exact definition of $\mathcal{F}$ is technical and only given in the proof; $2^{[1,m]}$ stands for the powerset of $[1,m]$). 
The
algorithm therefore considers all possible ways in which
intersections may or may not be taken, which is roughly
$2^{2^{m}}$; i.e., the number of functions in $[2^{[1,m]} \to \{0,1\}]$.
Our algorithm allows us to conclude a rather
surprising fact: the relative universal projection can be
expressed as a complex combination of unions, intersections,
projections and the relative complementations that are
exclusively applied to the initial~sets in input. The number
of these operations only depends on $m$, resulting in an algorithm that runs in polynomial time when $m$ is fixed.

\begin{restatable}{lem}{UnivProjUofSLRel}\label{lem:univ_proj_UofSL_relative_high}
    Let $Z,X_1,\dots,X_m \subseteq \ZZ^d$ be shifted lattices, 
    $X = \bigcup_{i=1}^m X_i$ and $k \in \NN$.
    There are $I \subseteq [1,m]$ and $\mathcal{F}\subseteq [2^{I}\to\set{0,1}]$~such that
    \[
        \unprojrel{Z}(k,X)=\bigcup_{f\in\mathcal{F}}
            \Big(\Big(\bigcap_{J :f(J)=1}\bigcap_{j\in J}\proj(k,X_j\cap Z)\Big)
            - \Big(\bigcup_{J :f(J)=0}\bigcap_{j\in J}\proj(k,X_j\cap Z)\Big)\Big).
    \]
    Fix $m \in \NN$. There is an algorithm that given in input $
    X = (X_1,\dots,X_m)\in \dom(\un(\rho_{\SL}))$, $Z \in \dom(\rho_{\SL})$,
    and $k \in \NN$ in unary, 
    returns $Y \in \dom(\dnf(\rho_{\SL}))$ such that $\dnf(\rho_{\SL})(Y) = \unprojrel{\rho_{\SL}(Z)}(k,\un(\rho_{\SL})(X))$. 
    The algorithm runs in polynomial~time.
\end{restatable}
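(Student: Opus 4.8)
The plan is to establish the set-theoretic identity first, and then read off the algorithm as a bounded orchestration of the lattice subroutines of this section — bounded because $m$ is fixed. For the identity, fix a point $\vec a$ of the target space. Since slicing commutes with intersection, $\vec a\in\unprojrel{Z}(X,k)$ holds iff $\slice{Z}{\vec a}\neq\emptyset$ and $\slice{Z}{\vec a}=\bigcup_{i=1}^m\slice{(X_i\cap Z)}{\vec a}$. By \Cref{lem:slice_of_LS}, for every $J\subseteq[1,m]$ the non-empty slices of $\bigcap_{j\in J}X_j\cap Z$ are cosets of a single lattice $N_J$ that does not depend on $\vec a$, with $N_{\set{i}}\subseteq N_\emptyset$ for all $i$. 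Put $I:=\set{i\in[1,m]:\dim N_{\set{i}}=\dim N_\emptyset}$, let $N^\perp$ be the orthogonal lattice of $N_\emptyset$ (\Cref{lem:orto-lattice-properties}), and write $\hat N_J:=\bigl(\bigcap_{j\in J}N_{\set{j}}\bigr)+N^\perp$, which is full-dimensional whenever $J\subseteq I$ (an intersection of finite-index sublattices of $N_\emptyset$, plus $N^\perp$). Adding $N^\perp$ to every coset is harmless, because $N_\emptyset\cap N^\perp=\set{\vec 0}$ and all cosets involved lie inside one coset of $N_\emptyset$, so the condition becomes $\slice{Z}{\vec a}+N^\perp=\bigcup_{i\in T}\bigl(\slice{(X_i\cap Z)}{\vec a}+N^\perp\bigr)$ where, by \Cref{lem:inclusion_uofls_fulldim}, the union can be restricted to $T:=I\cap\set{i:\vec a\in\proj(X_i\cap Z,k)}$. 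Then \Cref{lem:lattice_equal_period} lets me replace this by equality after intersecting with a box $[0,s)$ for a suitable common multiple $s$ of all $\det(\hat N_J)$ ($J\subseteq I$), and inclusion--exclusion together with \Cref{lemma:lattice-volume} turns it into the single numerical identity
\[
  \frac{1}{\det(\hat N_\emptyset)}=\sum_{\emptyset\neq J\subseteq T}(-1)^{|J|+1}\,\frac{\indicator{}{\vec a\in\proj(\bigcap_{j\in J}X_j\cap Z,\,k)}}{\det(\hat N_J)},
\]
using that $\bigcap_{j\in J}\slice{(X_j\cap Z)}{\vec a}=\slice{(\bigcap_{j\in J}X_j\cap Z)}{\vec a}$ is non-empty precisely when $\vec a\in\proj(\bigcap_{j\in J}X_j\cap Z,k)$ and is then a coset of the $\vec a$-free lattice $\bigcap_{j\in J}N_{\set{j}}$. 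The crucial point is that the truth of this identity, and the set $T=\set{i\in I:\vec a\in\proj(X_i\cap Z,k)}$, depend on $\vec a$ only through the finite pattern $f\colon 2^I\to\set{0,1}$, $f(J):=\indicator{}{\vec a\in\proj(\bigcap_{j\in J}X_j\cap Z,k)}$. Taking $\mathcal F$ to be the set of patterns $f$ with $f(\emptyset)=1$ that satisfy the identity (with $T=\set{i\in I:f(\set{i})=1}$), and observing that $\set{\vec a:\text{the pattern of }\vec a\text{ is }f}$ is exactly the $f$-th term of the displayed formula in the statement, the identity follows; patterns realised by no $\vec a$ contribute $\emptyset$, so it is irrelevant whether they lie in $\mathcal F$.

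For the algorithm, fix $m$, so that $2^m$ and $2^{2^m}$ are constants. For each of the $2^m$ sets $J\subseteq[1,m]$, compute: a representation of the shifted lattice $\bigcap_{j\in J}X_j\cap Z$ by iterating the intersection routine of \Cref{lem:basic_framework_ptime_sl}; the shifted lattice $\proj(\bigcap_{j\in J}X_j\cap Z,k)\in\dom(\rho_{\SL})$ by \Cref{lem:proj_SL_high}; and a basis of $N_J$ by \Cref{lem:slice_of_LS} (an empty lattice just forces $f(J)=0$). Then determine $I$ by comparing numbers of periods, $N^\perp$ by \Cref{lem:orto-lattice-properties}, and all integers $\det(\hat N_J)$ for $J\subseteq I$ via Hermite normal form. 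Enumerate all $f\colon 2^I\to\set{0,1}$, keep those in $\mathcal F$ by evaluating the numerical identity above (polynomial-size integer and rational arithmetic), and for each such $f$ assemble the element $\bigcap_{J:f(J)=1}\proj(\cdots)-\bigcup_{J:f(J)=0}\proj(\cdots)$ of $\dom(\dnf(\rho_{\SL}))$, using the polynomial-time intersection of shifted lattices and the union constructor of $\un(\rho_{\SL})$. Finally combine these constantly many elements with the $\lor$ operation of the difference-normal-form Boolean algebra from \Cref{lemma:Step1-implies-BooleanAlgebra} (available since \Cref{step:basic_framework_ptime} has already been established) into a single $Y\in\dom(\dnf(\rho_{\SL}))$ with $\dnf(\rho_{\SL})(Y)=\unprojrel{\rho_{\SL}(Z)}(\un(\rho_{\SL})(X),k)$. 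Every step runs in polynomial time for fixed $m$; since $m$ is bounded by the parameter $\len(\theta)(X)$, this same procedure yields the $(\theta\cdot\parone\cdot\len(\theta),\depth(\theta))$-UXP reduction required by \Cref{step:beforeReducPiUnivToSimpleCasesUxp}.

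The main obstacle is the identity, and specifically the realisation that the behaviour of the universal projection near a point $\vec a$ is governed entirely by the finite pattern $f$ — this is what collapses an a priori unbounded case analysis over $\vec a$ into a bounded one over $f\colon 2^I\to\set{0,1}$. The supporting technical work is the reduction to the full-dimensional setting via $N^\perp$, so that \Cref{lem:inclusion_uofls_fulldim,lem:lattice_equal_period,lemma:lattice-volume} apply just as in the proof of \Cref{lemma:wPA-un-UXP}, and the identification of $\bigcap_{j\in J}\slice{(X_j\cap Z)}{\vec a}$ with the slice of $\bigcap_{j\in J}X_j\cap Z$, whose period is $\vec a$-free and whose non-emptiness equals $\vec a\in\proj(\bigcap_{j\in J}X_j\cap Z,k)$. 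Once these are in place, the algorithm is a mechanical assembly of the lattice subroutines developed above with the difference-normal-form operations.
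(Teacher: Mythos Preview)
Your strategy matches the paper's: reduce the fibre-covering condition at each base point to a counting identity via inclusion--exclusion (using that fibres are cosets of point-independent lattices), then observe that the identity depends only on a finite pattern. The paper separates out the reduction to full dimension as a second pass, first proving the result under the extra hypothesis that $X\subseteq Z$ with every fibre of $Z$ full-dimensional and only afterwards discharging it via the orthogonal-lattice construction; you fold the $N^\perp$ adjustment directly into the main argument, which is a little cleaner but not materially different.

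There is one real gap. Your pattern $f(J):=\indicator{}{\vec a\in\proj\bigl(\bigcap_{j\in J}X_j\cap Z,\,k\bigr)}$ records non-emptiness of the \emph{intersection of the fibres}, which is indeed what inclusion--exclusion needs. But the displayed identity in the statement uses $\bigcap_{j\in J}\proj(X_j\cap Z,k)$, the intersection of the \emph{individual projections}, and your assertion that ``$\{\vec a:f_{\vec a}=f\}$ is exactly the $f$-th term of the displayed formula in the statement'' conflates the two. They genuinely differ: with $d=2$, $k=1$, $Z=\ZZ^2$, $X_1=2\ZZ\times\ZZ$, $X_2=\{(a,b):a+b\text{ even}\}$, every $\proj(X_j\cap Z,1)=\ZZ^2$, so no Boolean combination of the sets $\bigcap_{j\in J}\proj(X_j\cap Z,1)$ can yield $\unprojrel{Z}(X_1\cup X_2,1)=\{(a,b):b\text{ odd}\}$, whereas $\proj(X_1\cap X_2,1)=\{(a,b):b\text{ even}\}$ and your version recovers it at once. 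What you have actually established is the identity with $\proj\bigl(\bigcap_{j\in J}X_j\cap Z,k\bigr)$ throughout; that variant is correct and is exactly what the algorithm needs (your complexity analysis is unaffected, since you already compute $\bigcap_{j\in J}X_j\cap Z$ and its projection for every $J$). The paper's own proof makes the same conflation --- its ``i.e.'' equating $\bigcap_{j\in J}\slice{X_j}{\vec x}\neq\emptyset$ with $\vec x\in\bigcap_{j\in J}\pi(X_j,k)$ is precisely this step --- so the displayed identity as written appears to need the same correction; you should flag that your argument delivers the projection-of-intersection form.
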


\begin{proof}
  Let us focus on the first part of the lemma. 
  We first show the result under the following additional hypothesis:
  \begin{equation}
    \tag{$\dagger$}\label{eq:univ-proj-additional-hyp}
    \text{$X \subseteq Z$ and for every $\vec x \in \pi(k,Z)$, $\slice{Z}{\vec x} \subseteq \ZZ^{k}$ is fully-dimensional.}
  \end{equation}
  Note that, in the statement of the lemma, the intersections $X_j \cap Z$ can then be replaced by $X_j$.
  We later show how to discharge this additional hypothesis.

  Starting from representations of $X$ and $Z$ in $\dom(\un(\rho_{\SL}))$ and $\dom(\rho_{\SL})$ respectively,
  we apply~\Cref{lem:slice_of_LS} to compute bases for the lattices $L_i' \subseteq \ZZ^k$ ($i \in [0,m]$) such that 
  for every $i \in [1,m]$ and $\vec x \in \pi(k,X_i)$ there is $\vec t_{\vec x} \in \ZZ^{k}$ such that $\slice{X_i}{\vec x} = \vec t_{\vec x} + L_i'$, 
  and for every $\vec y \in \pi(k,Z)$ there is $\vec t_{\vec y} \in \ZZ^{k}$ such that $\slice{Z}{\vec y} = \vec t_{\vec y} + L_0'$.
  Note that the assumption that $\slice{Z}{\vec x}$ is fully-dimensional implies that $L_0'$ is fully-dimensional. 
  Let $I \coloneqq \{ i \in [1,m] : L_i' \text{ is fully-dimensional} \}$.
  Then,
  \begin{align*}
    \univ{Z}(k,X)
        &=\set{ \vec x\in\pi(k,Z):\forall \vec t \in \ZZ^{k}, (\vec x, \vec t)\in Z \text{ implies } (\vec x, \vec t)\in \textstyle\bigcup_{i=1}^m X_i}\\
        &=\set{\vec x \in \pi(k,Z): \slice{Z}{\vec x} \subseteq \textstyle\bigcup\nolimits_{i=1}^m\slice{X_i}{\vec x}}
        \\
        &=\set{\vec x \in \pi(k,Z): \slice{Z}{\vec x} \subseteq \textstyle\bigcup\nolimits_{i \in I} \slice{X_i}{\vec x}}
        \hspace{3.3cm}\text{by \Cref{lem:inclusion_uofls_fulldim}}\\
        &=\set{\vec x \in \pi(k,Z): \slice{Z}{\vec x} = \textstyle\bigcup\nolimits_{i \in I} \slice{X_i}{\vec x}}
        \hspace{3.3cm}\text{since $X \subseteq Z$}
  \end{align*}
  Let $s \coloneqq \prod_{i \in I \cup \{0\}} 
  |\det(L_i')|$ (since we have a basis for each $L_i'$, computing $s$ requires polynomial time). By~\Cref{lem:lattice_contained_detZ}, $s \cdot \ZZ^{d-k} \subseteq L_i'$ for every $i \in I \cup \{0\}$.
  We now replay the proof of~\Cref{lemma:wPA-un-UXP} to derive that, for every $\vec x \in \ZZ^{d-k}$, 
  \[ 
    \slice{Z}{\vec x} = \textstyle\bigcup\nolimits_{i \in I} \slice{X_i}{\vec x} 
    \ \iff 
    \ 
    |\slice{Z}{\vec x} \cap [0,s)^{k}| = \sum_{J \subseteq I} (-1)^{|J|+1} |\bigcap_{j \in J} \slice{X}{\vec x} \cap [0,s)^k|.
  \]
  Consider $J \subseteq I$. If $\bigcap_{j \in J} \slice{X}{\vec x} \neq \emptyset$, i.e.~when $\vec x \in \bigcap_{j \in J} \pi(k,X_j)$, 
  by $s \cdot \ZZ^{d-k} \subseteq L_i'$ 
  we conclude that $|\bigcap_{j \in J} \slice{X}{\vec x} \cap [0,s)^k| = |\bigcap_{j \in J} L_i' \cap [0,s)^k|$.
  Similarly, $|\slice{Z}{\vec x} \cap [0,s)^{k}| = |L_0' \cap [0,s)^{k}|$. Define 
  \[
    N_0 \coloneqq |L_0' \cap [0,s)^{k}|,
    \ \
    N_J \coloneqq |\textstyle\bigcap_{i\in J}L_i'\cap[0,d)^k|,
    \ \ 
    T_J(\vec x) \coloneqq \indicator{}{ \vec x\in \textstyle\bigcap_{j\in J}\pi(k,X_j)}.
  \]
  Hence, $\slice{Z}{\vec x} = \textstyle\bigcup\nolimits_{i \in I} \slice{X_i}{\vec x}$ holds if and only if $N(\vec x) = 1$, where
  \begin{align*}
    N(\vec x) \coloneqq \sum_{J \subseteq I} \Big(T_J(\vec x) \cdot \frac{(-1)^{|J|+1} \cdot N_J}{N_0}\Big).
  \end{align*}
  Observe that the quantity $N(\vec x)$ ultimately depends on the values of $T_J(\vec x)$, which are in $\{0,1\}$.
  More precisely, given $f \colon 2^{I} \to \{0,1\}$, 
  define 
  \[
    N(f) \coloneqq \sum_{J \subseteq I} \Big(f(J) \cdot \frac{(-1)^{|J|+1} \cdot N_J}{N_0}\Big).
  \]
  Then, for all $\vec x \in \ZZ^{d-k}$ and $J \subseteq I$, given $f_{\vec x}(J) \coloneqq T_J(\vec x)$ we have $N(\vec x) = N(f_{\vec x})$.
  Let $\mathcal{F} \coloneqq \{f \colon 2^I \to \{0,1\} : N(f) = 1\}$.
  We have $N(\vec x) = 1$ if and only if $f_{\vec x} \in \mathcal{F}$;
  and so to summarise $\univ{Z}(k,X) = \bigcup_{f \in \mathcal{F}} \{\vec x \in \pi(k,Z) : f_{\vec x} = f\}$.
  
  Given~$f\in\mathcal{F}$ and $\vec x\in\pi(k,Z)$ we have
  \begin{align*}
    &f_{\vec x} = f\\
    \iff{}& \text{for all } J\subseteq I, T_J(\vec x) = f(J)
    &\text{by def.~of $f_{\vec x}$}\\
    \iff{}& \text{for all } J\subseteq I, \indicator{}{\vec x \in \textstyle\bigcap_{j\in J}\pi(k,X_j)}=f(J)
    &\text{by def.~of $T_J(\vec x)$}\\
    \iff{}& \text{for all } J \in \set{J \subseteq I : f(J)=1}, \vec x\in\bigcap\nolimits_{i\in J}\pi(k,X_i) \text{ and }\\
    &\text{for all } J\in\set{J \subseteq I:f(J)=0}, \vec x\notin\bigcap\nolimits_{i\in J}\pi(k,X_i)\\
    \iff{}& \vec x\in\Big(\bigcap_{J:f(J)=1}\bigcap_{j\in J}\pi(k,X_j)\Big)
        - \Big(\bigcup_{J:f(J)=0}\bigcap_{j\in J}\pi(k,X_j)\Big).
\end{align*}
This concludes the proof of the equivalence in the statement of the lemma, subject to the additional hypothesis~\eqref{eq:univ-proj-additional-hyp}. 

Before showing how to remove the hypothesis~\eqref{eq:univ-proj-additional-hyp}, we consider the second part of the lemma and study the complexity 
of computing the element $Y \in \dom(\dnf(\rho_{\SL}))$ that represents $\unprojrel{\rho_{\SL}(Z)}(k,\un(\rho_{\SL}(X)))$ (again assuming~\eqref{eq:univ-proj-additional-hyp}).
First of all not that w.l.o.g.~we can assume all elements of $\dom(\rho_{\SL})$ in $X$, and $Z$, to be different from $\varnothing$. Indeed, 
if $Z = \varnothing$ or $X = (\varnothing)$ then $Y$ can be set as $\varnothing$, and we can remove from $X$ every element equal to $\varnothing$ as this does not change the set~$\unprojrel{\rho_{\SL}(Z)}(k,\un(\rho_{\SL}(X)))$.
Referring to the objects in the first part of the proof, 
we see that computing bases for the lattices $L_i'$ ($i \in [0,m]$) can be done in polynomial time, by~\Cref{lem:slice_of_LS}. 
Given these bases it is trivial to compute the set $I$.
Then, by~\Cref{lemma:lattice-volume}, computing $N_0$ and $N_J$ for any $J \subseteq I$ also requires polynomial time (there are however $2^{|I|}$ many such $J$).
To compute the set $\mathcal{F}$ it suffices to list all $f \colon 2^I \to \{0,1\}$, compute $N(f)$ and check that this integer equals $1$. 
Hence, by definition of $N(f)$, we conclude that constructing $\mathcal{F}$ takes time $2^{2^{O(m)}} \poly(|X|,|Z|)$. 
Given $\mathcal{F}$, the element $Y$ is computed as 
\[
  \bigvee_{f\in\mathcal{F}}
            \Big(\Big(\bigwedge_{J :f(J)=1}\bigwedge_{j\in J}\proj(k,X_j)\Big)
            - \Big(\bigvee_{J :f(J)=0}\bigwedge_{j\in J}\proj(k,X_j)\Big)\Big).
\]
When $m$ is considered fixed, this corresponds to a constant number of applications of the operators $\lor$, $\land$, $-$ and $\pi$ to the sets $X_1,\dots,X_m$. Then, $Y$ can be computed in polynomial time 
thanks to~\Cref{lem:proj_SL_high} and~\Cref{lemma:Step1-implies-BooleanAlgebra}; where the latter holds for weak PA as we have already established~\Cref{step:basic_framework_ptime} of the framework.

What is missing is to get rid of the hypothesis~\eqref{eq:univ-proj-additional-hyp} without incurring an exponential blow-up with respect to $\max\{|X_1|,\dots,|X_m|,|Z|\}$. 
To this end, consider again shifted lattices~$Z,X_1,\dots,X_m$ (assumed for simplicity to be elements in~$\dom(\rho_{\SL})$) as in the first statement of the lemma (but now without assuming~\eqref{eq:univ-proj-additional-hyp}).
We will define shifted lattices $\widetilde{X}_1,\dots,\widetilde{X}_m,\widetilde{Z}$ such that
\begin{enumerate}[label=(\Alph*)]
  \item\label{it:A} $\widetilde{X} \coloneqq \bigcup_{j=1}^m \widetilde{X}_j \subseteq \widetilde{Z}$ and for every $\vec x \in \pi(k,\widetilde{Z})$, $\slice{\widetilde{Z}}{\vec x} \subseteq \ZZ^k$ is fully-dimensional (that is, the sets~$\widetilde{X}$ and $\widetilde{Z}$ satisfy~\eqref{eq:univ-proj-additional-hyp}),
  \item\label{it:B} 
  $\pi(k,\widetilde{X}_j) = \pi(k,X_j \cap Z)$ and 
  $\unprojrel{\widetilde{Z}}(k,\widetilde{X}) = \unprojrel{Z}(k,X)$.
\end{enumerate}
Thanks to~\Cref{it:A,it:B}, 
to compute a representation in $\dom(\dnf(\rho_{\SL}))$ of the set 
$\unprojrel{\rho_{\SL}(Z)}(k,\un(\rho_{\SL}(X)))$,
it suffices to apply the equivalence in the statement of the lemma on 
the fully-dimensional sets $\widetilde{X}_1,\dots,\widetilde{X}_m,\widetilde{Z}$.

Let us define $\widetilde{X}_1,\dots,\widetilde{X}_m$ and $\widetilde{Z}$.
First, apply~\Cref{lem:slice_of_LS} to obtain a basis for a lattice $L \subseteq \ZZ^{d-k}$ such that for all $\vec x \in \pi(k,Z)$ there is $\vec t_{\vec x} \in \ZZ^k$ such that $\slice{Z}{\vec x} = \vec t_{\vec x} + L$.
We compute a basis for the lattice $L^\perp$ orthogonal to $L$, according to~\Cref{lem:orto-lattice-properties}.
Recall that $L + L^\perp$ is fully-dimensional. We define:
\[ 
  \widetilde{Z} \coloneqq Z + \{\vec 0\} \times L^\perp,
  \qquad 
  \widetilde{X}_j \coloneqq (X_j  \cap Z) + \{\vec 0\} \times L^\perp,
  \qquad 
  \widetilde{X} \coloneqq \bigcup_{j=1}^m \widetilde{X}_j,
\]
where $\vec 0$ stands here for the zero vector of $\ZZ^{d-k}$.
By definition, $\widetilde{Z}$ and all $\widetilde{X}_j$ are shifted lattices, and moreover $\widetilde{X} \subseteq \widetilde{Z}$.

Observe that for every $ \vec x \in\ZZ^{n-k}$ and every $A\subseteq \ZZ^d$,
\begin{align*}
    \slice{(A+\set{\vec 0} \times L^\perp)}{\vec x}
    &=\set{\vec t \in \ZZ^k : (\vec x, \vec t)\in A + \set{\vec 0} \times L^\perp}\\
    & =\set{ \vec t \in \ZZ^k: \vec t \in \slice{A}{\vec x} + L^\perp}\\
    & =\slice{A}{\vec x}+L^\perp.
\end{align*}
Hence, for every $\vec x \in \pi(k,Z)$ we have $\slice{\widetilde{Z}}{\vec x} = \slice{Z}{\vec x} + L^\perp = \vec t_{\vec x} + L + L^\perp$, and therefore $\slice{\widetilde{Z}}{\vec x}$ is fully-dimensional. This establishes~Item~\ref{it:A}.

Moving towards~Item~\ref{it:B}, we observe that for every $A \subseteq \ZZ^d$, 
\begin{align*}
  \pi(A+\set{\vec 0} \times L^\perp)
  &=\set{ \vec x\in\ZZ^{d-k} : \slice{(A+\set{\vec 0} \times L^\perp)}{\vec x}\neq \emptyset}\\
  &=\set{\vec x\in\ZZ^{d-k}:\slice{A}{\vec x}+L^\perp\neq\emptyset}\\
  &=\set{\vec x\in\ZZ^{d-k}:\slice{A}{\vec x}\neq\emptyset}
      &\text{as }L^\perp\neq\emptyset\\
  &=\pi(k,A).
\end{align*}
Therefore, $\pi(k,\widetilde{X}_j) = \pi(k,X_j \cap Z)$ for every $j \in [1,m]$. Lastly, let us show the equivalence $\univ{\widetilde{Z}}(k,\widetilde{X}) = \univ{Z}(k,X)$. Recall that for $\vec x \in \pi(k,Z)=\pi(k,\widetilde{Z})$, we have $\slice{Z}{\vec x}= \vec t_{\vec x}+L$. 
Furthermore,
$X\cap Z\subseteq Z$, so $\slice{(X\cap Z)}{\vec x}\subseteq \slice{Z}{ \vec x}$ and we can write
$\slice{(X\cap Z)}{\vec x}= \vec t_{\vec x} + A_{\vec x}$ for some set $A_{\vec x}\subseteq L$. We have
\begin{align*}
  \univ{\widetilde{Z}}(k,\widetilde{X}) ={} & \set{\vec x\in\pi(k,\widetilde{Z}):\slice{\widetilde{Z}}{\vec x}\subseteq\slice{\widetilde{X}}{ \vec x}}\\
  ={} & \set{\vec x \in\pi(k,Z) : \slice{Z}{\vec x}+L^\perp\subseteq\slice{(X\cap Z)}{\vec x}+L^\perp}\\
  ={} & \set{\vec x\in\pi(k,Z): \vec t_{\vec x}+L+L^\perp\subseteq \vec t_{\vec x}+A_{\vec x}+L^\perp}\\
  ={} & \set{\vec x\in\pi(k,Z): \vec t_{\vec x} + L\subseteq \vec t_{\vec x}+A_{\vec x}}
          &&\text{see below}\\
  ={} & \set{\vec x\in\pi(k,Z):\slice{Z}{\vec x}\subseteq\slice{(X\cap Z)}{\vec x}}\\
  ={} & \set{\vec x\in\pi(k,Z):\slice{Z}{\vec x}\subseteq\slice{X}{\vec x}}\\
  ={} & \univ{Z}(k,X).
\end{align*}
Above, we have used the fact that $\vec t_{\vec x}+L+L^\perp\subseteq \vec t_{\vec x}+A_{\vec x}+L^\perp$ if and only if $\vec t_{\vec x}+L\subseteq \vec t_{\vec x}+A_{\vec x}$; which is trivially equivalent to 
\[ 
  L+L^\perp\subseteq A_{\vec x}+L^\perp \text{ if and only if } L\subseteq A_{\vec x}.
\]
The right to left direction of this double implication is straightforward. 
For the other direction, suppose $L+L^\perp\subseteq A_{\vec x}+L^\perp$
and pick $\vec u \in L$. Since $\vec 0 \in L^\perp$, we have $\vec u \in L+L^\perp \subseteq A_{\vec x}+L^\perp$, 
so there are $\vec v_1 \in A_{\vec x}$ and $\vec v_2 \in L^\perp$ such that $\vec u = \vec v_1 + \vec v_2$. 
By $A_{\vec x} \subseteq L$ we get $\vec u - \vec v_1 \in L$. 
From $\vec u - \vec v_1 = \vec v_2$ and the fact that $L$ and $L^\perp$ are orthogonal lattices we conclude that $\vec v_2 = \vec 0$, 
and thus $\vec u \in A_{\vec x}$.
This completes the proof of~Item~\ref{it:B}.

To conclude, let us discuss how to compute representations of $\widetilde{Z}$ and $\widetilde{X}_j$ in $\dom(\rho_{\SL})$ in polynomial time. Starting from the representations $(d,\vec v_1,\dots,\vec v_\ell)$ and $(d, \vec w_0,\dots, \vec w_{r_j})$ of $Z$ and $X_j$, 
respectively, 
we compute
bases $\vec u_0,\dots,\vec u_i$ and $\vec u_{i+1},\dots,\vec u_{d}$ of the lattices $L$ and $L^\perp$, respectively, in polynomial time by relying on~\Cref{lem:slice_of_LS} and~\Cref{lem:orto-lattice-properties}. 
So, $\widetilde{Z} = \vec v_0 + \Span_{\ZZ}\{\vec v_0,\dots,\vec v_\ell,\vec 0 \times \vec u_{i+1},\dots, \vec 0 \times \vec u_{d}\}$
and to find one of its representations in $\dom(\rho_{\SL})$ it suffices to put the matrix $[\vec v_1 \mid \dots \mid \vec v_\ell \mid \vec 0 \times \vec u_{i+1} \mid \dots \mid \vec 0 \times \vec u_{d}]$
in Hermite normal form using~\Cref{prop:hnf_snf}, to then take as period all its non-zero columns; and $\vec v_0$ as its base point.
The computation of $\widetilde{X}_j$ is analogous, but we must first compute a representation for $X_j \land Z$, which can be done in polynomial time by~\Cref{lem:basic_framework_ptime_sl}.
\end{proof}

As the parameter $\len(\parone)$ fixes 
the number of shifted lattices of an element in $\dom(\un(\rho_{\SL}))$, \Cref{lem:univ_proj_UofSL_relative_high} establishes Items~\ref{step:beforeReducPiUnivToSimpleCasesUxp:Item2}
and~\ref{step:beforeReducPiUnivToSimpleCasesUxp:Item4} 
of~\Cref{step:beforeReducPiUnivToSimpleCasesUxp}.
Then, by appealing to~\Cref{lemma:step-one-two-imply-FO-tract} and~\Cref{theorem:PuttingAllTogether}, 
the Lemmas~\ref{lem:basic_framework_ptime_sl}, \ref{lemma:wPA-un-UXP}, 
\ref{lem:proj_SL_high} and \ref{lem:univ_proj_UofSL_relative_high}
yield the main result of the section.

\begin{thm}
    Fix $k \in \NN$. The $k$ negations satisfiability
    problem for weak Presburger arithmetic is decidable in polynomial time.
\end{thm}

\noindent
By~\Cref{theorem:PuttingAllTogether}, we also conclude that there is a
polynomial-time procedure that given a weak PA formula $\Phi$ only having $k$
negations, returns an element of $\dom(\dnf(\rho_{\SL}))$ representing the set of
solutions~$\sem{\Phi}_{\mathcal{Z}}$.

\section{Conclusion}\label{sec:final-remarks}

We developed a framework to establish polynomial-time decidability of fixed
negation sentences of first-order theories whose signatures enjoy certain
fixed-parameter tractability properties. A key feature of the framework is that it treats
complementation in a general way, and considers universal projection as a
first-class citizen. Note that, a priori, the latter operation might be easier
than the former to decide, as shown for instance in~\cite{ChistikovH17}.

We instantiated our framework to show that the fixed negation satisfiability
problems for weak linear real arithmetic and weak Presburger arithmetic are
decidable in PTIME. This is in sharp contrast with standard Presburger
arithmetic, which is known to be NP-hard even when the Boolean structure and the
number of variables in the formula is fixed~\cite{NguyenP22}. We believe that
our framework also provides a sensible approach to study fixed negation
fragments of FO extensions of, e.g., certain abstract domains. 
An interesting extension of our running example in \Cref{section:fo-framework} 
to further test our framework
is \emph{octagon arithmetic}, where inequalities take the form~$\pm x \pm y \leq
c$, with $c \in \ZZ$~\cite{Mine06}. Over the integers, the full first-order
theory of octagon arithmetic is known to be PSPACE-complete~\cite{BenediktCM23}.
More generally, as the various requirements to instantiate the framework
relate to natural computational problems (deciding inclusion and computing projections), 
we are confident that our framework can also be appied to logical theories
outside the world of arithmetic.

\section*{Acknowledgement}
\begin{minipage}{0.88\linewidth}
  This work is part of a project that has received funding from the European
  Research Council (ERC) under the European Union's Horizon 2020 research and
  innovation programme (GA 852769, ARiAT). Alessio Mansutti is co-funded by
  the European Union (GA 101154447), MICIU/AEI (GA~CEX2024-001471-M and
  PID2022-138072OB-I00) and FEDER, UE. Views and opinions expressed are however
  those of the author(s) only and do not necessarily reflect those of the
  European Union or European Commission. Neither the European Union nor the
  granting authority can be held responsible for them.
\end{minipage}%
\begin{minipage}{0.12\linewidth}
  \flushright
  \includegraphics[scale=0.1]{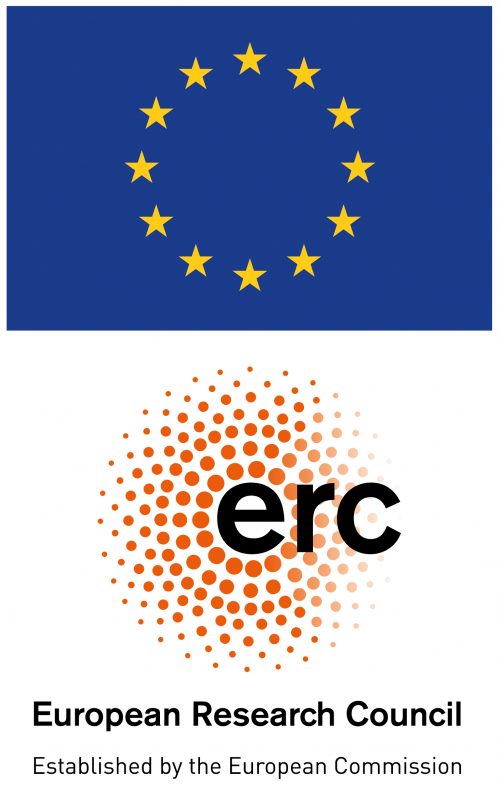}
  \\
  ~
\end{minipage}

\bibliographystyle{alphaurl}
\bibliography{biblio}

% \todo[inline]{BELOW OLD MATERIAL}
% \input{appendix-omitted-material}
% \input{appendix-sec-3.tex}
% \input{appendix-sec-4}
% \input{apx_helpers}
% \input{apx_UoSL}
% \input{impl_UoAS}

%% \makeatletter
%% \let\@twosidetrue\@twosidefalse
%% \let\@mparswitchtrue\@mparswitchfalse
%% \let\@oddfoot\@empty
%% \def\@oddhead{\large\sffamily\bfseries\hfil
%%   \rlap{\hbox to0.5\oddsidemargin{\hss\thepage\quad}}}
%% \makeatother
%% \newgeometry{
%%     paper = a4paper,%
%%     top = 5cm,%
%%     bottom = 1cm,%
%%     inner = 5cm,%
%%     outer = -0.3cm,%
%%     nomarginpar,%
%%     marginratio=1:1,
%%   }
%% \EventShortTitle{}
%% \allowdisplaybreaks

%% \newpage
%% \input{appendix-omitted-material}
%% \input{appendix-sec-3.tex}
%% \input{appendix-sec-4.tex}
%% \input{appendix-sec-5.tex}
%% \input{apx_helpers}
%% \input{apx_UoSL}
%% \input{impl_UoAS}

\end{document}